    \author{%
        Théo Delemazure \\
        CNRS, LAMSADE, Université Paris Dauphine - PSL
        \and
        Chris Dong \\
        Technical University of Munich
        \and
        Dominik Peters \\
        CNRS, LAMSADE, Université Paris Dauphine - PSL
        \and
        Magdaléna Tydrichová \\
        CentraleSupélec, Paris Saclay University
    }
    \lxDefMath{\axis}{◃}[role=ADDOP]
    \newcommand{\cev}[1]{\overline{#1}}
    \newcommand{\wrapcaption}[1]{\caption{\parbox{5cm}{#1}}}
    \renewcommand\tableofcontents{\listoftoc*{toc}} %
    \author[1]{Théo Delemazure}
    \author[2]{Chris Dong}
    \author[1]{\authorcr Dominik Peters}
    \author[3]{Magdaléna Tydrichová} %
    \affil[1]{CNRS, LAMSADE, Universit\'e Paris Dauphine - PSL}
    \affil[2]{Technical University of Munich}
    \affil[3]{CentraleSupélec, Paris Saclay University}
    \newcommand{\axis}{\triangleleft}
    \newcommand{\cev}[1]{\reflectbox{\ensuremath{\vec{\reflectbox{\ensuremath{#1}}}}}}
    \newcommand{\wrapcaption}[1]{\caption{#1}}
    \patchcmd\WF@putfigmaybe{\lower\intextsep}{}{}{\fail}%
        \let\Cref\crtCref
        \let\cref\crtcref
\renewcommand{\epsilon}{\varepsilon}
\renewcommand*{\le}{\leqslant}
\renewcommand*{\leq}{\leqslant}
\renewcommand*{\ge}{\geqslant}
\renewcommand*{\geq}{\geqslant}
\pgfplotsset{compat=1.15,
legend image code/.code={
\draw[mark repeat=2,mark phase=2]
plot coordinates {
(0cm,0cm)
(0.15cm,0cm)        %
(0.3cm,0cm)         %
};%
}}
\newtheoremstyle{sfthm}%
	{\topsep}%
	{\topsep}%
	{\itshape}%
	{}%
	{\sffamily\bfseries}%
	{}%
	{.5em}%
	{}%
\theoremstyle{sfthm}
\newtheorem{lemma}{Lemma}
\newtheorem{example}{Example}
\newtheorem{proposition}{Proposition}
\newtheorem{theorem}{Theorem}
\DeclareMathOperator*{\argmax}{arg\,max}
\DeclareMathOperator*{\argmin}{arg\,min}
\DeclareMathOperator{\con}{con}
\DeclareMathOperator{\cost}{\mathsf{cost}}
    \newcommand{\drule}[2]{\textbf{#1} #2}
    \newcommand{\property}[2]{\textbf{#1} #2}
    \newtcolorbox{rulebox}{
        blanker,
        interior engine=standard,
        left=3.5mm, right=7pt,
        top=6pt, bottom=6pt,
        before skip=8pt,after skip=8pt,
        colback=purple!6!white,
        borderline west={1mm}{0mm}{purple!30!white}}
    \newtcolorbox{propertybox}{
        blanker,
        interior engine=standard,
        left=3.5mm, right=7pt,
        top=6pt, bottom=6pt,
        before skip=8pt,after skip=8pt,
        colback=blue!6!white,
        borderline west={1mm}{0mm}{blue!30!white}}
    \newcommand{\drule}[2]{\begin{rulebox}\textsf{\textbf{#1}}\hspace{1pt} #2\end{rulebox}}
    \newcommand{\property}[2]{\begin{propertybox}\textsf{\textbf{#1}}\hspace{1pt} #2\end{propertybox}}
\newcommand{\VD}{\textup{VD}}
\newcommand{\BC}{\textup{BC}}
\newcommand{\MF}{\textup{MF}}
\newcommand{\MS}{\textup{MS}}
\newcommand{\FT}{\textup{FT}}
\definecolor{nored}{RGB}{163, 109, 109} %
\newcommand{\xmark}{%
	\tikz[scale=0.23,draw=nored] {
		\draw[line width=0.7,line cap=round] (0,0) to [bend left=6] (1,1);
		\draw[line width=0.7,line cap=round] (0.2,0.95) to [bend right=3] (0.8,0.05);
}}
\definecolor{hole}{RGB}{204, 104, 104} %
\newcommand{\cmark}[1][green!50!black]{%
	\tikz[scale=0.23,draw=#1] {
		\draw[line width=0.7,line cap=round] (0.25,0) to [bend left=10] (1,1);
		\draw[line width=0.8,line cap=round] (0,0.35) to [bend right=1] (0.23,0);
}}
\newcommand{\rowheight}{0.5cm}
\newcommand{\colwidth}{0.6cm}
\newcommand{\axisheader}[2]{%
    \foreach \x/\headercolor [count=\xi] in {#2} {%
        \ifthenelse{\equal{\x}{\headercolor}}{
            \def\headercolor{black}
        }{}
        \node [anchor=base,text=\headercolor] at (\xi * \colwidth, -#1 * \rowheight) {$\x$};
    }
}
\newcommand{\costcolumn}[2]{
    \node[anchor=base] at (#1 * \colwidth + 0.5*\colwidth, -0 * \rowheight) {$#2$};
}
\definecolor{intervalbg}{HTML}{86B6F6}
\definecolor{intervalcheckmark}{HTML}{EEF5FF}
\newcommand{\interval}[3]{%
    \draw[line width=11pt, intervalbg, line cap=round] (#2 * \colwidth - 0.3, -#1 * \rowheight) -- (#3 * \colwidth + 0.3, -#1 * \rowheight);
    \foreach \x in {#2,...,#3} {
        \node at (\x * \colwidth + 0.1, -#1 * \rowheight) {\cmark[intervalcheckmark]};
    }
}
\newcommand{\interval}[3]{%
	\draw[line width=11pt, intervalbg, line cap=round] (#2 * \colwidth - 0.3, -#1 * \rowheight) -- (#3 * \colwidth + 0.3, -#1 * \rowheight);
	\foreach \x in {#2,...,#3} {
		\node at (\x * \colwidth, -#1 * \rowheight) {\cmark[intervalcheckmark]};
	}
}
\definecolor{minuscolor}{HTML}{084ca3}
\newcommand{\flipinterval}[3]{
	\draw[line width=11pt, intervalbg, line cap=round] (#2 * \colwidth - 0.3, -#1 * \rowheight) -- (#3 * \colwidth + 0.3, -#1 * \rowheight);
	\foreach \x in {#2,...,#3} {
		\node at (\x * \colwidth, -#1 * \rowheight) {\cmark[minuscolor]};
		\draw[minuscolor, line width=0.65] (\x * \colwidth + 0.8, -#1 * \rowheight - 2) -- +(0.1, 0);
	}
}
\newcommand{\redinterval}[3]{%
	\draw[line width=12pt, minuscolor, line cap=round] (#2 * \colwidth - 0.3, -#1 * \rowheight) -- (#3 * \colwidth + 0.3, -#1 * \rowheight);
    \draw[line width=11pt, intervalbg, line cap=round] (#2 * \colwidth - 0.3, -#1 * \rowheight) -- (#3 * \colwidth + 0.3, -#1 * \rowheight);
    \foreach \x in {#2,...,#3} {
        \node at (\x * \colwidth, -#1 * \rowheight) {\cmark[intervalcheckmark]};
    }
}
\newcommand{\interfering}[2]{%
    \fill [red!70!black!70!white] (#2 * \colwidth, -#1 * \rowheight) circle [radius = 0.08];
}
\newcommand{\swapslot}[2]{%
	\fill [black!30, yshift=-1pt] (#2 * \colwidth, -#1 * \rowheight) circle [radius = 0.05];
}
\newcommand{\ballotcost}[3]{%
    \node at (#2 * \colwidth + 0.5*\colwidth , -#1 * \rowheight) {$#3$};
}
\newcommand{\ft}[3]{
    \node [red!70!black!70!white,font=\footnotesize] at (#2 * \colwidth, -#1 * \rowheight) {#3};
}
	\newcommand{\multiplicity}[2]{%
	    \node at (-0.3, -#1 * \rowheight) {$#2 \times$};
	}
	\newcommand{\multiplicity}[2]{%
		\node [anchor=east] at (0.3, -#1 * \rowheight) {#2 $\times$};
	}
\definecolor{pluscolor}{HTML}{b02727}
\newcommand{\missingapproval}[2]{
	\begin{scope}[xshift=#2 * \colwidth, yshift=-#1 * \rowheight]
		\node[scale=0.9, transform shape] at (0,0) {\cmark[pluscolor]};
		\draw[pluscolor, line width=0.5] (0.02,-0.06) -- +(0.12, 0);
		\draw[pluscolor, line width=0.5] (0.08,-0.12) -- +(0.0, 0.12);
	\end{scope}
}
\definecolor{swaparrowcolor}{HTML}{176B87}
\newcommand{\swaparrow}[3]{%
	\draw[<->,,yshift=-1pt,xshift=0pt,draw=swaparrowcolor] (#2*\colwidth + 4pt, -#1*\rowheight) to (#3*\colwidth -5pt, -#1*\rowheight);
}
\newcommand{\yes}{\cmark}
\newcommand{\no}{\xmark}
\newcommand{\axisweak}{\trianglelefteqslant}
\definecolor{LO}{HTML}{AA0000}
\definecolor{UPR}{HTML}{9ad6af}
\definecolor{SP}{HTML}{e0dcad}
\definecolor{DLF}{HTML}{0082C4}
\definecolor{LR}{HTML}{0066CC}
\definecolor{PS}{HTML}{FF8080}
\definecolor{R}{HTML}{26c4ec}
\definecolor{FN}{HTML}{0D378A}
\definecolor{RN}{HTML}{0D378A}
\definecolor{EM}{HTML}{FFD600}
\definecolor{LFI}{HTML}{cc2443}
\definecolor{NPA}{HTML}{bb0000}
\definecolor{LREM}{HTML}{FFD600}
\definecolor{REC}{HTML}{404040}
\definecolor{EELV}{HTML}{00c000}
\definecolor{PCF}{HTML}{DD0000}
\definecolor{textLO}{HTML}{FFFFFF}
\definecolor{textUPR}{HTML}{000000}
\definecolor{textSP}{HTML}{000000}
\definecolor{textDLF}{HTML}{FFFFFF}
\definecolor{textLR}{HTML}{FFFFFF}
\definecolor{textPS}{HTML}{FFFFFF}
\definecolor{textR}{HTML}{000000}
\definecolor{textFN}{HTML}{FFFFFF}
\definecolor{textRN}{HTML}{FFFFFF}
\definecolor{textEM}{HTML}{000000}
\definecolor{textLREM}{HTML}{000000}
\definecolor{textLFI}{HTML}{FFFFFF}
\definecolor{textNPA}{HTML}{FFFFFF}
\definecolor{textREC}{HTML}{FFFFFF}
\definecolor{textEELV}{HTML}{FFFFFF}
\definecolor{textPCF}{HTML}{FFFFFF}
\newcommand{\partysize}[1]{%
	\ifthenelse{\equal{#1}{LO} \OR \equal{#1}{NPA} \OR \equal{#1}{R}\OR \equal{#1}{UPR}\OR \equal{#1}{SP}}%
	{\scriptsize #1}%
	{\footnotesize #1}%
}
\newcommand{\party}[1]{\tikz[scale=1, transform shape,baseline=-3pt]{
	\node[shape=rectangle,fill=#1,inner sep=1pt,text=text#1,minimum width=15pt,minimum height=9pt] {\partysize{#1}};
}}
\title{Comparing Ways of Obtaining Candidate Orderings from Approval Ballots%
}
\date{\vspace{-1cm}}
\begin{document}

\maketitle

\begin{abstract}
    \iflatexml\else
	\begin{center}
		\textbf{\textsf{Abstract}} \smallskip
	\end{center}\fi
	To understand and summarize approval preferences and other binary evaluation data, it is useful to order the items on an \emph{axis} which explains the data. In a political election using approval voting, this could be an ideological left-right axis such that each voter approves adjacent candidates, an analogue of single-peakedness. 
	In a perfect axis, every approval set would be an interval, 
	which is usually not possible, and so we need to choose an axis that gets closest to this ideal.
	The literature has developed algorithms for optimizing several objective functions (e.g., minimize the number of added approvals needed to get a perfect axis), but provides little help with choosing among different objectives.
	In this paper, we take a social choice approach and compare 5 different axis selection rules axiomatically, by studying the properties they satisfy.
	We establish some impossibility theorems, and characterize (within the class of scoring rules) the rule that chooses the axes that maximize the number of votes that form intervals, using the axioms of ballot monotonicity and resistance to cloning.
	Finally, we study the behavior of the rules on data from French election surveys, on the votes of justices of the US Supreme Court, and on synthetic data.
\end{abstract}

\iflatexml\else
\vspace{20pt}

\hrule

\vspace{5pt}
{
	\setlength\columnsep{35pt} %
	\setcounter{tocdepth}{1}
	\renewcommand\contentsname{\vspace{-20pt}}
	\begin{multicols}{2}
		{\small
			\tableofcontents}
	\end{multicols}
}

\vspace{11pt}
\hrule

\newpage
\fi

\section{Introduction}\label{sec:introduction}
This paper is about analyzing and understanding binary evaluation data.
Such data could come from many sources, such as user reviews featuring a thumbs up / thumbs down evaluation, or datasets of items with binary information about their features.
Another source of such data is \emph{approval voting}, where each evaluator is a \emph{voter} who \emph{approves} the candidates that have been assigned an evaluation of 1.
Since we will use techniques from computational social choice in our analysis, for simplicity we will generally use voting terminology to refer to our setting.
Our aim is to obtain an ordering of the candidates (an \emph{axis}) which is supposed to summarize the data.
Specifically, we interpret an axis to ``perfectly depict'' the data if every voter approves an \emph{interval} of the axis. This is an approval version of single-peaked preferences. For most datasets, such axes will not exist, so we study rules that, given an approval profile, find the axes that best approximate the interval structure and that thereby provide a good (ordinal) one-dimensional embedding of the profile.
Such rules have many applications for understanding and visualizing data, as well as direct use-cases where the axis itself plays a key role:
\begin{itemize}[leftmargin=9pt]
	\item \emph{Ordering political candidates and parties.} In politics, if voters are asked to approve candidates, an axis could correspond to an ideological ordering of the candidates from left-wing to right-wing. For example, in France, election polls are typically presented with candidates ordered by ideology, but the major pollsters use many different axes (see \Cref{sec:experiments:france}), which they apparently construct ad hoc. Our rules will find an axis in a principled way.
	\item \emph{Ordering members of parliament.} Once elected, we can interpret each bill as a ``voter'' who approves those members who supported it. An axis rule would then provide an ordering of members of parliament by ideology.
	\item \emph{Archaeological seriation.} A well-established approach in archaeology for ordering artefacts by their age is to let features that were temporarily ``in fashion'' (e.g., drawing styles) approve artefacts \citep{petrie1899sequences,baxter2003statistics}. In the true ordering by age, each feature is likely to induce an interval. Thus, a good axis rule will tend to produce an ordering of artefacts by their age with few errors.
	\item \emph{Scheduling}. A conference organizer could ask attendees about which talks they wish to see and then use our rules to arrange the talks so attendees can join for consecutive talks. A different way of applying our rules (without the need to ask for attendee preferences) is for key terms to ``approve'' the papers that mention the term, leading to a thematically coherent ordering of the talks.
\end{itemize}

\noindent
Algorithmically, the task of finding an axis optimizing a particular objective function is well studied.  To check whether a perfect axis exists (i.e., one where every voter approves an interval), one needs to check whether the 0/1 approval matrix has the \emph{consecutive ones property} (C1P), which can be done in linear time \citep{booth_lueker_C1P}. 
However, in all the applications discussed above, the 0/1 matrices are likely to only approximately satisfy C1P.
The problem of finding an axis that makes as many votes as possible into an interval is NP-complete and already appears in the book of \citet[Problem SR14]{garey1979computers} together with several similar problems about recognizing almost-C1P matrices like minimizing the number of approvals to add to satisfy C1P (Problem SR16). However, this complexity theoretic work does not tell us which of these objective functions ``work best''.

We provide a framework for answering the question of which is the ``best'' objective function via the axiomatic method used in social choice. We interpret different objective functions as \emph{rules} that take an approval profile as input and decide on an axis. We will compare these rules by identifying properties that they satisfy or fail. Given a context where some properties seem particularly desirable, this will help with selecting a good objective function.

The protagonists of our paper are the following five rules, with more precise definitions provided later:

\begin{itemize}[leftmargin=9pt, itemsep=0pt]
    \item \emph{Voter Deletion.} Minimize the number of votes that are not intervals of the axis.
    \item \emph{Minimum Flips.} Minimize the number of approvals that need to be added or removed from ballots to make all votes intervals of the axis.
    \item \emph{Ballot Completion.} Minimize the number of approvals that need to be added to ballots to make all votes intervals of the axis.
    \item \emph{Minimum Swaps.} Minimize the average number of swaps within the axis that are needed to turn votes into intervals of the axis.
    \item \emph{Forbidden Triples.} Minimize the total size of holes in a vote, weighted by how many approved candidates they separate. 
\end{itemize}

On a high level, we find that Voter Deletion and Ballot Completion satisfy a desirable monotonicity property (saying that the chosen axis should not change if some voters change their ballots to better align with that axis), while the last two rules use more information contained in the profile. We do not identify any positive features of Minimum Flips.

Besides introducing the rules and the axioms, we also prove an impossibility result saying that no scoring rule (which are rules that optimize a voter-additive objective function) can simultaneously satisfy two versions of the ``clones'' principle that a rule should behave reasonably in the presence of identical candidates: \emph{clone proximity} which says that such candidates must be placed next to each other on the axis and \emph{clone resistance} which says that deleting some of the identical candidates should not affect the relative placement of other candidates. We also establish a characterization result that the Voter Deletion rule is the unique scoring rule that satisfies clone resistance as well as ballot monotonicity.

We conclude the paper by applying our rules to different datasets, including French election surveys (ordering candidates from left to right), votes of the justices of the US Supreme Court (ordering justices from conservative to progressive), and synthetic datasets. The simulations show how our rules differ, which perform best, and how they compare to rules that are based on taking rankings rather than approvals as input.

\section{Related Work}  \label{sec:related-work}

The work of \citet{escoffier2021nearlysp}, extended in the thesis of \citet[Sec.~4.4]{tydrichova2023structural}, is closest to ours, as it compares different methods for finding axes that make a profile of \emph{rankings} of the candidates \emph{nearly single-peaked}. Single-peaked ranking preferences \citep{black1948rationale} are frequently studied in social choice because they can avoid impossibility theorems and computational hardness  \citep{elkind2017structured,elkind2022preference}.
\citet{escoffier2021nearlysp} focus on computational complexity, but also consider axiomatic properties satisfied by different objective functions. However, they do not provide axiomatic characterization or impossibility results, and our experiments suggest that the approval approach may lead to better axes than the ranking approach. Nearly single-peaked preferences are well-studied algorithmically, both in terms of their recognition \citep{bredereck2016there,erdelyi2017nearlysp,elkind2014detecting} and their impact on the winner determination problem of computationally hard voting rules \citep{misra2017complexity,chen2023efficient}.

For approval ballots, structured preferences are studied by \citet{elkind2015structure}, who say that a profile satisfies \emph{Candidate Interval} (CI) if there is a perfect axis for it [see also \citealp{faliszewski2011shield,terzopoulou2021restricted}].
\citet{dietrich2015} discuss a similar concept in judgement aggregation.
The study of the algorithmic problem of recognizing profiles that are \emph{nearly} C1P goes back to \citet{booth1975pq} and has received thorough attention since \citep[e.g.,][]{hajiaghayi2002note,tan2007consecutive,chauve2009gapped,dom2010approximation,narayanaswamy2015obtaining}. Our study uses axioms and experiments instead of computational complexity, and focusses on selecting a good axis rather than measuring nearly single-peakedness.

\section{Preliminaries}  \label{sec:preliminaries}

When $i \le j$ are integers, we write $[i,j] := \{i, i+1, \dots, j\}$.

Let $C$ be a set of $m$ \emph{candidates}, and $V$ a set of $n$ \emph{voters}. An \emph{approval ballot} is a non-empty subset of candidates $A \subseteq C$. 
An \emph{approval profile} $P$ is a collection of $n$ approval ballots $P = (A_i)_{i \in V}$. We denote by $\mathcal P$ the set of all approval profiles. For two profiles $P_1$ and $P_2$, we write $P_1 + P_2$ for the profile obtained by combining the ballots in the two profiles.

An \emph{axis} $\axis$ is a strict linear order of the candidates, so that $a \axis b$ means that candidate $a$ is strictly on the left of $b$ on the axis. We write $a \axisweak b$ if $a \axis b$ or $a = b$.
For brevity, we will sometimes omit the $\axis$ and write $abc$ for the axis $a \axis b \axis c$. Let $\mathcal A$ be the set of all axes over $C$.
The direction of an axis is irrelevant, so we will informally treat the axes $abc$ and $cba$ as being the same axis.

An approval ballot $A_i$ is an \emph{interval} of an axis $\axis$ if for all pairs of candidates $a, b \in A_i$ and every $c$ such that $a \axis c \axis b$, we have $c \in A_i$. If instead $c\notin A_i$, we say that $c$ is an \emph{interfering candidate}. A profile $P$ is \emph{linear} if there exists an axis $\axis$ such that all approval ballots in $P$ are intervals of $\axis$. We also say that this axis $\axis$ is \emph{consistent} with the profile $P$. We write $\con(P) \subseteq \mathcal A$ for the set of all axes consistent with $P$. 

For an approval ballot $A$ and an axis ${\axis} = c_1c_2\dots c_m$ with candidates relabeled by their axis position, we denote by $x_{A,\axis} = (x_{A,\axis}^1, \dots, x_{A,\axis}^m)$ the \emph{approval vector} where $x_{A,\axis}^i = 1$ if $c_i \in A$ and $0$ otherwise. 
For instance, for the axis ${\axis} = abcd$ and ballot $A = \{b,c\}$, we get the vector $(0,1,1,0)$, while $A' = \{a,d\}$ gives the vector $(1,0,0,1)$ (which has two interfering candidates).
The \emph{approval matrix} of a profile $P =(A_i)_i$ has $x_{A_i,\axis}$ as its $i$th row. Thus, its $(i,j)$-entry is equal to $1$ if $c_j\in A_i$ and equal to $0$ if $c_j \not\in A_i$.
Note that a profile is linear if and only if its approval matrix (derived from an arbitrary axis $\axis$) satisfies the \emph{consecutive one property} (or C1P, see the survey by \cite{consecutiveones}), i.e., its columns can be reordered such that in each row, the ``1''s form an interval.

An \emph{axis rule} $f$ is a function that takes as input an approval profile $P$ and returns a non-empty set of axes $f(P) \subseteq \mathcal A$, such that for each $\axis$ in $f(P)$ its \emph{reverse} axis $\cev{\axis}$ is also in $f(P)$, encoding the idea that the direction of the axis does not matter.

In this paper, we will focus on the family of \emph{scoring rules}, which we define in analogy to other social choice settings \citep{Myer95b,Piva13a}. Let $\cost : 2^{C} \times \mathcal A \rightarrow \mathbb R_{\ge 0}$ be a \emph{cost function}, indicating the cost $\cost(A_i,\axis)$ that a ballot $A_i\in P$ incurs when the axis $\axis$ is chosen. By summing up these costs, we get the cost $\cost(P, \axis) = \sum_{A_i \in P} \cost(A_i, \axis)$ of an axis $\axis$ for the profile $P$. 
An axis rule $f$ is a \emph{scoring rule} if there is a cost function $\cost_f$ such that
$f(P) = \argmin_{{\axis} \in \mathcal A} \cost_f(P,\axis)$ for all profiles $P$.

A focus on the class of scoring rules can be justified as an analogue to scoring rules in voting theory, in that every scoring rule satisfies the \emph{reinforcement} axiom \citep{young1975social} which says that if $f$ chooses the same axis $\axis$ in two disjoint profiles $P_1$ and $P_2$, so that $f(P_1) \cap f(P_2) \neq \emptyset$, then the axes it chooses in the combined profile $P_1 + P_2$ are exactly the common axes, i.e., $f(P_1 + P_2) = f(P_1) \cap f(P_2)$. However, providing an axiomatic characterization of this class using reinforcement appears to be difficult since the neutrality axiom turns out to be quite weak in our setting.
Another motivation for scoring rules is their natural interpretation as \emph{maximum likelihood estimators} when there is a ground truth axis, as observed by \citet{conitzer2009preference} in the voting setting.
To see the connection, let $\axis$ be the ground truth axis, and suppose voters obtain their approval ballots $A_i$ i.i.d.\ from a probability distribution $\mathbb P(A_i \mid \axis)$ (where intuitively ballots are more likely the closer they are to forming an interval of $\axis$). Then, the likelihood of a profile $P$ is $\mathbb P( P \mid \axis) = \prod_{i} \mathbb P(A_i \mid \axis)$. To find the axis inducing maximum likelihood, we solve 
$\text{MLE}(P) := \argmax_{\axis} \mathbb P( P \mid \axis) = \argmin_{\axis} - \sum_{i} \log(\mathbb P(A_i \mid \axis))$, 
which is a scoring rule with costs $\cost_f(A_i, \axis) = - \log(\mathbb P(A_i \mid \axis))$.

\section{Axis Rules}  \label{sec:rules}

In this section, we introduce five scoring rules. Many are inspired by objective functions proposed for near single-peakedness \citep{faliszewski2014voterdeletion,escoffier2021nearlysp} in the context of ranking profiles.

The first and simplest rule is called \emph{Voter Deletion (VD)}:

\drule{Voter Deletion}{
   This rule returns the axes that minimize the number of ballots to delete from the profile $P$ in order to become consistent with it. This rule is a scoring rule based on the cost function $\cost_{\VD}$ such that $\cost_{\VD}(A,\axis) = 0$ if $A$ is an interval of $\axis$, and $1$ otherwise.
}

\begin{wrapfigure}{r}{0.3\linewidth}
    \centering
	\begin{tikzpicture}
        \axisheader{0}{a,b,c,d,e}
        
        \interval{1}{2}{4}

        \redinterval{2}{1}{1}
        \redinterval{2}{5}{5}

        \redinterval{3}{1}{2}
        \redinterval{3}{4}{5}

        \redinterval{4}{1}{2}
        \redinterval{4}{5}{5}

        \redinterval{5}{1}{1}
        \redinterval{5}{3}{3}
        \redinterval{5}{5}{5}
        \costcolumn{6}{\cost_{\VD}}
        \ballotcost{1}{6}{0}
        \ballotcost{2}{6}{1}
        \ballotcost{3}{6}{1}
        \ballotcost{4}{6}{1}
        \ballotcost{5}{6}{1}
	\end{tikzpicture}

    \wrapcaption{Costs of some ballots under the VD rule.}
    \label{tab:ex_vd}
\end{wrapfigure}

The idea behind this rule is that perhaps some ``maverick'' voters are ``irrational'', and should hence be disregarded. The aim is to delete as few maverick voters as possible. \Cref{tab:ex_vd} shows the costs of some ballots under the VD rule, and we clearly observe that the rule gives the same cost to all non-interval ballots.

An intuitive shortcoming of VD is that it does not measure the \emph{degree of incompatibility} of a given vote with an axis.
For example, VD does not distinguish ballots that miss just one candidate to be an interval, and an approval ballot in which only the two extreme candidates of the axis are approved. For this reason, more gradual rules might do better. 

The first rule in this direction is \emph{Minimum Flips} (MF) which changes ballots by removing and adding candidates.

\drule{Minimum Flips}{
	\setlength{\abovedisplayskip}{7pt}
	This rule returns the axes that minimize the total number of candidates that need to be removed from and added to approval ballots in order to make the profile linear. It is the scoring rule based on:
	\begin{align*}
		\cost_{\MF}(A,\axis) = \min\limits_{x,y \in A \, : \, x \axisweak y} &\:\,|\{ z \in A : z \axis x \mathrm{ \ or  \ } y \axis z \}| \\[-4pt]
		+&\:\, |\{ z \notin A : x \axis z \axis y\}|.
	\end{align*}
}
\begin{wrapfigure}{r}{0.3\linewidth}
    \centering
	\begin{tikzpicture}
	     \axisheader{0}{a,b,c,d,e}
	     
	     \interval{1}{2}{4}
	
	     \interval{2}{1}{1}
	     \flipinterval{2}{5}{5}
	
	     \interval{3}{1}{2}
	     \missingapproval{3}{3}
	     \missingapproval{3}{4}
	     \interval{3}{4}{5}
	
	     \interval{4}{1}{2}
	     \flipinterval{4}{5}{5}
	
	     \flipinterval{5}{1}{1}
	     \interval{5}{3}{3}
	     \missingapproval{5}{4}
	     \interval{5}{5}{5}
	
	     \costcolumn{6}{\cost_{\MF}}
	     \ballotcost{1}{6}{0}
	     \ballotcost{2}{6}{1}
	     \ballotcost{3}{6}{1}
	     \ballotcost{4}{6}{1}
	     \ballotcost{5}{6}{2}
	\end{tikzpicture}

    \wrapcaption{Costs of some ballots under the MF rule. Candidates that need to be added are represented by red ticks, candidates that need to be removed by blue ticks.}
    \label{tab:ex_mf}
\end{wrapfigure}

The definition of $\cost_{\MF}$ optimizes the choice of the left- and right-most candidates $x$ and $y$ in the ballot after removing and adding candidates, and then counts the number of candidates that were thus removed (first term of the sum) and added (second term).
We can equivalently view MF as finding for each vote $A_i$ the interval ballot closest to $A_i$ in Hamming distance, with that distance being the cost of $\axis$. In another equivalent view, the rule finds the linear profile of minimum total Hamming distance to the input profile, and returns its axes. \Cref{tab:ex_mf} shows the costs of some ballots under the MF rule. Observe that we can obtain an interval by only removing candidates (as in the second ballot), by only adding candidates (as in the third ballot), or by both removing and adding candidates (as in the last ballot).

In many applications, adding approvals seems better motivated than removing them. For example, a voter $i$ might not approve a candidate $c$ because $i$ does not know who $c$ is; fixing this error corresponds to adding a candidate.
On the other hand, choosing to approve some candidate by accident seems less likely.
The \emph{Ballot Completion} (BC) rule implements this thought.

\drule{Ballot Completion}{
	\setlength{\abovedisplayskip}{7pt}
    This rule returns the axes that minimize the number of candidates to add to approval ballots to make the profile consistent with it. It is the scoring rule based on:
    \[
    \cost_{\BC}(A,\axis) = |\{b \not\in A : a \axis b \axis c \text{ for some } a,c \in A\}|.
    \]
}

\begin{wrapfigure}{r}{0.3\linewidth}
    \centering
	\begin{tikzpicture}
        \axisheader{0}{a,b,c,d,e}
        
        \interval{1}{2}{4}

        \interval{2}{1}{1}
        \missingapproval{2}{2}
        \missingapproval{2}{3}
        \missingapproval{2}{4}
        \interval{2}{5}{5}

        \interval{3}{1}{2}
        \missingapproval{3}{3}
        \missingapproval{3}{4}
        \interval{3}{4}{5}

        \interval{4}{1}{2}
        \missingapproval{4}{3}
        \missingapproval{4}{4}
        \interval{4}{5}{5}

        \interval{5}{1}{1}
        \missingapproval{5}{2}
        \interval{5}{3}{3}
        \missingapproval{5}{4}
        \interval{5}{5}{5}
        \costcolumn{6}{\cost_{\BC}}
        \ballotcost{1}{6}{0}
        \ballotcost{2}{6}{3}
        \ballotcost{3}{6}{1}
        \ballotcost{4}{6}{2}
        \ballotcost{5}{6}{2}
	\end{tikzpicture}

	\wrapcaption{Costs of some ballots under the BC rule. Candidates that need to be added are represented by red ticks.}
    \label{tab:ex_bc}
\end{wrapfigure}

Thus, given a ballot $A$ and an axis $\axis$, this rule counts all interfering candidates with respect to $A$ and $\axis$. 
To see the difference between MF and BC, observe that $\cost_{\BC}(\{a,d\},abcd) = 2$ as we need to add $b$ and $c$ to obtain an interval, while $\cost_{\MF}(\{a,d\},abcd) = 1$ as we can just remove $a$. \Cref{tab:ex_bc} shows the costs of some ballots under the BC rule.

In the approval context, BC is the only rule we know of that has already been used in the literature to find an underlying political axis of voters, on the data of experiments conducted during the 2012 and 2017 French presidential elections \citep{france2012,france2017}. The axes found by BC were close to the orderings discussed in the media. 

The \emph{Minimum Swaps} (MS) rule modifies the \emph{axis} rather than the ballots. Given an approval ballot $A$, the MS rule asks how many candidate swaps we need to perform in an axis $\axis$ until $A$ becomes an interval of it: the cost $\cost_{\MS}(A, \axis)$ is the minimum Kendall-tau distance between $\axis$ and an axis $\axis'$ (the number of swaps of adjacent candidates needed to go from $\axis$ to $\axis'$) such that $A$ is an interval of $\axis'$. For instance, $\cost_{\MS}(\{ a,d\}, abcd) = 2 $ because we need to have $a$ next to $d$ on any axis consistent with $\{a, d\}$, and we need at least two swaps to obtain this.

\drule{Minimum Swaps}{
	\setlength{\abovedisplayskip}{7pt}
    This scoring rule uses the cost function
    \[ 
    \textstyle
    \cost_{\MS}(A,\axis) = \sum_{x \notin A} \min(|\{y \in A : y \axis x\}|, |\{y \in A : x \axis y \}|) .
    \]
}
\begin{wrapfigure}{r}{0.3\linewidth}
    \centering
	\begin{tikzpicture}
        \axisheader{0}{a,b,c,d,e}
        
        \interval{1}{2}{4}

        \interval{2}{1}{1}
        \interval{2}{5}{5}
        \swaparrow{2}{1}{2}
        \swapslot{2}{2}
        \swaparrow{2}{2}{3}
        \swapslot{2}{3}
        \swaparrow{2}{3}{4}
        \swapslot{2}{4}

        \interval{3}{1}{2}
        \interval{3}{4}{5}
        \swapslot{3}{3}
        \swaparrow{3}{3}{4}
        \swaparrow{3}{4}{5}

        \interval{4}{1}{2}
        \interval{4}{5}{5}
        \swaparrow{4}{4}{5}
        \swapslot{4}{3}
        \swaparrow{4}{3}{4}
        \swapslot{4}{4}

        \interval{5}{1}{1}
        \interval{5}{3}{3}
        \interval{5}{5}{5}
        \swaparrow{5}{1}{2}
        \swapslot{5}{2}
        \swapslot{5}{4}
        \swaparrow{5}{4}{5}

	    \costcolumn{6}{\cost_{\MS}}
	    \ballotcost{1}{6}{0}
	    \ballotcost{2}{6}{3}
	    \ballotcost{3}{6}{2}
	    \ballotcost{4}{6}{2}
	    \ballotcost{5}{6}{2}
	\end{tikzpicture}

	\wrapcaption{Costs of some ballots under the MS rule. Arrows indicate the swaps needed to make the vote an interval.}
    \label{tab:ex_MS}
\end{wrapfigure}
\noindent
To see why this formula implements our swapping description of $\cost_{\MS}(A, \axis)$, note that to modify the axis $\axis$ such that $A$ becomes an interval of it, we need to ``push outside'' all $x \notin A$ such that there exist $y,z \in A$ with $y \axis x \axis z$. We can either push $x$ to the left side or to the right side, and thus we must swap $x$ with at least all candidates $y \in A$ to its right or to its left. We prove this more formally in \Cref{sec:appMS}. \Cref{tab:ex_MS} shows the costs of some ballots under the MS rule. Note that the order in which the swaps are performed matters. For instance, we need to swap the same pairs for the third and fourth ballot ($\{c,d\}$ and $\{d,e\}$), but we start by swapping $c$ and $d$ in the third ballot and $d$ and $e$ in the fourth ballot.

Our last rule is \emph{Forbidden Triples} (FT), inspired by a proposal for rankings by \citet{escoffier2021nearlysp}. It is defined by counting the number of violations of the interval condition, as formally defined in \Cref{sec:preliminaries}.

\drule{Forbidden Triples}{
	\setlength{\abovedisplayskip}{7pt}
This scoring rule uses the cost function
\begin{align*}
    \cost_{\FT}(A, \axis) &= |\{(x,y,z) : x,z \in A, y \notin A, x \axis y \axis z \}|. 
\end{align*}
}

\begin{wrapfigure}{r}{0.3\linewidth}
    \centering
	\begin{tikzpicture}
        \axisheader{0}{a,b,c,d,e}
        
        \interval{1}{2}{4}

        \interval{2}{1}{1}
        \ft{2}{2}{1}
        \ft{2}{3}{1}
        \ft{2}{4}{1}
        \interval{2}{5}{5}

        \interval{3}{1}{2}
        \ft{3}{3}{4}
        \interval{3}{4}{5}

        \interval{4}{1}{2}
        \ft{4}{3}{2}
        \ft{4}{4}{2}
        \interval{4}{5}{5}

        \interval{5}{1}{1}
        \ft{5}{2}{2}
        \interval{5}{3}{3}
        \ft{5}{4}{2}
        \interval{5}{5}{5}

        \costcolumn{6}{\cost_{\FT}}
        \ballotcost{1}{6}{0}
        \ballotcost{2}{6}{3}
        \ballotcost{3}{6}{4}
        \ballotcost{4}{6}{4}
        \ballotcost{5}{6}{4}
	\end{tikzpicture}

	\wrapcaption{Costs of some ballots under the FT rule. The number of forbidden triples involving each interfering candidate is shown in red.}
    \label{tab:ex_ft}
\end{wrapfigure}

Note that there is one forbidden triple for each combination of an interfering candidate and a pair of candidates lying on its left and its right, respectively. Thus, $\cost_{\FT}(A,\axis) = \textstyle \sum_{x \notin A} |\{y \in A : y \axis x\}| \times |\{y \in A : x \axis y \}|$. 
For instance, we have $\cost_{\FT}(\{a,b,d,e\},abcde) = 2 \times 2= 4$ while $\cost_{\FT}(\{a,b,c,e\},abcde) = 3 \times 1 = 3$.
Intuitively, this rule looks at the holes in a vote, with larger holes  separating many approved candidates counting more. \Cref{tab:ex_ft} shows the costs of some ballots under the FT rule.

The cost functions of our five scoring rules can be related via a chain of inequalities, suggesting that they form a natural collection of rules to study.
\begin{restatable}{proposition}{hierarchy}
     For all ballots $A$ and axes $\axis$, we have $\cost_{\VD}(A,\axis) \le \cost_{\MF}(A,\axis) \le \cost_{\BC}(A,\axis) \le \cost_{\MS}(A,\axis) \le \cost_{\FT}(A,\axis)$.
\end{restatable}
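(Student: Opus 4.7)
The plan is to establish the four inequalities separately, each as a short argument; none of them is genuinely hard, and the main task is just to identify, for each rule, the right ``witness'' comparison with the next rule in the chain.

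For $\cost_{\VD}(A,\axis) \le \cost_{\MF}(A,\axis)$, I would observe that $\cost_{\VD}$ only takes values $0$ or $1$, so it suffices to show that if $A$ is not an interval of $\axis$ then $\cost_{\MF}(A,\axis) \ge 1$; this holds because any $(x,y)$ in the definition of $\cost_{\MF}$ with both sums equal to $0$ would exhibit $A$ as the interval between $x$ and $y$.

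For $\cost_{\MF}(A,\axis) \le \cost_{\BC}(A,\axis)$, I would plug the specific choice $x = \min_\axis A$ and $y = \max_\axis A$ into the minimum defining $\cost_{\MF}$: the first term becomes $0$ (no $z \in A$ lies outside $[x,y]$) and the second term becomes exactly $|\{z \notin A : x \axis z \axis y\}| = \cost_{\BC}(A,\axis)$.

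For $\cost_{\BC}(A,\axis) \le \cost_{\MS}(A,\axis)$, I would compare the two sums term by term over $x \notin A$. A candidate $x \notin A$ contributes $1$ to $\cost_{\BC}$ iff it is interfering, i.e.\ both $\{y \in A : y \axis x\}$ and $\{y \in A : x \axis y\}$ are nonempty, in which case its contribution to $\cost_{\MS}$ is $\min(\cdot,\cdot) \ge 1$. A non-interfering $x$ contributes $0$ to both sides, since one of the two sets is empty, making the $\min$ vanish.

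For $\cost_{\MS}(A,\axis) \le \cost_{\FT}(A,\axis)$, I would use the equivalent product formula for $\cost_{\FT}$ noted in the text, and apply the elementary inequality $\min(a,b) \le a \cdot b$ for non-negative integers (immediate when one is zero, and when both are $\ge 1$ we have $\min(a,b) \le a \le a b$), again term by term over $x \notin A$. The ``main obstacle'' is really just bookkeeping: making sure the equivalence between the textual description of $\cost_{\MS}$ and its formula is in hand (which is already promised in \Cref{sec:appMS}), so that the term-by-term comparisons against $\cost_{\BC}$ and $\cost_{\FT}$ are legitimate.
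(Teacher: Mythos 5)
Your proposal is correct and follows essentially the same route as the paper's proof: the same witness choice of the leftmost and rightmost approved candidates for the MF--BC step, and the same term-by-term comparisons over non-approved candidates (using $\min(a,b)\ge 1$ for interfering candidates and $\min(a,b)\le ab$) for the BC--MS--FT steps. Your treatment is in fact slightly more explicit than the paper's in a couple of places, but there is no substantive difference.
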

\begin{proof}
    To see $ \cost_{\VD}(A,\axis) \le \cost_{\MF}(A,\axis)$, note that if $A$ is not an interval of $\axis$ then $\cost_{\VD}(A,\axis) = 1$ and at least one candidate must be flipped to make $A$ an interval of $\axis$, so $\cost_{\MF}(A,\axis) \ge 1$. If $A$ is an interval then $\cost_{\MF}(A,\axis) = \cost_{\VD}(A,\axis) = 0$.
    
    We have $\cost_{\MF}(A,\axis) \le \cost_{\BC}(A,\axis)$ because in MF we can add and remove approvals, but in BC we can only add approvals. We can also see this from the formal definitions: $\cost_{\MF}(A,\axis)$ is a minimum, and if $x$ and $y$ are the left- and right-most approved candidates in $A$, we obtain the value of $\cost_{\BC}(A,\axis)$, which must thus be at least as high as the minimum over $x$ and $y$.
    
    Finally, observe that for any interfering candidate $x$ on $A$,
    $ \min(|\{y \in A : y \axis x\}|, |\{y \in A : x \axis y \}|) \ge 1$. Moreover, as these are all natural numbers,  $\min(|\{y \in A : \mbox{\ensuremath{y \axis x}}\}|, |\{y \in A : x \axis y \}|) \le |\{y \in A : y \axis x\}| \times |\{y \in A : x \axis y \}|   $. Thus, $\cost_{\BC}(A,\axis) \le \cost_{\MS}(A,\axis)  \le \cost_{\FT}(A,\axis)$ by the definitions of these rules.
\end{proof}

We say that two axis rules $f_1$ and $f_2$ are \emph{equivalent} if for all profiles $P$ we have $f_1(P) = f_2(P)$. Note that if $n \le 2$ or $m \le 2$, every profile is linear. Moreover, if there are $m = 3$ candidates, all the rules defined in this section are equivalent (as there is only one non-interval approval vector, so the only possible costs are $0$ and $1$). If there are $m=4$ candidates, VD and MF are equivalent and BC and MS are equivalent.  
This is because the respective cost functions coincide for $m\leq 4$, which does not remain true for $m \ge 5$. 
Indeed, for $m \ge 5$, the rules are pairwise non-equivalent.
\Cref{ex1} shows a profile with $m=4$ for which VD, BC, and FT all select different axes. We give another profile in \Cref{sec:app_rules_not_eq} with $m=7$ for which no two rules select the same axes.

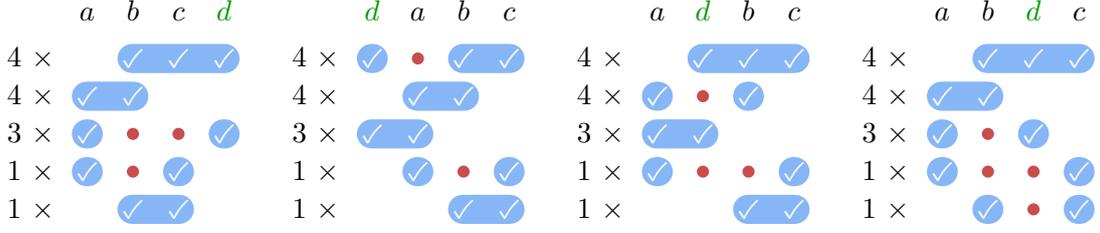
\begin{figure}[!t]
    \centering
\begin{tikzpicture}
    \begin{scope}[xshift=0cm]
        \axisheader{0}{a,b,c,d/green!60!black}
        
        \multiplicity{1}{4}
        \interval{1}{2}{4}

        \multiplicity{2}{4}
        \interval{2}{1}{2}

        \multiplicity{3}{3}
        \interval{3}{1}{1}
        \interfering{3}{2}
        \interfering{3}{3}
        \interval{3}{4}{4}

        \multiplicity{4}{1}
        \interval{4}{1}{1}
        \interfering{4}{2}
        \interval{4}{3}{3}
        
        \multiplicity{5}{1}
        \interval{5}{2}{3}
    \end{scope}
    
    \begin{scope}[xshift=3.75cm]
        \axisheader{0}{d/green!60!black,a,b,c}
        
        \multiplicity{1}{4}
        \interval{1}{1}{1}
        \interfering{1}{2}
        \interval{1}{3}{4}

        \multiplicity{2}{4}
        \interval{2}{2}{3}

        \multiplicity{3}{3}
        \interval{3}{1}{2}

        \multiplicity{4}{1}
        \interval{4}{2}{2}
        \interfering{4}{3}
        \interval{4}{4}{4}

        \multiplicity{5}{1}
        \interval{5}{3}{4}
    \end{scope}

    \begin{scope}[xshift=7.5cm]
        \axisheader{0}{a,d/green!60!black,b,c}
        
        \multiplicity{1}{4}
        \interval{1}{2}{4}

        \multiplicity{2}{4}
        \interval{2}{1}{1}
        \interfering{2}{2}
        \interval{2}{3}{3}

        \multiplicity{3}{3}
        \interval{3}{1}{2}

        \multiplicity{4}{1}
        \interval{4}{1}{1}
        \interfering{4}{2}
        \interfering{4}{3}
        \interval{4}{4}{4}

        \multiplicity{5}{1}
        \interval{5}{3}{4}
    \end{scope}

    \begin{scope}[xshift=11.25cm]
        \axisheader{0}{a,b,d/green!60!black,c}
        
        \multiplicity{1}{4}
        \interval{1}{2}{4}

        \multiplicity{2}{4}
        \interval{2}{1}{2}

        \multiplicity{3}{3}
        \interval{3}{1}{1}
        \interfering{3}{2}
        \interval{3}{3}{3}

        \multiplicity{4}{1}
        \interval{4}{1}{1}
        \interfering{4}{2}
        \interfering{4}{3}
        \interval{4}{4}{4}

        \multiplicity{5}{1}
        \interval{5}{2}{2}
        \interfering{5}{3}
        \interval{5}{4}{4}
    \end{scope}
    
\end{tikzpicture}

    \caption{Profile of \Cref{ex1} on 4 different axes. Red circles indicate interfering candidates.}
    \label{tab:ex2}
\end{figure}

\begin{example} \label{ex1}
    Consider the profile $P = (4 \times \{b,c,d\},4 \times \{a,b\}, 3 \times \{a,d\}, 1 \times \{a,c\}, 1 \times \{b,c\})$. On this profile, all rules agree that $a \axis b \axis c$, but they disagree on the position of $d$. {Indeed, $\axis_1 = abc\underline{d}$ is optimal for VD and MF, $\axis_2 = \underline{d}abc$ for BC and MS, and $\axis_3= a\underline{d}bc$ and $\axis_4 = ab\underline{d}c$ for FT. \Cref{tab:ex2} shows the profile aligned according to the four possible axes. One can easily see that among these axes (1) the axis $\axis_1$ on the left minimizes the VD cost with only 4 non-interval ballots, (2) the axis $\axis_2$ in the middle minimizes the BC cost with 5 red circles and (3) the axes $\axis_3$ and $\axis_4$ on the right minimizes the FT cost with 6 forbidden triplets.}
\end{example}

As we already mentioned, problems about recognizing matrices that are almost C1P have long been known to be NP-complete. Hardness of VD and BC is explicitly known (see \citet{booth1975pq}), and the reductions only use approval sets of size 2. The results for other rules directly follow from the observation that they are equivalent to either VD or BC when $\max_i |A_i| = 2$ (See \Cref{sec:appComplexity} for a detailed proof.) 

\begin{restatable}{theorem}{complexity}
    The VD, MF, BC, MS, and FT rules are NP-complete to compute, even for profiles in which every ballot approves at most 2 candidates.
\end{restatable}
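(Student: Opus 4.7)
The plan is to deduce hardness of all five rules from the known NP-completeness of the Voter Deletion and Ballot Completion problems in the restricted regime where every ballot has size at most two, as established by \citet{booth1975pq}. Membership in NP is immediate in each case: given an axis as certificate, the cost $\cost_f(P,\axis)$ can be computed in polynomial time directly from the cost formulas in \Cref{sec:rules}, so comparing the resulting value to a threshold $k$ can be done efficiently.

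For hardness, the key step is to verify that on profiles with $\max_i |A_i| \le 2$ the five scoring rules collapse into essentially two. First I would observe that ballots of size $0$ or $1$ contribute zero to every cost function, so only size-$2$ ballots matter. For a ballot $A = \{a,b\}$ and an axis $\axis$, I would check:
\begin{itemize}
    \item $\cost_{\VD}(A,\axis) = \cost_{\MF}(A,\axis)$, because if $a,b$ are adjacent on $\axis$ both costs are $0$, and otherwise $\cost_{\VD}=1$ while $\cost_{\MF}=1$ as well (achieved by keeping only $a$ or only $b$, which is cheaper than adding all candidates between them).
    \item $\cost_{\BC}(A,\axis) = \cost_{\MS}(A,\axis) = \cost_{\FT}(A,\axis)$, all three being equal to the number of candidates strictly between $a$ and $b$ on $\axis$. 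Indeed, each interfering $x$ contributes $1$ to $\cost_{\BC}$, contributes $\min(1,1)=1$ to $\cost_{\MS}$, and contributes $1\cdot 1 = 1$ to $\cost_{\FT}$.
\end{itemize}

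Given these equalities, the Booth--Lueker reductions that produce size-$2$ instances and establish NP-hardness of VD (resp.\ BC) directly transfer to MF (resp.\ MS and FT) without modification, since the same axis minimizes the same integer-valued cost in the reduced instances. Combined with the NP-membership argument above, this yields NP-completeness of the decision version of all five rules on the restricted domain, and hence on the unrestricted domain as well.

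The only slightly delicate point, and the main thing to double-check, is the equivalence $\cost_{\VD}=\cost_{\MF}$ for size-$2$ ballots: one must confirm that removing an element of $A$ is always at least as cheap as adding interfering candidates, which follows because $\cost_{\MF}$ is an explicit minimum over pairs $x\axisweak y$ in $A$ and choosing $x=y=a$ (or $x=y=b$) already gives cost $1$. Everything else is routine bookkeeping.
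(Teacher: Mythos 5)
Your proposal is correct and follows essentially the same route as the paper: the paper likewise establishes NP-hardness of VD and BC on size-$2$ instances (via reductions from Hamiltonian Path and Optimal Arrangement, respectively, both originally due to Booth's thesis) and then transfers hardness to MF, MS, and FT by exactly the cost-function coincidences you identify, namely $\cost_{\VD}=\cost_{\MF}$ and $\cost_{\BC}=\cost_{\MS}=\cost_{\FT}$ when $\max_i|A_i|\le 2$. The only difference is that the paper reproduces the two base reductions explicitly for convenience, whereas you cite them as known; your verification of the equivalences is the same as the paper's.
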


A lot of other axis rules could be defined. However, in this paper, we focus on the five rules introduced above, and leave the study of potential other rules to further research. In particular, we think that greedy variants of the rules we introduced are of interest to circumvent computational hardness.

\section{Axiomatic Analysis}  \label{sec:axiom}

In this section, we conduct an axiomatic analysis of the rules we introduced. \Cref{tab:summary_properties} summarizes the results of this section. 

We start with some basic axioms that all our rules satisfy. The first two are classic symmetry axioms: a rule $f$ is \emph{anonymous} if whenever two profiles $P$ and $P'$ are such that every ballot appears exactly as often in $P$ as in $P'$, then $f(P) = f(P')$. It is \emph{neutral} if for every profile $P$, renaming the candidates in $P$ leads to the same renaming in $f(P)$. 
The third basic property fundamentally captures the aim of an axis rule: if there are perfect axes, then the rule should return those.

\property{Consistency with linearity}{
A rule $f$ is \emph{consistent with linearity} if $f(P) = \con(P)$ for all linear profiles $P$.
}

If $f$ is a scoring rule and it satisfies these three axioms, we can deduce that its underlying cost function has a certain structure. In particular, the cost function attains its minimum value for consistent axes, it is invariant under reversing the axis, and it is symmetric.

\begin{restatable}{lemma}{costfunction}
	\label{lem:costfunction}
	Let $f$ be a scoring rule. Then, $f$ is neutral and consistent with linearity if and only if it is induced by a cost function $\cost_f$ such that
	\begin{enumerate}
		\item[(1)] for all $A$ and all $\axis$, we have $\cost_f(A, \axis) \ge 0$, and $\cost_f(A,\axis) = 0$ if and only if $A$ is an interval of $\axis$,
		\item[(2)] for all $A$ and all $\axis$, we have $\cost_f(A, \axis) = \cost_f(A,\cev\axis)$, and
		\item[(3)] there exists a function $g:\{0,1\}^m \rightarrow \mathbb R_{\ge 0}$ such that for all $A$ and all $\axis$, we have $\cost_f(A,\axis) = g(x_{A,\axis}) = g(x_{A,\cev\axis})$ (in other words,  $\cost_f$ depends only on the induced approval vector $x_{A,\axis}$).
	\end{enumerate}
\end{restatable}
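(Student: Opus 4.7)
For the backward direction, assume $\cost_f$ satisfies (1)--(3). Since $x_{A,\axis} = x_{\pi(A),\pi(\axis)}$ for every candidate permutation $\pi$, property (3) yields $\cost_f(\pi(A),\pi(\axis)) = \cost_f(A,\axis)$; summing over ballots and passing to the argmin gives $f(\pi(P)) = \pi(f(P))$, i.e., neutrality. For consistency with linearity, property (1) makes $\cost_f(P,\axis) = 0$ precisely when $\axis \in \con(P)$ for a linear profile $P$, so $f(P) = \con(P)$.

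For the forward direction, I would take an arbitrary cost function $\cost$ inducing $f$ and apply three rule-preserving modifications in sequence. \emph{Stage 1 (permutation symmetrization):} define $\cost^*(A,\axis) = \tfrac{1}{m!}\sum_{\pi} \cost(\pi(A),\pi(\axis))$, the sum running over all candidate permutations. For each fixed $\pi$, the substitution $\axis' = \pi(\axis)$ together with neutrality gives $\argmin_\axis \cost(\pi(P),\pi(\axis)) = \pi^{-1}(f(\pi(P))) = f(P)$, with a strict gap off this set; averaging over $\pi$ preserves both the argmin and the strict gap, so $\cost^*$ still induces $f$. By construction, $\cost^*(A,\axis)$ depends on $(A,\axis)$ only through $x_{A,\axis}$, since any two pairs sharing an approval vector are related by some candidate permutation.

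\emph{Stage 2 (reversal symmetrization):} define $\cost^{**}(A,\axis) = \tfrac{1}{2}[\cost^*(A,\axis) + \cost^*(A,\cev\axis)]$. Since axis rules are reversal-symmetric by definition, both summands hit the common minimum of $\cost^*(P,\cdot)$ on $f(P)$ simultaneously while at least one is strictly larger off $f(P)$, so $\cost^{**}$ still induces $f$; this yields $\cost^{**}(A,\axis) = \cost^{**}(A,\cev\axis)$ and, combined with Stage 1, the full property (3). \emph{Stage 3 (zero calibration):} applying consistency with linearity to the singleton profile $(A)$ --- which is always linear with $\con((A)) = \{\axis : A \text{ is an interval of } \axis\}$ --- shows that $c_A := \min_\axis \cost^{**}(A,\axis)$ is attained exactly on the interval axes of $A$. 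Property (3) then forces $c_A$ to equal the minimum of $g$ over all $\{0,1\}^m$-vectors of weight $|A|$, so $c_A$ depends only on $|A|$; hence $\cost_f(A,\axis) := \cost^{**}(A,\axis) - c_A$ keeps properties (2) and (3), gains (1), and, being a ballot-wise shift constant in $\axis$, preserves the induced rule.

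The main obstacle is Stage 1: the nontrivial check is that permutation-averaging does not move the argmin, which is precisely where the neutrality axiom is actually used. Stages 2 and 3 then reduce to bookkeeping once the structural output of Stage 1 is in hand.
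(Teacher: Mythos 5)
Your proof is correct. The backward direction and two of your three forward stages coincide with the paper's argument: the permutation symmetrization $\cost^*(A,\axis)=\tfrac{1}{m!}\sum_\pi\cost(\pi(A),\pi(\axis))$ (the paper uses the unnormalized sum) with the same neutrality-based argmin-preservation check, and the zero calibration via the singleton profile $(A)$, whose consistency set is exactly the set of axes on which $A$ is an interval (the paper performs this normalization first rather than last, but the content is identical, including your observation that $c_A$ depends only on $|A|$ so the shift preserves property (3)). Where you genuinely depart from the paper is property (2). The paper proves reversal-invariance of the min-zero-normalized cost function by contradiction: assuming $\cost(A,\axis)\neq\cost(A,\cev\axis)$, it constructs a profile consisting of $A$ together with $q$ copies of each adjacent pair $\{c_i,c_{i+1}\}$ of $\axis$, so that for large $q$ only $\axis$ and $\cev\axis$ can be optimal, and the assumed asymmetry would force $f(P)=\{\axis\}$ without $\cev{\axis}$ --- contradicting the requirement that axis rules are closed under reversal. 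You instead average the cost over $\axis$ and $\cev\axis$ and observe that reversal-closure of $f(P)$ makes both summands attain their common minimum on $f(P)$ simultaneously and exceed it elsewhere, so the argmin is unchanged. Both arguments hinge on the same definitional fact, but yours is shorter and avoids the auxiliary profile construction; the paper's version establishes the marginally stronger statement that \emph{every} min-zero-normalized inducing cost function is reversal-invariant, whereas the lemma only asserts existence of one, so nothing is lost by your route.
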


\noindent
We provide the formal proof in \Cref{sec:appNeutr}.

\subsection{Stability and Monotonicity}

\begin{table}[!t]
    \centering
    \begin{tabular}{l c c c c c} 
	\toprule
    & VD & MF & BC & MS & FT \\ 
    \midrule
    Stability & \yes & \no & \no & \no & \no \\
    Ballot monotonicity & \yes & \no  & \yes & \no & \no \\
    Clearance & \no & \no & \yes & \yes & \yes \\
    Veto winner centrism & \no & \no & \no & \yes & \yes \\
    Clone-proximity & \no & \no & \no & \no & \yes \\ 
    Clone-resistance & \yes & \no  & \no & \no & \no \\
    \bottomrule
    \end{tabular}
    \caption{Properties satisfied by the axis rules.
    }
    \label{tab:summary_properties}
\end{table}
Some rules are more sensitive to changes in information than others. 
Intuitively, Voter Deletion rarely reacts to changes in the profile, as it only checks whether the ballots are intervals of the axis or not. Thus, a single voter will have little effect on the axes selected. Indeed, for VD, adding a new ballot to the profile cannot completely change the set of optimal solutions. For other rules, this is not the case. We can formalize this behavior in the following axiom.

\property{Stability}{
A rule $f$ satisfies \emph{stability} if for every profile $P$ and approval ballot $A$, we have $f(P) \cap f(P + \{A\}) \ne \emptyset$.
}

This axiom was also considered by \citet[Sec.~4.4.2]{tydrichova2023structural} in the context of rankings.
A similar axiom is used by \citet{ceron2021approval} to characterize Approval Voting as a single-winner voting rule.
Whether stability is a desirable property depends on the context: while it implies that the rule is robust, it also means that the rule might disregard too much information.

\begin{restatable}{proposition}{stability} \label{thm:stability}
    Stability is satisfied by VD, but not by MF, BC, MS, and FT. 
\end{restatable}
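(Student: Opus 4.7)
The plan is to treat VD separately from the other four rules.

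\textbf{VD satisfies stability.} The crucial observation is that $\cost_{\VD}(A, \axis) \in \{0, 1\}$ for every ballot $A$ and axis $\axis$. Writing $C^*(P) := \min_{\axis} \cost_{\VD}(P, \axis)$, I would first observe that $C^*(P + \{A\}) \in \{C^*(P),\, C^*(P) + 1\}$, and then split into two cases. If $C^*(P + \{A\}) = C^*(P)$, any $\axis'' \in f_{\VD}(P + \{A\})$ satisfies $\cost_{\VD}(P, \axis'') \le \cost_{\VD}(P + \{A\}, \axis'') = C^*(P)$, hence $\axis'' \in f_{\VD}(P)$. If $C^*(P + \{A\}) = C^*(P) + 1$, any $\axis^* \in f_{\VD}(P)$ satisfies $\cost_{\VD}(P + \{A\}, \axis^*) \le C^*(P) + 1 = C^*(P + \{A\})$, hence $\axis^* \in f_{\VD}(P + \{A\})$. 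Either way, $f_{\VD}(P) \cap f_{\VD}(P+\{A\})\neq\emptyset$.

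\textbf{The other four rules fail stability.} For each $f \in \{\MF, \BC, \MS, \FT\}$ I would exhibit an explicit profile $P$ together with a ballot $A$ such that $f(P) \cap f(P + \{A\}) = \emptyset$. The template is: choose $P$ so that $f(P) = \{\axis^*, \cev{\axis^*}\}$ consists of a single axis together with its reverse, with the next-best axis $\axis'$ having cost exceeding $\axis^*$ by some small gap $\delta > 0$; then pick $A$ so that $\cost_f(A, \axis^*) - \cost_f(A, \axis') > \delta$. Because the per-ballot cost of these rules can vary across axes by amounts that scale with $m$ (up to $\Theta(m)$ for MF and BC, and $\Theta(m^2)$ for MS and FT), such an $A$ exists once $\delta$ is small; a ballot with candidates that lie far apart on $\axis^*$ but close together on $\axis'$ suffices. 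Adding $A$ then strictly tips the optimum from $\axis^*$ to $\axis'$, so $\axis^* \notin f(P+\{A\})$ while $\axis' \notin f(P)$ by choice of $\delta$.

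The main obstacle is to make $f(P)$ narrow enough that the template actually applies. On small profiles $f(P)$ tends to contain many tied axes, and typically at least one of them coincidentally also makes the added ballot $A$ cheap, preserving the intersection. To force $\{\axis^*, \cev{\axis^*}\}$ to be the unique optimum, I would use a carefully chosen \emph{non-linear} profile with short (two- or three-element) ballots whose constraints overlap conflictingly, together with small multiplicities that break symmetries by exactly $\delta$. This can be done with $m=4$ for BC and MS (where single-ballot cost swings of size $2$ are enough to overcome a gap $\delta=1$), and with $m = 5$ for MF and FT, where the extra flexibility of the cost function requires a slightly larger profile. Once $P$ and $A$ are fixed, verifying that no third axis spoils the example is a routine enumeration of costs over the $m!/2$ axes.
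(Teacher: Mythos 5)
Your argument for the VD half is correct and is in substance the same as the paper's: both rest on the facts that $\cost_{\VD}$ is integer-valued and that each added ballot changes the cost of any axis by at most $1$. Your organization via the observation $C^*(P+\{A\})\in\{C^*(P),\,C^*(P)+1\}$ and the ensuing two cases is a clean equivalent of the paper's direct inequality chain, and both cases check out.

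The negative half, however, has a genuine gap: you never exhibit the counterexamples. What you give is a template --- find $P$ with $f(P)=\{\axis^*,\cev{\axis^*}\}$ and a small optimality gap $\delta$, then add a ballot $A$ whose cost on $\axis^*$ exceeds its cost on some $\axis'$ by more than $\delta$. The template is logically sound (it does yield $f(P)\cap f(P+\{A\})=\emptyset$), but its feasibility is exactly what needs to be demonstrated, and you acknowledge yourself that the hard part is making $f(P)$ narrow enough while keeping the gap small; the claims that this works with $m=4$ for BC and MS and $m=5$ for MF and FT are asserted, not verified, and the ``routine enumeration'' you defer is the entire content of this direction. For a proposition of this kind the explicit profile and ballot, with their cost computations, \emph{are} the proof. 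The paper sidesteps your main worry (too many tied axes) with a different device worth noting: it takes $P$ to be a \emph{linear} profile on six candidates, so that $f(P)=\con(P)$ is pinned down for all four rules at once by consistency with linearity without computing any costs, and then a single added ballot $A=\{a,b,d,f\}$ simultaneously knocks out every axis of $\con(P)$ in favor of a new axis for MF, BC, MS, and FT. Until you supply concrete profiles and verify the cost comparisons (including that no third axis keeps the intersection nonempty), the second half of the statement remains unproved.
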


\begin{proof}
    Let us prove that VD satisfies stability. Let $P$ be a profile and $A$ an approval ballot. If  $\VD(P + \{A\}) \subseteq \VD(P)$, then clearly $\VD(P) \cap \VD(P + \{A\}) \neq \emptyset$. Otherwise, let ${\axis'} \in \VD(P + \{A\})\setminus \VD(P)$ and ${\axis} \in \VD(P)$. Then, $\cost_{\VD}(P, \axis) \le \cost_{\VD}(P, \axis') -1 $. Moreover, by definition of VD, $0 \le \cost_{\VD}(A, \axis) \le 1$ for all axes $\axis$. Put together, we have: 
    \begin{align*}
    \cost_{\VD}(P + \{A\},\axis) &= \cost_{\VD}(P,\axis) + \cost_{\VD}(A, \axis) \\
    &\le \left(\cost_{\VD}(P, \axis') - 1\right) + 1 \\
    &\le \cost_{\VD}(P, \axis') + \cost_{\VD}(A, \axis') \\ 
    & = \cost_{\VD}(P + \{A\}, \axis')
    \end{align*}
    Therefore, $\cost_{\VD}(P + \{A\},\axis) \le \cost_{\VD}(P + \{A\}, \axis') $ and thus, because ${\axis'} \in \VD(P + \{A\})$, we must also have ${\axis} \in \VD(P + \{A\})$, and thus $\VD(P) \cap \VD(P + \{A\}) \neq \emptyset$, as required.

    For $f \in \{ \MF,\BC,\MS,\FT\}$, let us consider the profile $P = (\{ a,b,e\}, \{ a,b,c,e\}, \{ b,c,d,e,f\})$.   
    By consistency with linearity, $f(P) = \{\mathit{aebcfd}, aebcdf, abecdf, \mathit{abecfd} \}$  (up to the reversed axes). Now, consider the ballot  $A = \{a,b,d,f\}$. For every ${\axis} \in f(P)$, we have $\cost_{\MF}(P + \{A\}, \axis ) = \cost_{\BC}(P + \{A\}, \axis ) = 2$, $\cost_{\MS}(P + \{A\}, \axis ) \in \{3, 4\}$ and $\cost_{\FT}(P + \{A\}, \axis ) = 8$. 
    
    However, let us consider the axis ${\axis'} = \mathit{ceabfd} \notin f(P)$. The only ballot in $P + \{A\}$ that is not an interval of $\axis'$ is $\{ b,c,d,e,f\}$, and thus we can calculate that $\cost_{\MF}(P + \{A\}, \axis' ) = \cost_{\BC}(P + \{A\}, \axis' ) = 1$, $\cost_{\MS}(P + \{A\}, \axis' ) = 2$ and $\cost_{\FT}(P + \{A\}, \axis' ) = 6$. Therefore, none of the axes in $f(P)$ is optimal for the profile $P + \{A\}$, and hence $f(P) \cap f(P + \{A\}) = \emptyset$ for $f \in \{ \MF,\BC,\MS,\FT\}$. Thus, these rules do not satisfy stability.

    \begin{figure}[!t]
        \centering
        \begin{tikzpicture}
            \begin{scope}[xshift=0cm]
                \axisheader{0}{a,e,b,c,f,d}
                
                \interval{1}{1}{3}

                \interval{2}{1}{4}

                \interval{3}{2}{6}
                
                \node at (-0.3, -4*\rowheight) {$A:$};
                \interval{4}{1}{1}
                \interfering{4}{2}
                \interval{4}{3}{3}
                \interfering{4}{4}
                \interval{4}{5}{6}
                
            \end{scope}
            
            \begin{scope}[xshift=6cm]
                \axisheader{0}{c,e,a,b,f,d}

                \interval{1}{2}{4}

                \interval{2}{1}{4}

                \interval{3}{1}{2}
                \interfering{3}{3}
                \interval{3}{4}{6}
                
                \node at (-0.3, -4*\rowheight) {$A:$};
                \interval{4}{3}{6}
                
            \end{scope}
        \end{tikzpicture}
        \caption{Profile $P$ and ballot $A$ (the last row) with the axes $\mathit{aebcfd}$ and $\mathit{ceabfd}$ in the proof of \Cref{thm:stability}. Red circles indicate interfering candidates.}
        \label{tab:ex_stability}
    \end{figure}
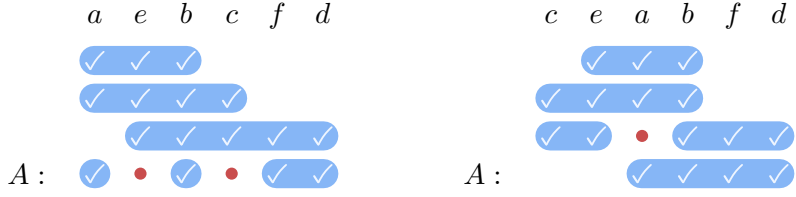
\end{proof}

Monotonicity axioms say that if the input changes so as to more strongly support the current output, then the output should stay the same. 
For our setting, we define monotonicity to say that if some voters \emph{complete} their ballots by approving all interfering candidates
with respect to the current axis $\axis$, then $\axis$ should continue being selected.%
\footnote{One could define monotonicity in other ways, but we leave the study of those variants to future work.}

\property{Ballot monotonicity}{
A rule $f$ satisfies \emph{ballot monotonicity} if for every profile $P$, ballot $A \in P$ and axis ${\axis} \in f(P)$ such that $A$ is not an interval of $\axis$, we still have ${\axis} \in f(P')$ for the profile $P'$ obtained from $P$ by replacing $A$ by the interval $A' = \{x \in C :\exists y,z \in A \text{ s.t. } y \axisweak x \axisweak z\}$.
}

VD and BC satisfy this axiom, but the other rules do not.

\begin{restatable}{proposition}{monot}
    Ballot monotonicity is satisfied by VD and BC, but not by MF, MS, and FT. 
\end{restatable}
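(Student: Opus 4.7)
The plan is to split the proposition into a positive direction (VD, BC) and a negative direction (MF, MS, FT), using in each case a ``marginal change'' analysis of how the cost of an axis moves when $A$ is replaced by $A'$. For the positive direction, I will show that among all axes, $\axis$ experiences the largest possible cost decrease, so its optimality is preserved.

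For VD, the key observation is that $\cost_{\VD}(B, \axis'') \in \{0,1\}$ for every ballot $B$ and axis $\axis''$. Since $A$ is not an interval of $\axis$ while $A'$ is, the contribution of this ballot to the total VD-cost drops from $1$ to $0$ on $\axis$, whereas on any other $\axis''$ it changes by at least $-1$. Hence $\axis$ remains a minimizer on $P'$. For BC, I will use the identity $\cost_{\BC}(B,\axis'') = \mathrm{span}(B,\axis'') - |B|$, where $\mathrm{span}(B,\axis'')$ is the number of positions from the leftmost to the rightmost element of $B$ on $\axis''$. Because $A \subseteq A'$ we have $\mathrm{span}(A',\axis'') \ge \mathrm{span}(A,\axis'')$, so $\cost_{\BC}(A',\axis'') - \cost_{\BC}(A,\axis'') \ge -|A' \setminus A|$ for every $\axis''$. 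On $\axis$ equality holds: by construction, the interfering candidates of $A$ on $\axis$ are exactly $A' \setminus A$, giving $\cost_{\BC}(A,\axis) = |A' \setminus A|$ and $\cost_{\BC}(A',\axis) = 0$. So $\axis$'s decrease matches the extreme value allowed for any other axis, and optimality is preserved.

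For the negative direction, I will construct for each rule $f \in \{\MF, \MS, \FT\}$ an explicit small profile witnessing the failure. The recipe is to pick a ballot $A$ and two axes $\axis, \axis''$ such that (a) $A$ is non-interval on $\axis$ with moderate $f$-cost, and (b) on $\axis''$ the quantity $\cost_f(A,\axis'') - \cost_f(A',\axis'')$ strictly exceeds $\cost_f(A,\axis)$, so that the replacement $A \mapsto A'$ relatively favors $\axis''$ over $\axis$. I then add a few auxiliary ballots that keep $\axis$ a minimizer on $P$ while on $P'$ the shift from (b) is large enough to let $\axis''$ strictly beat $\axis$.

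The positive direction is largely routine once the right invariants of $\cost_{\VD}$ and $\cost_{\BC}$ are identified. The main obstacle is producing the counterexamples cleanly for the negative direction: the replacement $A \mapsto A'$ perturbs the cost of \emph{every} axis in correlated ways, so the auxiliary ballots must keep $\axis$ optimal against \emph{all} competing axes on $P$ while still allowing some specific $\axis''$ to overtake it on $P'$. I expect explicit examples with $m \in \{5,6,7\}$ candidates to suffice, though verifying minimality requires checking a nontrivial set of candidate axes.
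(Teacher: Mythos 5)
Your positive direction is correct and essentially the paper's argument. For VD it is identical. For BC your span identity $\cost_{\BC}(B,\axis'') = \mathrm{span}(B,\axis'') - |B|$ is a slightly cleaner way to package the paper's observation that adding $k = |A'\setminus A|$ candidates can decrease the cost of any axis by at most $k$, while on $\axis$ the decrease is exactly $k$; the conclusion that $\axis$ stays optimal follows the same way in both write-ups.

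The negative direction, however, is a genuine gap: you describe a recipe for building counterexamples for MF, MS, and FT but never exhibit one, and you yourself flag the hard part --- keeping $\axis$ optimal against \emph{all} competing axes on $P$ while some $\axis''$ overtakes it on $P'$ --- as an unresolved obstacle. Without explicit profiles and a verification that $\axis \in f(P)$ but $\axis \notin f(P')$, the claim that MF, MS, and FT fail ballot monotonicity is not proved. The paper sidesteps exactly the difficulty you worry about by choosing a fully symmetric profile: on $C = \{a,\dots,f\}$ take $P$ to contain each of the $\binom{6}{4}$ size-$4$ ballots once. Neutrality forces every axis to have the same cost $x$ on $P$, so every axis (in particular $\axis_1 = \mathit{abcdef}$) is optimal and no case analysis over competing axes is needed. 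After replacing $A = \{a,b,c,f\}$ by its completion $A' = C$, the cost of any axis $\axis$ becomes $x - \cost_f(\axis, A)$, so the new optima are exactly the axes \emph{maximizing} $\cost_f(\cdot, A)$; since $\cost_f(\axis_2, A) > \cost_f(\axis_1, A)$ for $\axis_2 = \mathit{abdefc}$ under each of MF, MS, and FT (costs $2,4,8$ versus $1,2,6$), the axis $\axis_1$ is no longer selected. You would need to supply something of comparable concreteness to complete your proof.
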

\begin{proof}
    We first prove the result for VD and BC. Informally, by changing the ballot $A$ to $A'$ we decrease the cost of $\axis$ by $1$ for VD, and the cost of all other axes decreases by at most $1$, so $\axis$ is still among the selected axes. For BC, suppose we need to add $k$ candidates to the ballot $A$ as part of making the profile linear. Then the change to $A'$ reduces the cost of $\axis$ by $k$, and the cost of any other axis decreases by at most $k$ (as we added only $k$ candidates to $A$), so $\axis$ is still  selected. 

    We now prove it more formally. Let $P$ be a profile and $\axis \in f(P)$ an optimal axis. Let $A \in P$ be a ballot that is not an interval of $\axis$, $A' = \{x \in C :\exists y,z \in A \text{ s.t. } y \axisweak x \axisweak z\}$ the completion of $A$, and $P'$ the profile obtained from $P$ by replacing $A$ by $A'$.

    For VD, we have that $\cost_{\VD}(A, \axis) = 1$ and $\cost_{\VD}(A', \axis) = 0$. For every axis $\axis'$, we have $\cost_{\VD}(A, \axis') \le 1$ and $\cost_{\VD}(A', \axis') \ge 0$. This gives the following.
    \begin{alignat*}{7}
    	\cost_{\VD}(P', \axis) &= \cost_{\VD}(P, \axis) &&- \cost_{\VD}(A, \axis) &&+ \cost_{\VD}(A', \axis) &&= \cost_{\VD}(P, \axis) -1 \text{, and} \\
    	\cost_{\VD}(P', \axis') &= \cost_{\VD}(P, \axis') &&- \cost_{\VD}(A, \axis') &&+ \cost_{\VD}(A', \axis') &&\ge \cost_{\VD}(P, \axis') - 1 \text{ for all $\axis'$.}
    \end{alignat*}
    Since $\axis \in f(P)$, we have $\cost_{\VD}(P, \axis) \le \cost_{\VD}(P, \axis')$ and thus $\cost_{\VD}(P', \axis) \le \cost_{\VD}(P', \axis')$ for all axes $\axis'$. Therefore, $\axis \in f(P')$, and VD satisfies ballot monotonicity.

    We use a similar reasoning for BC. Let $k = \cost_{\BC}(A,\axis)$ and $\cost_{\BC}(A', \axis) = 0$. This means that we add $k > 0$ candidates to ballot $A$, thereby obtaining a ballot $A'$ that forms an interval of $\axis$. As before, since we added only $k$ candidates, the cost of any other axis $\axis'$ decreases by at most $k$, i.e., $\cost_{\BC}(A, \axis') - \cost_{\BC}(A', \axis') \le k$. Thus,
    \begin{alignat*}{7}
    	\cost_{\BC}(P', \axis) &= \cost_{\BC}(P, \axis) &&- k &&+ 0 &&= \cost_{\BC}(P, \axis) - k \text{, and} \\
    	\cost_{\BC}(P', \axis') &= \cost_{\BC}(P, \axis') &&- \cost_{\BC}(A, \axis') &&+ \cost_{\BC}(A', \axis') &&\ge \cost_{\BC}(P, \axis') - k \text{ for all $\axis'$.}
    \end{alignat*}
    Since $\axis \in f(P)$, we have $\cost_{\BC}(P, \axis) \le \cost_{\BC}(P, \axis')$ and thus $\cost_{\BC}(P', \axis) \le \cost_{\BC}(P', \axis')$ for all $\axis'$. Therefore, $\axis \in f(P')$, and BC satisfies ballot monotonicity.
    
    To see that the other rules do not satisfy ballot monotonicity, consider some rule $f \in \{\MF, \MS, \FT\}$, a set of $6$ candidates $C = \{a,b,c,d,e,f\}$, and the profile $P$ containing each of the $\binom{6}{4}$ possible ballots of $4$ candidates once. As $f$ satisfies neutrality, all axes are chosen, and there is $x \in \mathbb R$ such that $\cost_{f}(\axis, P) = x$ for all $\axis$. 
    Consider now the axis $\axis_1 = \mathit{abcdef}$, and the ballot $A = \{a,b,c,f\} \in P$. Let $P'$ be the profile obtained from $P$ in which $A$ is replaced by $A' = \{a,b,c,d,e,f\}$.
    Since ${\axis_1} \in f(P)$, it suffices to show that ${\axis_1} \notin f(P')$.
    Denote $\axis_2 = \mathit{abdefc}$. 
    For every axis $\axis$, we have $\cost_f(\axis, A') = 0$, and thus by definition of $P'$, we have
    \[
    	\cost_{f}(\axis, P') = \cost_{f}(\axis, P) + \cost_f(\axis, A') - \cost_f(\axis, A) = x - \cost_f(\axis, A).
    \]
    Additionally, note that for MF, MS, and FT, the cost of $A$ on $\axis_1$ is respectively $1$, $2$, and $6$, while the cost on $\axis_2$ is respectively $2$, $4$, and $8$. Therefore, we have $\cost_f(\axis_2, A) > \cost_f(\axis_1, A)$, which implies  $\cost_{f}(\axis_2, P') < \cost_{f}(\axis_1, P')$, and so ${\axis_1} \notin f(P')$, as required. 
\end{proof}

\subsection{Centrists and Outliers}

On a high level, good axes should place less popular candidates towards the extremes, where they are less likely to destroy intervals. Conversely, popular candidates are safer to place in the center. We will define two axioms that identify profiles where this expectation is strongest, and that require candidates to be accordingly placed in center or extreme positions.

Our first axiom considers the placement of very unpopular candidates. The axiom is easiest to satisfy by placing them at the extremes, but it does not require doing so in all cases.

\property{Clearance}{
A rule $f$ satisfies \emph{clearance} if for every profile $P$ in which some candidate $x$ is never approved, all ${\axis} \in f(P)$ are such that there is no $A \in P$ with $y,z \in A$ and $y \axis x \axis z$.}

Thus, under clearance, never-approved candidates cannot be interfering.%

\begin{restatable}{proposition}{clearance} \label{thm:clearance}
Clearance is satisfied by BC, MS, and FT, but not by VD and MF. 
\end{restatable}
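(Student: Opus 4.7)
The plan splits by rule family. For BC, MS, and FT, I would prove clearance uniformly via a ``move $x$ to an extreme'' argument: assuming towards contradiction that $\axis \in f(P)$ has $x$ interfering for some $A \in P$ (so $y \axis x \axis z$ for some $y,z \in A$), build $\axis'$ from $\axis$ by relocating $x$ to the leftmost position while preserving the relative order of all non-$x$ candidates. The crucial structural observation is that this preserves the relative order of every pair of non-$x$ candidates, so for every ballot $B \in P$ (recall $x \notin B$ since $x$ is never approved) and every non-$x$ candidate $c$, the predicate ``$c$ lies strictly between two elements of $B$'' has the same truth value under $\axis$ and $\axis'$. Hence for each $B$, the only contribution to $\cost_f(B, \cdot)$ that can differ between the two axes is the one coming from $x$ itself.

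Plugging into each rule, the contribution of $x$ to $\cost_{\BC}(B, \cdot)$ is $1$ when some pair of $B$-elements brackets $x$ and $0$ otherwise; the contribution of $x$ to $\cost_{\MS}(B, \cdot)$ is $\min(|\{y \in B : y \axis x\}|, |\{y \in B : x \axis y\}|)$; and the contribution of $x$ to $\cost_{\FT}(B, \cdot)$ is the product of those same two cardinalities. Once $x$ is at an extreme under $\axis'$, one of the two sets is empty, so the contribution of $x$ is $0$ in all three cases, whereas for the witnessing ballot $A$ under $\axis$ both sets are nonempty and the contribution is strictly positive. Summing over ballots gives $\cost_f(P, \axis') < \cost_f(P, \axis)$, contradicting $\axis \in f(P)$.

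For the negative direction, I would exhibit the profile $P = (\{a,b\}, \{a,c\}, \{a,d\})$ on candidates $\{a,b,c,d,x\}$ with $x$ never approved. Since $a$ would have to be simultaneously adjacent to $b$, $c$, and $d$ on any consistent axis, no such axis exists, so the optimal cost under both VD and MF is at least $1$. The axis $\axis = bacxd$ achieves exactly $1$ under both rules: $\{a,b\}$ and $\{a,c\}$ are intervals of $\axis$ so they contribute $0$, and $\{a,d\}$ contributes $1$ both for VD (it is the unique non-interval) and for MF (the choice $y = z = a$ in the definition of $\cost_{\MF}$ achieves cost $1$ by deleting $d$, and one easily checks that no other choice of endpoints does better). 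Since $\axis \in \VD(P) \cap \MF(P)$ while $a \axis x \axis d$ with $\{a,d\} \in P$, clearance fails for both. The main obstacle here is creative rather than technical: one must find a profile where a non-interval ballot can legitimately harbor $x$ among several interferers without moving $x$ out being strictly beneficial, and the three-star profile around $a$ is the smallest natural example.
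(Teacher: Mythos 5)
Your proof is correct and follows essentially the same route as the paper: the positive part for BC, MS, and FT uses the identical ``relocate the never-approved candidate to an extreme'' exchange argument (with the same observation that only $x$'s own contribution to each ballot's cost can change), and the negative part uses the same three-star profile $(\{a,b\},\{a,c\},\{a,d\})$ with a fifth never-approved candidate and the same witnessing axis, up to renaming.
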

\begin{proof}
        Let $f \in \{\BC, \MS, \FT\}$. We show that $f$ satisfies clearance. Let $P$ be a profile with a never-approved candidate $x$ and let $\axis$ be an axis such that there is a ballot $A$ in $P$ with $x$ interfering $A$ on $\axis$.       
        We will show that $\axis \not \in f(P)$. Consider the axis $\axis'$ identical to $\axis$ but in which $x$ was moved to the left extreme. As $x$ is interfering $A$ on $\axis$, we have
        \begin{align*}
        	\cost_{\BC}(A,\axis') &= \cost_{\BC}(A,\axis) - 1, \\
        	\cost_{\MS}(A,\axis') &= \cost_{\MS}(A,\axis) - \min(|\{y: y \axis x \}|, |\{y: x \axis y\}|)\text{, and} \\
        	\cost_{\FT}(A,\axis') &= \cost_{\FT}(A,\axis) - |\{y: y \axis x \}| \cdot |\{y: x \axis y\}|.
        \end{align*}
        In each case, we have $\cost_{f}(A,\axis') < \cost_f(A,\axis)$. This is true for all ballots $A$ for which $x$ is an interfering candidate on $\axis$. For all other ballots $A \in P$ for which $x$ is not interfering, note that already in $\axis$, candidate $x$ is placed outside of all candidates approved by $A$ (either to the left of the left-most approved candidate, or to the right of the right-most approved candidate). In $\axis'$, we have moved $x$ to an even more extreme position, but it follows from the rules' definitions that this does not change the cost, i.e., $\cost_f(A,\axis') = \cost_f(A,\axis)$. Since there exists at least one ballot for which $x$ is interfering on $\axis$, we have that $\cost_f(P, \axis') < \cost_f(P, \axis)$ and $\axis \not \in f(P)$, as required.
    
        Now let $f \in \{\VD, \MF \}$. We show that $f$ fails clearance. Consider the profile $P = (\{a,b\},\{a,c\},\{a,d\})$ on the set of candidates $C = \{a,b,c,d,e\}$. This profile is not linear because at most two of the candidates $b,c,d$ can be placed next to $a$. Thus, for each ${\axis} \in f(P)$, we have $\cost_f(P, \axis) \geq 1$. Consider the axis ${\axis} = baced$. We have $\cost_f(P, \axis) = 1$, so ${\axis} \in f(P)$. But $e$ is never approved and it interferes with the ballot $\{ a,d\}$ on $\axis$. Hence $f$ does not satisfy clearance. 
\end{proof}

While VD and MF always choose \emph{some} axis that satisfies the clearance condition, they can additionally choose axes which violate this condition, and hence they fail the axiom.

For another way of formalizing the intuition that unpopular candidates should be placed at the extremes, we consider \emph{veto profiles} in which every ballot has size $m-1$, i.e., each voter approves all but one of the candidates.
For a veto profile, the only voters who will approve an interval are those who veto a candidate at one extreme of the axis. Since veto profiles do not have any interesting structure, the best candidates to put at the left and right end of the axis are the two candidates with the lowest approval score (i.e., the \emph{most} vetoed candidates). All of our rules indeed choose only such outcomes.

We can extend this intuition to say that candidates that are vetoed more frequently should be placed at positions closer to the extremes. This would imply that the \emph{least} vetoed candidate should be placed in the center, so that as few ballots as possible have holes in the center.

\property{Veto winner centrism}{
A rule $f$ satisfies \emph{veto winner centrism} if for every veto profile $P$, the median candidate (or one of the two median candidates if the number of candidates is even) 
of every axis ${\axis} \in f(P)$ has the highest approval score.
}

Among the rules studied in this paper, only MS and FT satisfy veto winner centrism.

\begin{restatable}{proposition}{veto}
    Veto winner centrism is satisfied by MS and FT, but not by VD, MF, and BC.
\end{restatable}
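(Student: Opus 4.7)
My plan is to exploit the very special structure of veto profiles: since every ballot $A$ is of the form $C\setminus\{v\}$, the per-ballot cost under any of the five rules depends solely on the position $i$ of the vetoed candidate $v$ on the axis. Unwinding the definitions, I would first establish that $\cost_{\VD}(A,\axis) = \cost_{\MF}(A,\axis) = \cost_{\BC}(A,\axis)$ equals $1$ if $1<i<m$ and $0$ otherwise (for MF this needs a brief case analysis on the choice of $x,y$), while $\cost_{\MS}(A,\axis) = \min(i-1,\, m-i)$ and $\cost_{\FT}(A,\axis) = (i-1)(m-i)$. Summing over voters, the cost of an axis reduces to $\sum_{v\in C} n_v\,\phi(\pi(v))$, where $n_v$ is the number of voters vetoing $v$, $\pi(v)$ denotes the position of $v$ on $\axis$, and $\phi$ is the rule-specific position function above.

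For the positive direction (MS and FT), the crucial observation is that in both cases $\phi$ is strictly peaked at the middle: for odd $m$, $\phi$ has a unique maximum at position $(m+1)/2$; for even $m$, $\phi$ attains its maximum exactly at the two middle positions $m/2$ and $m/2+1$. I would then argue by a one-step swap. Suppose, for contradiction, that some optimal axis $\axis$ has no max-approval (equivalently, min-$n_v$) candidate at a median position. Pick any max-approval candidate $c^\star$ sitting at a non-median position $j$ and any candidate $c$ currently at a median position. Since $c$ is not of max approval we have $n_c > n_{c^\star}$, and since $j$ is non-median we have $\phi(j) < \phi(\mathrm{med})$. The change in total cost when swapping $c$ and $c^\star$ is $(n_c - n_{c^\star})\bigl(\phi(j) - \phi(\mathrm{med})\bigr) < 0$, contradicting optimality.

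For the negative direction, since the costs of VD, MF, and BC coincide on veto profiles and are completely insensitive to the ordering of non-extreme positions, I can engineer a failure by having three distinct candidates each vetoed once. With $m=5$ I would take
\[
P = \bigl(\{b,c,d,e\},\ \{a,c,d,e\},\ \{a,b,c,d\}\bigr),
\]
so that each of $a$, $b$, $e$ is vetoed exactly once while $c$ and $d$ attain the strictly highest approval score $3$. Three distinct vetoed candidates and only two extreme slots force every axis to have cost at least $1$, and any axis placing two of $\{a,b,e\}$ at the extremes attains this minimum. In particular, the axis $a\axis c\axis b\axis d\axis e$ is optimal under all three rules, yet its median candidate $b$ does not have the highest approval score, witnessing the failure of the axiom. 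The main subtlety in the whole proof is the even-$m$ case of the swap argument: one must choose the swap target $j$ strictly outside \emph{both} median positions so that the inequality $\phi(j)<\phi(\mathrm{med})$ is strict, which is automatic once one first checks whether either median already holds a max-approval candidate and only performs the swap if not.
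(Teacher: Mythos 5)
Your proposal is correct and follows essentially the same route as the paper: reduce each rule's cost on a veto profile to a function $\phi$ of the vetoed candidate's position, use a single swap of a max-approval candidate into a median slot to derive a strict improvement for MS and FT (with the same care about the two median positions when $m$ is even), and exploit the fact that VD, MF, and BC coincide on veto profiles and only see the two extreme positions. The only cosmetic difference is that you exhibit a concrete $m=5$ counterexample for the negative direction where the paper argues generically that the median position is unconstrained; both are fine.
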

\begin{proof}
   Let $P$ be a veto profile. Let us denote the candidates by $c_1, c_2, \hdots, c_m$, and let $A_{{-i}}$ be the ballot approving all candidates but $c_i$. For each $i \in \{1, \hdots, m \}$, we denote by $n_i$ the number of occurrences of $A_{{-i}}$ in $P$. Since $P$ is a veto profile, $n = n_1 + n_2 + \hdots + n_m$.
   
   Let us now prove that FT and MS satisfy veto winner centrism. For simplicity, we assume that $m$ is odd, and so there is only one median candidate on the axis, at position $\textsf{med} := \frac{m+1}{2}$ (however, note that the reasoning below also works for $m$ even, with only a  slight straightforward modification).
   
   Regarding FT, given an axis $\axis$, let us denote by $k_i$ the position of $c_i$ on $\axis$. Then each copy of $A_{{-i}}$ in $P$ creates $t_{k_i} = (k_i -1) \cdot(m - k_i)$ forbidden triples and $\cost_{\FT}(P, \axis) = \sum_{i = 1}^m n_i \cdot t_{k_i}$. 
   Note that $t_{k_i}$ is maximal when $k_i = \textsf{med}$ (i.e., when $c_i$ is the median candidate of $\axis$).
   Without loss of generality, let $c_1$ be a most approved candidate and let $c_2$ be a candidate with strictly fewer approvals, i.e., $n_2 > n_1$. 
   Consider any axis $\axis$ for which $c_2$ is the central candidate, and let $\axis'$ be an axis obtained from $\axis$ by swapping the positions of $c_1$ and $c_2$. We claim that $\cost_{\FT}(P, \axis') < \cost_{\FT}(P, \axis)$. For each $i \neq 1,2$, the number of forbidden triples induced by ballots of type $A_{{-i}}$ is the same for both axes, as this number only depends on the position of $c_i$ on the axis.
   Hence, $\cost_{\FT}(P, \axis')$ and $\cost_{\FT}(P, \axis)$ only differ in triples caused by ballots of type $A_{{-1}}$ and $A_{{-2}}$. Thus,
   \begin{align*}
        \cost_{\FT}(P, \axis') - \cost_{\FT}(P, \axis) 
       &= (n_1t_{\textsf{med}} + n_2t_{k_1} )-  (n_1t_{k_1} + n_2t_{\textsf{med}}) \\
       &= (n_1 - n_2)(t_{\textsf{med}}-t_{k_1}) 
        < 0, 
   \end{align*} 
   as $n_1 < n_2$ and $t_{\textsf{med}}>t_{k_1}$. This implies that no axis whose center candidate has fewer than the maximum number of approvals can be optimal, and thus FT satisfies veto winner centrism.

   Regarding MS, we proceed similarly to the case of FT, by noting that each copy of $A_{{-i}}$ in $P$ generates $t_{k_i} = \min\{ k_i -1, m- k_i\}$ swaps. Indeed, $c_i$ is the unique non-approved candidate in $A_{{-i}}$, so it needs to be swapped with all the candidates on its left, or its right. This value is maximal if $c_i$ is the median candidate of the axis, i.e., if $k_i = \textsf{med} = \frac{m+1}{2}$. It is now easy to see that the previous argument also works for MS.
   
   Regarding VD, MF, and BC, note that these three rules are all equivalent on veto profiles, since their cost functions equal $1$ for every ballot $A_{{-i}}$ that is not an interval of a given axis. 
   Thus, each axis $\axis$ with left- and rightmost candidates $c_l$ and $c_r$ has a cost of $n - n_l - n_r$ according to these rules. It follows that an axis is optimal if and only if its two outermost candidates correspond to the two least approved candidates. In particular, the optimality of a solution is independent of the position of the most approved candidate (provided it is not placed at the extremes). Hence, if $m \ge 5$, for all veto profiles $P$ there exists an optimal axis such that the most approved candidate is not the median candidate. Therefore, VD, MF, and BC fail veto winner centrism. 
\end{proof}

In fact, MS and FT always place candidates so that the approval scores are single-peaked.

Clearance and veto winner centrism suggest that MS and FT use the information in a profile accurately by correctly placing popular and unpopular candidates. Their tendency to put low-approval candidates towards the ends is also confirmed by our experiments in \Cref{sec:experiments}.
While this generally seems sound, in the political context it can lead to wrong answers: there can be ideologically centrist candidates who don't get many votes due to not being well-known. We leave for future work whether there are rules that can correctly place candidates in these contexts.

\subsection{Clones and Resistance to Cloning}

We now focus on the behaviour of rules in the presence of essentially identical candidates. We say that $a, b \in C$ are \emph{clones} if for each voter $i \in V$, $a \in A_i$ if and only if $b \in A_i$. While perfect clones are rare, two candidates may have very similar sets of supporters, and studying clones gives insights for how rules handle similar candidates. 

Intuitively, one would expect clones to be next to each other on any optimal axis. This is captured by the following axiom:

\property{Clone-proximity}{
A rule $f$ satisfies \emph{clone-proximity} if for every profile $P$ in which $a, a' \in C$ are clones, for every axis ${\axis} \in f(P)$, every candidate $x$ with $a \axis x \axis a'$ or $a' \axis x \axis a$, and every $A \in P$, we have $x \in A$ whenever $a, a' \in A$.
}
Note that in the definition, $x$ is not necessarily a clone of $a$ and $a'$, because $x$ can be approved even if $a$ and $a'$ are not approved.

Surprisingly, only FT satisfies clone-proximity. 
All of our rules choose at least one axis where the clones are next to each other, but the rules other than FT may choose extra axes with a violation, as we show in the following result.

\begin{restatable}{proposition}{cloneproximity} \label{thm:cloneprox}
	Clone-proximity is satisfied by FT, but not by VD, MF, BC, and MS.
\end{restatable}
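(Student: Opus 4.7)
My plan is to prove the two halves of the statement separately: the positive result for FT, and counterexamples for the other four rules.

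For FT, I argue by contradiction: suppose $\axis \in \FT(P)$ violates clone-proximity, so there is a ballot $A \in P$ with $a, a' \in A$ and a candidate $x \notin A$ satisfying $a \axis x \axis a'$ (without loss of generality). Let $T$ denote the set of candidates strictly between $a$ and $a'$ on $\axis$, so $x \in T$ and $T \not\subseteq A$. My goal is to exhibit an axis $\axis'$ with $\cost_{\FT}(P, \axis') < \cost_{\FT}(P, \axis)$, contradicting optimality. I consider two symmetric modifications: $\axis_1$, obtained by moving $a'$ to the position immediately right of $a$ (the candidates of $T$ shifted right by one), and $\axis_2$, obtained by moving $a$ to the position immediately left of $a'$ (the candidates of $T$ shifted left by one).

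Using the pairwise formula $\cost_{\FT}(B, \axis) = \sum_{y \notin B} |\{z \in B : z \axis y\}| \cdot |\{z \in B : y \axis z\}|$, I would prove that for every ballot $B \in P$,
\[
[\cost_{\FT}(B, \axis_1) - \cost_{\FT}(B, \axis)] + [\cost_{\FT}(B, \axis_2) - \cost_{\FT}(B, \axis)] = \begin{cases} -2\,|T \setminus B| & \text{if } a \in B, \\ 0 & \text{if } a \notin B. \end{cases}
\]
The proof of this per-ballot identity exploits the clone identity $a \in B \iff a' \in B$ to cancel the terms coming from pairs of the form $\{a, c\}$ and $\{a', c\}$ for $c \in T$ when the contributions of $\axis_1$ and $\axis_2$ are summed, leaving only the change in the pair $\{a, a'\}$, which loses exactly $|T \setminus B|$ from its middle in each modification.

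Summing over all ballots then gives $\sum_B \left( (\cost_{\FT}(B, \axis_1) - \cost_{\FT}(B, \axis)) + (\cost_{\FT}(B, \axis_2) - \cost_{\FT}(B, \axis)) \right) = -2 \sum_{B : a \in B} |T \setminus B| \le -2|T \setminus A| \le -2$, since $A$ contains $a, a'$ but not $x$. Hence at least one of $\axis_1, \axis_2$ has strictly smaller FT cost than $\axis$, contradicting $\axis \in \FT(P)$. The main technical obstacle will be the per-ballot identity: its verification requires a careful case analysis over all pairs whose middle sets are altered by the moves, in particular pairs whose endpoints lie outside $[a, a']$ and whose middle straddles $T$.

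For the four counterexamples (VD, MF, BC, MS), I would construct explicit small profiles (with at most $m = 6$ candidates). For VD, whose cost function only takes values $0$ and $1$, I would exploit ties between cost-minimizing axes using a balanced profile in which one of the optimal axes separates the clones by a candidate absent from a ballot containing both clones. For MF, BC, and MS, I would design profiles where a structural incompatibility among non-clone candidates (for instance, a triangle of pairwise-approval ballots on three distinct candidates) interacts with the clones in such a way that separating them strictly reduces the rule's objective, taking advantage of the finer gradation of these cost functions. Each counterexample is then verified by direct cost computation on a small collection of candidate axes.
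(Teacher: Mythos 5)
Your argument for the positive half (FT) is sound. The per-ballot identity you state does hold: for a ballot $B$ not containing the clones, the contribution of each pair $\{u,v\}\subseteq B$ to the cost change under $\axis_1$ is exactly the negative of its contribution under $\axis_2$ (whether $a$, resp.\ $a'$, enters or leaves the middle of the pair); for a ballot containing both clones, the pairs $\{a,c\}$ and $\{a',c\}$ cancel against each other across the two modifications, leaving only the pair $\{a,a'\}$, which loses $|T\setminus B|$ middles in each move. Summing gives a total change of $-2\sum_{B\ni a}|T\setminus B|\le -2$, so one of the two axes strictly improves. This is the same basic move as in the paper (relocate one clone next to the other), but with different bookkeeping: the paper classifies forbidden triples by which clones they involve, observes that the triples involving only $a'$ become copies of those involving only $a$ once the clones are adjacent, and uses a without-loss-of-generality choice of which clone to move ($S^a_{\axis}\le S^{a'}_{\axis}$); your averaging over the two symmetric moves replaces the w.l.o.g.\ by a per-ballot cancellation. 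Both are valid; yours trades the global triple classification for a more delicate local case analysis.

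The negative half is where the proposal has a genuine gap. For a claim of the form ``rule $R$ violates clone-proximity,'' the explicit profile together with the cost verification \emph{is} the proof, and you have not produced one for any of the four rules. Worse, the strategy you announce for MF, BC and MS --- find a profile where ``separating the clones \emph{strictly} reduces the rule's objective'' --- is the wrong target: for each of these rules there is always at least one optimal axis in which the clones are adjacent (this is precisely why the paper describes the failure as the rules ``choosing extra axes with a violation''), so no profile of the kind you describe exists. The counterexamples must instead exhibit \emph{ties}: a profile on which some clone-adjacent axis is optimal but an axis separating the clones attains the same optimal cost. The paper does this with $2\times\{a_1,a_2\}$, $2\times\{a_2,a_3\}$, $1\times\{x,x',a_1,a_3\}$ for VD, MF and BC (the cycle on $a_1,a_2,a_3$ forces cost at least $1$, and $x\axis a_1\axis a_2\axis a_3\axis x'$ attains cost $1$ while placing $a_2$ between the clones $x,x'$), and with $\{a,a',b,b'\},\{b,b',x,x'\},\{x,x',a,a'\}$ for MS (the axis $x\axis a\axis a'\axis x'\axis b\axis b'$ ties at cost $4$ with the clone-adjacent axes). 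Your VD sketch does gesture at a tie-based construction, but as written the plan for the other three rules would have you searching for something that cannot be found, and in any case none of the four constructions is actually carried out or verified.
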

\begin{proof}
    We first prove that FT satisfies clone-proximity. Let $P = (A_1, \dots, A_n)$ be a profile where $a$ and $a'$ are clones. Let $\axis$ be an axis. We denote by $T_{\axis}$ the set of all forbidden triples $(i, l,c,r) \in V \times C^3$ such that $l \axis c \axis r$ and $l, r \in A_i$ but $c \notin A_i$. Then, $\cost_{\FT}(P, \axis) = |T_{\axis}|$.

    First note that we cannot have a forbidden triple $(i,l,c,r) \in T_{\axis}$ with one of the clones as $c$ (in the center) and the other on one side ($l$ or $r$), as $a$ and $a'$ are always approved together. Thus, the only triples in $T_{\axis}$ involving both $a$ and $a'$ are those for which both sides $l$ and $r$ are one of the clones, i.e., triples of form $(\cdot,a,\cdot,a')$ and $(\cdot,a',\cdot,a)$. For an axis $\axis$, let us denote by $S^{\{a,a'\}}_{\axis}$ the number of such triples in $T_{\axis}$. Moreover, let $S^a_{\axis}$ be the number of triples involving $a$ and not $a'$ and let $S^{a'}_{\axis}$ be the number of triples involving $a'$ and not $a$. Finally, let $S^0_{\axis}$ be the number of triples involving neither $a$ nor $a'$. For every axis $\axis$, we have $\cost_{\FT}(P, \axis) = S^{\{a,a'\}}_{\axis} + S^a_{\axis} + S^{a'}_{\axis} + S^0_{\axis}$.
    
    Now take any $\axis$ where the clones are not next to each other, i.e., there exists $x \in C$ such that $a \axis x \axis a'$ or $a' \axis x \axis a$ and a ballot $A_i \in P$ such that $a,a' \in A_i$ and $x \notin A_i$.
    We will show that $\axis\not\in\FT(P)$.
    
    Note that by choice of $\axis$, we have $S^{\{a,a'\}}_{\axis} \ge 1$. Assume without loss of generality that $S^a_{\axis} \le S^{a'}_{\axis}$. Let us consider the axis $\axis'$ obtained by moving $a'$ next to $a$ on $\axis$, i.e., there is no $x \in C$ such that $a \axis x \axis a'$ or $a' \axis x \axis a$. Thus, we have $S^{\{a,a'\}}_{\axis'} = 0$, and $S^{a'}_{\axis'} = S^{a}_{\axis'} = S^a_{\axis}$, as all triples that do not involve $a'$ will not be affected by the move, and the triples involving $a'$ will be the same as those involving $a$ now that they are next to each other. For the same reason, $S^0_{\axis'} = S^0_{\axis}$. Thus, we have the following:
    \begin{alignat*}{7}
        \cost_{\FT}(P, \axis) &= S^{\{a,a'\}}_{\axis} &&+ S^a_{\axis} &&+ S^{a'}_{\axis} &&+ S^0_{\axis} \\
        &\ge \quad 1 &&+ S^a_{\axis} &&+ S^{a}_{\axis} &&+ S^0_{\axis} \\
        &> \quad 0 &&+ S^a_{\axis'} &&+ S^{a'}_{\axis'} &&+ S^0_{\axis'} \\
        &= \mathrlap{\cost_{\FT}(P, \axis').}
    \end{alignat*}
    Since $\axis'$ has a lower FT cost than $\axis$, we have $\axis\not\in\FT(P)$, as required. Hence FT satisfies clone-proximity.

    We now show that other rules do not satisfy clone-proximity.
    For VD, BC, and MF, consider the following profile:

    \begin{minipage}{0.5\textwidth}
        \begin{align*}
            \\[7pt]
            2 & \times\{a_1, a_2\}, \\
            2 & \times \{a_2, a_3\}, \\
           1 & \times\{x,x',a_1,a_3\}.\\
        \end{align*}
        \end{minipage}
        \begin{minipage}{0.5\textwidth}
            \begin{tikzpicture}
                \axisheader{0}{x,a_1,a_2,a_3,x'}
                \multiplicity{1}{2}
                \interval{1}{2}{3}
    
                \multiplicity{2}{2}
                \interval{2}{3}{4}
    
                \multiplicity{3}{1}
                \interval{3}{1}{2}
                \interfering{3}{3}
                \interval{3}{4}{5}
            \end{tikzpicture}
        \end{minipage}

    Because of the cycle among $a_1, a_2, a_3$, this profile is not linear and so all axes have cost at least $1$. Now observe that the axis $x \axis a_1 \axis a_2 \axis a_3 \axis x'$ has cost $1$ for VD, BC, and MF. On this axis, $a_2$ is between the clones $x$ and $x'$ but is never approved with them. Thus, these three rules fail clone-proximity. (However, note that all these rules also choose the compliant axis $x \axis x' \axis a_1 \axis a_2 \axis a_3'$.)

    For MS, consider the following profile:
    \begin{align*}
         1: \{a,a',b,b'\}, \quad 1: \{b,b',x, x'\}, \quad 1: \{x, x', a,a'\}
    \end{align*}
    By neutrality of Minimum Swaps, the cost of all axes in which clones are next to each other is the same as the cost of $a \axis a' \axis x \axis x' \axis b \axis b'$, which is $4$. However, another axis has cost $4$ for Minimum Swaps: $x \axis a \axis a' \axis x' \axis b \axis b'$. On this axis, $a$ is between the clones $x$ and $x'$ but $a$ is not approved in the ballot $\{b,b',x,x'\}$, containing $x$ and $x'$. Thus, Minimum Flips also fails clone-proximity.
\end{proof}

Inspired by axioms from voting theory \citep{tideman1987independence}, we could require that removing or adding a clone to the profile does not change the result. More precisely, if we remove a clone from a profile, the restriction of any optimal axis should remain optimal, and adding a clone to a profile should not modify the relative order of the other candidates on any optimal axis.  
To formally define this, we need some notation.
For a profile $P$ defined on a set $C$ of candidates, we denote by $P_{C'}$ the restriction of $P$ to a subset of candidates $C' \subseteq C$. We also denote $P_{-c}$ the restriction of the profile to $C \setminus \{c\}$ where $c \in C$ is a given candidate. Similarly, we define $\axis_{C'}$ and $\axis_{-c}$. We can now state the axiom:

\property{Resistance to cloning}{
A rule $f$ is \emph{resistant to cloning} if for every profile $P$ in which $a, a' \in C$ are clones, (1) for all axes ${\axis} \in f(P)$, we have $\axis_{-a} \in f(P_{-a})$ and (2) for all axes $\axis^* \in f(P_{-a})$, there is an axis ${\axis} \in f(P)$ with $\axis_{-a} = \axis^*$.
}

Among the rules studied in this paper, only VD is resistant to cloning.

\begin{restatable}{proposition}{resistancetocloning} \label{thm:clone_resistance}
    Resistance to cloning is satisfied by VD, but not by MF, BC, MS, and FT.
\end{restatable}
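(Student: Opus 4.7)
The plan is to prove that $\VD$ satisfies resistance to cloning through a clean structural argument, and then to exhibit explicit counterexample profiles for each of $\MF$, $\BC$, $\MS$, and $\FT$.

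For $\VD$, the idea is to analyze two natural operations: \emph{extension} of an axis $\axis^*$ on $C \setminus \{a\}$ to an axis $\axis$ on $C$ by inserting $a$ immediately next to $a'$, and \emph{restriction} of an axis $\axis$ on $C$ to $\axis_{-a}$. The first claim is that extension preserves the $\VD$ cost ballot-by-ballot: if $a, a' \in A$, then $A$ is an interval of the extended axis iff $A \setminus \{a\}$ is an interval of $\axis^*$ (because $a$ is adjacent to $a'$ in $\axis$); if $a, a' \notin A$, then inserting $a$ next to $a'$ creates no new interferer for $A$ (both $a$ and $a'$ are interfering or neither is). The second claim is that restriction can only weakly decrease the $\VD$ cost: whenever $A$ is an interval of $\axis$, the restricted ballot is an interval of $\axis_{-a}$, so the number of non-interval ballots cannot increase. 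Together these give $\text{OPT}_{\VD}(P) = \text{OPT}_{\VD}(P_{-a})$, and both clauses of the axiom follow immediately.

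For $\BC$, $\MS$, and $\FT$, the strategy is to exhibit a profile in which some axis $\axis^* \in f(P_{-a})$ has no extension $\axis \in f(P)$ with $\axis_{-a} = \axis^*$. The 5-candidate profile $P = (\{a, a', b, c\}, \{a, a', d\}, \{b, d\})$ with clones $a, a'$ should suffice for all three: the $f$-optimum on $P$ is attained by axes like $c a a' b d$, while axes such as $d a' b c$ and $b d a' c$ are $f$-optimal on $P_{-a}$ and yet every re-insertion of $a$ strictly worsens the cost in $P$, because any insertion either multiplies the interference contribution of $a'$ on the non-interval ballot $\{b, d\}$ or destroys the intervalness of a clone ballot.

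For $\MF$, the same profile fails as a counterexample, because $\MF$ can cheaply remove an endpoint of $\{b,d\}$. Instead, I would use the 6-candidate profile $P = (\{a, a', b_1\}, \{a, a', c_1\}, \{b_1, b_2, c_1, c_2\})$: the axis $b_2 b_1 a' c_1 c_2$ achieves $\MF$-cost $1$ in $P_{-a}$ by adding $a'$ to $\{b_1, b_2, c_1, c_2\}$, but inserting $a$ next to $a'$ raises this to $2$ since both $a$ and $a'$ become interferers and no single removal makes the ballot an interval, while any other insertion position breaks one of the clone ballots; meanwhile $\text{OPT}_{\MF}(P) = 1$ is witnessed by $a a' b_1 b_2 c_1 c_2$. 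The main obstacle is this $\MF$ case: $\MF$'s flexibility (it can both add and remove approvals) defeats the simpler counterexamples that work for the other rules, and verifying the counterexample requires a short but careful case analysis over all five possible insertion positions of $a$ into $b_2 b_1 a' c_1 c_2$.
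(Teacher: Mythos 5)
Your treatment of $\VD$ is correct and is essentially the paper's own argument: both clauses follow from the two facts that restricting an axis can only weakly decrease the number of non-interval ballots, and that inserting $a$ adjacent to its clone $a'$ preserves intervalness ballot-by-ballot. Your $\MF$ counterexample also checks out (there are six, not five, insertion positions for $a$ into $b_2 b_1 a' c_1 c_2$, but all of them give cost at least $2$ while $a a' b_1 b_2 c_1 c_2$ attains the optimum of $1$), and your five-candidate profile $P = (\{a,a',b,c\}, \{a,a',d\}, \{b,d\})$ does refute resistance to cloning for $\BC$ and $\MS$: the optimum of $P$ is $1$ (witnessed by $c \axis a \axis a' \axis b \axis d$), the axis $d \axis a' \axis b \axis c$ is optimal for $P_{-a}$ with cost $1$, and every insertion of $a$ into it yields cost at least $2$. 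These examples differ from the paper's (which violate clause (1) for $\BC$ and $\MF$ and clause (2) for $\MS$ and $\FT$), but they are valid.

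The genuine gap is the $\FT$ case: your profile does not violate resistance to cloning for $\FT$. The reason is that $\cost_{\FT}(P, c \axis a \axis a' \axis b \axis d) = 2$, not $1$, because the interfering candidate $b$ in the ballot $\{a,a',d\}$ separates two approved candidates from one and so contributes $2 \times 1 = 2$ forbidden triples; one can check that $2$ is in fact the $\FT$-optimum of $P$. But then the extension $d \axis a \axis a' \axis b \axis c$ of $d \axis a' \axis b \axis c$ also has cost $2$ (the ballot $\{b,d\}$ contributes $1\cdot 1 + 1\cdot 1 = 2$ and everything else is an interval), so it is $\FT$-optimal and clause (2) is not violated; the same happens for the other $\FT$-optimal axes of $P_{-a}$, and the $\FT$-optimal axes of $P$ all restrict to optimal axes of $P_{-a}$. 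Your informal justification (``every re-insertion of $a$ strictly worsens the cost'') conflates worsening relative to the cost on $P_{-a}$ with suboptimality on $P$: for $\FT$ the entire cost landscape shifts up, so the doubled interference of the clone pair on $\{b,d\}$ is still optimal. You need a profile where doubling the clone pair's contribution is strictly beaten by relocating the pair; the paper uses $P = (3\times\{a,b\}, 3\times\{b,c\}, 1\times\{a,c,d\})$ with $a$ cloned into $\{a,a'\}$, where every extension of the optimal axis $a \axis b \axis c \axis d$ costs $4$ for $\FT$ while $d \axis a \axis a' \axis b \axis c$ costs $3$.
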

\begin{proof}
    We start by proving that $f = \VD$ satisfies resistance to cloning. We first check (1). Let ${\axis} \in f(P)$; we need to show that $\axis_{-a} \in f(P_{-a})$. Because all interval ballots of $P$ on $\axis$ will remain interval ballots of $P_{-a}$ on $\axis_{-a}$, we have $\cost_{\VD}(\axis_{-a},P_{-a}) \le \cost_{\VD}(\axis,P)$. Now, assume for a contradiction that $\axis_{-a} \notin f(P_{-a})$ and instead some axis ${\axis'} \in f(P_{-a})$ is optimal, with $\cost_{\VD}(\axis',P_{-a}) < \cost_{\VD}(\axis_{-a},P_{-a})$. Consider the axis $\axis'_{+a}$ obtained from $\axis'$ by placing $a$ next to $a'$. Because $a$ and $a'$ are clones, an approval ballot of $P$ is an interval of $\axis'_{+a}$ if and only if its restriction in $P_{-a}$ is an interval of $\axis'$. Thus, $\cost_{\VD}(\axis'_{+a},P) = \cost_{\VD}(\axis',P_{-a}) $. Combining all of this, we have:
    \begin{align*}
        \cost_{\VD}(\axis'_{+a},P) = \cost_{\VD}(\axis', P_{-a}) < \cost_{\VD}(\axis_{-a},P_{-a}) \le \cost_{\VD}(\axis,P)
    \end{align*}
    which contradicts the optimality of $\axis$ for $P$. Hence $\axis_{-a} \in f(P_{-a})$.

    Next, we check (2) using the same reasoning. Let ${\axis} \in f(P_{-a})$ and let $\axis_{+a}$ be the axis obtained from $\axis$ by placing $a$ next to $a'$. We will show that $\axis_{+a} \in f(P)$. Again, we have $\cost_{\VD}(\axis_{+a},P) = \cost_{\VD}(\axis,P_{-a})$. Now assume for a contradiction that there is ${\axis'} \in f(P)$ with a lower cost than $\axis_{+a}$, i.e., $\cost(\axis',P) < \cost(\axis_{+a},P)$. As explained above, we have $\cost_{\VD}(\axis'_{-a},P_{-a}) \le \cost_{\VD}(\axis',P)$. Combining these three inequalities gives $\cost_{\VD}(\axis'_{-a},P_{-a}) < \cost_{\VD}(\axis,P_{-a})$, which contradicts the optimality of $\axis$. Hence $\axis_{+a} \in f(P)$, as required.

To prove that BC does not satisfy resistance to cloning, let us consider the profile $P = (3 \times \{ b,a,a'\}, 4 \times \{ c, a, a'\}, 2 \times \{ b,c\})$. It is easy to check that the unique optimal axis (up to reversal, and up to permutation of $a$ and $a'$) is $b \axis c \axis a \axis a'$ with $\cost_{\BC}(P, \axis)$ = 3. Indeed, if $a$ and $a'$ are not next to each other, at least two types of ballot will not be interval of the axis, which will yield a cost of at least 5, and the axes on which $b$ and $c$ are the extremities a cost of at least 4. 
However, if we remove the candidate $a'$, the cost of $ \axis_{-a'} = bca$ is 3. It is hence no longer optimal, as the axis $\axis^* = b ac$ achieves a lower cost of 2. 

We use a very similar idea to prove that MF does not satisfy resistance to cloning. We consider the profile $P = (1 \times \{ b,d\}, 2 \times \{ b, a, a'\}, 2 \times \{ c, a, a'\}, 1 \times \{ a,e\}, 3 \times \{ b,c,d,e\})$. We can check that the axis ${\axis} = dbceaa'$ is optimal for MF with $\cost_{\MF}(P, \axis) = 4$. If $a$ and $a'$ are not next to the other on the axis, at least two of the ballot types $\{b,a,a'\}$, $\{c,a,a'\}$ and $\{b,c,d,e\}$ are not intervals, which yields a cost of at least 4.

Any axis of form (up to reversal) $\{ d,b\} \axis \{ a, a'\} \axis \{ c,e\}$ has a cost greater or equal than 6 because of ballots $\{b,c,d,e\}$. Any axis with one candidate on the left of $\{ a, a'\}$ and three candidate on the right of $\{ a, a'\}$ (up to reversal) has a cost of at least 4: the ballots $\{ b,c,d,e\}$ generates at least 3 flips, and at least one of the ballots $\{ b,d\}, \{ c,e \}$ is not an interval either. 

However, $\axis_{-a'} = dbcea$ is not optimal for $P_{-a'}$: $\cost_{\MF}(P_{-a'}, \axis_{-a'}) = 4$ (ballots $\{ b,a\}$ and $\{c,a\}$ are not intervals). The axis $\axis^* = dbace$ has a lower cost of 3 (as each copy of ballot $\{ d,b,c,e\}$ generates one flip and all other ballots are intervals of the axis). 

To prove that FT and MS do not satisfy resistance to cloning, let us consider the profile $P = (3 \times \{ a, b\}, 3 \times \{ b, c\}, 1 \times \{ a,c,d\})$, and let $f \in \{ \FT, \MS\}$. We have $f(P) = \{ \axis^1, \axis^2 \}$ with $\axis^1 = abcd$ and $\axis^2 = dabc$  (up to the reversed axes). Indeed, $\cost_{\FT}(P, \axis^i) = 2$  and $\cost_{\MS}(P, \axis^i) = 1$ for $i \in \{ 1,2\}$. These are the only axes on which both $\{ a,b\}$ and $\{ b,c\}$ are intervals -- in other words, the cost of any other axis will be at least 3. Let us now consider a profile $P' = (3 \times \{ a', a, b\}, 3 \times \{ b, c\}, 1 \times \{ a', a,c,d\})$. We note that it is the profile $P$ to which we have added a candidate $a'$, clone of $a$. Under resistance to cloning, there should be an axis ${\axis} \in f(P')$ such that $\axis_{-a} = \axis^1$. Among all possible axes generalizing $\axis^1$, the best one for $\MF$ and $\FT$ (up to the permutation of $a$ and $a'$) is ${\axis} = aa'bcd$, with a cost of 4 for FT and 2 for MS. However, we can find an axis $\axis^* = daa'bc$ with cost of 3 for FT and 1 for MS. Hence, there is no ${\axis} \in f(P')$ such that $\axis_{-a} = \axis^1$. Thus, $\FT$ and $\MS$ do not satisfy resistance to cloning. 
\begin{figure}[!t]
    \centering
\begin{tikzpicture}
    \begin{scope}[xshift=0cm]
        \axisheader{0}{a,\color{gray}a',b,c,d}
        
        \multiplicity{1}{3}
        \interval{1}{1}{3}

        \multiplicity{2}{3}
        \interval{2}{3}{4}

        \multiplicity{3}{1}
        \interval{3}{1}{2}
        \interfering{3}{3}
        \interval{3}{4}{5}
    \end{scope}
    
    \begin{scope}[xshift=6cm]
        \axisheader{0}{d,a,\color{gray}a',b,c}
        
        \multiplicity{1}{3}
        \interval{1}{2}{4}

        \multiplicity{2}{3}
        \interval{2}{4}{5}

        \multiplicity{3}{1}
        \interval{3}{1}{3}
        \interfering{3}{4}
        \interval{3}{5}{5}
        
    \end{scope}
\end{tikzpicture}
    \caption{Profile $P$ with the axes $aa'bcd$ and $daa'bc$ in the proof of \Cref{thm:clone_resistance}. Red circles indicate interfering candidates, and the clone $a'$ of $a$ is grayed out.}
    \label{tab:ex_clones}
\end{figure}
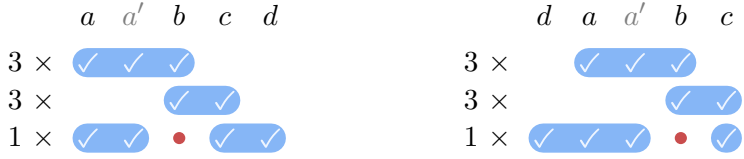
\end{proof}

These two clone axioms are quite strong: each excludes all but one of our rules.
Indeed, we now show that if a scoring rule satisfies neutrality and consistency with linearity, then clone-proximity and resistance to cloning are actually incompatible.%
\footnote{There are rules that are not scoring rules which satisfy both clone resistance and proximity. For example, consider the rule that takes a profile, identifies all maximal clone sets, and replaces each by a single representative candidate. Then apply a rule to the collapsed profile and de-replace the representatives.}
\begin{restatable}{theorem}{impossibility}
    No neutral scoring rule satisfies resistance to cloning, clone proximity, and consistency with linearity.
\end{restatable}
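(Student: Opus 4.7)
The plan is a proof by contradiction. Suppose $f$ is a neutral scoring rule satisfying consistency with linearity, resistance to cloning, and clone-proximity. By \Cref{lem:costfunction}, $f$ is determined by a reversal-symmetric function $g$ on approval vectors with $g(v)=0$ iff $v$ is an interval. The strategy is to exhibit a profile with clones on which the three axioms over-constrain $g$.

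My main construction uses the cyclic 4-candidate profile $P = (\{a,b\}, \{b,c\}, \{c,d\}, \{a,d\})$, which is invariant under the cyclic relabeling $a\mapsto b\mapsto c\mapsto d\mapsto a$. Neutrality forces the four cyclic axes $abcd$, $bcda$, $cdab$, $dabc$ to share a common cost on $P$, and a direct comparison with the two other $D_4$-orbits of axes shows that the cyclic orbit is co-optimal, so $\{abcd, bcda, cdab, dabc\} \subseteq f(P)$ (a small case analysis handles cost functions $g$ for which an alternative orbit is instead optimal; an analogous argument then goes through on that orbit).

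Cloning $a$ to $a'$ yields a profile $P'$ whose two clone-containing ballots $\{a,a',b\}$ and $\{a,a',d\}$ share only $\{a,a'\}$. Clone-proximity therefore forces the clones to be adjacent in every axis of $f(P')$, and resistance~(2) demands a clone-adjacent extension in $f(P')$ for each of the four cyclic axes of $f(P)$. A direct computation shows that the clone-adjacent extensions of $abcd$ and $bcda$ have total cost $g(v_1)$ with $v_1 = (1,1,0,0,1)$, while those of $cdab$ and $dabc$ have cost $g(v_2)$ with $v_2 = (1,0,0,0,1)$. For all four pairs to lie in $f(P')$, we must have $g(v_1) = g(v_2)$.

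The main obstacle is to derive a \emph{strict} inequality between $g(v_1)$ and $g(v_2)$ from a companion construction, producing the contradiction. My plan is to use a second profile in which a single ``witness'' ballot has a non-interval approval vector of exactly one of the two targeted shapes, padded with further ballots so that neutrality and consistency with linearity pin down the optimal axis uniquely (up to reversal) and clone-proximity excludes the axis that would realize the other shape. The hardest part will be designing this companion profile so that the comparison depends only on $g(v_1)$ and $g(v_2)$ and not on any unknown values of $g$ on other approval vectors; I expect this to reduce to a careful choice of ``padding'' ballots, each already an interval on the candidate optimal axis, chosen so that any competing axis violates at least one of them by a margin governed precisely by one of the target vectors.
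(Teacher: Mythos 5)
There is a genuine gap, and it sits exactly where you flagged ``the hardest part.'' Your cycle profile $P=(\{a,b\},\{b,c\},\{c,d\},\{a,d\})$ does yield, via clone-proximity and resistance to cloning, the equality $g((1,1,0,0,1)) = g((1,0,0,0,1))$ in the case where the cyclic orbit is optimal. But the companion construction that is supposed to turn this into a strict inequality is not just unspecified --- the mechanism you sketch for it cannot work. The two target vectors have different Hamming weights (three approvals versus two), so no single witness ballot can realize shape $v_1$ on one axis and shape $v_2$ on another: a ballot's size is fixed. The only way a profile can force a comparison between $g(v_1)$ and $g(v_2)$ is through two distinct ballots, and then the cost comparison between the two candidate axes inevitably drags in the costs of the remaining ballots on those axes --- precisely the dependence on ``unknown values of $g$'' you were hoping to engineer away. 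This is why the paper starts from a path-shaped profile $(q\times\{b,c\},\, q\times\{c,d\},\, 1\times\{a,b,d\})$ rather than a cycle: cloning $b$ there forces $g((1,1,1,0,1))=g((1,1,0,1,1))$, an equality between two vectors of the \emph{same} weight, and cloning $a$ instead produces a single ballot $\{a,a',b,d\}$ that realizes the first shape on the forced-optimal axis $aa'bcd$ and the second on $abcda'$; the derived equality then makes $abcda'$ optimal too, violating clone-proximity. Your equality admits no such one-ballot exploit, so the contradiction step would have to be rebuilt from scratch.

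A secondary problem is the starting point itself: the claim that the cyclic orbit is co-optimal for $P$ is false in general. The three orbits of axes have costs $g((1,0,0,1))$, $2g((1,0,1,0))$, and their sum, so the cyclic orbit is optimal if and only if $g((1,0,0,1))\le 2g((1,0,1,0))$, which \Cref{lem:costfunction} does not guarantee. In the opposite case the forced orbit is $\{abdc, adbc, bacd, dacb\}$, and cloning $a$ there yields only the relation $g((0,0,1,0,1))+g((1,1,0,1,0)) = g((1,0,0,1,0))+g((0,1,1,0,1))$ between sums of costs of four distinct vectors --- not an ``analogous'' single equality --- so it is not clear your argument survives in that branch. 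Both issues are cured by breaking the symmetry of the starting profile as the paper does, using a large multiplicity $q$ on two ``path'' ballots to pin the optimal axes down exactly and to make the forced equality one between same-weight vectors.
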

\begin{proof}
Let $f$ be a scoring rule satisfying all four axioms, and $\cost_f$ its cost function.
As proven in \Cref{lem:costfunction}, by neutrality there is a function $g_f:\{0,1\}^m \rightarrow \mathbb R_{\ge 0}$ such that %
$\cost_f(A, \axis) = g_f(x_{A,\axis}) = g_f(x_{A,\cev\axis})$,
where $x_{A,\axis}$ is the approval vector of $A$ and $x_{A,\cev\axis}$ is the reversed vector.

Let $y$ be the minimal cost over all approval vectors that are not intervals. In particular,  $y \le  g_f((1,0,1,0)) = g_f((0,1,0,1))$ and $y \le g_f((1,0,0,1))$. By \Cref{lem:costfunction} (using consistency with linearity), $y > 0$.
Moreover, let $y' = g_f((1,0,1,1)) = g_f((1,1,0,1))$.
Let $q \in \mathbb N$ with $q > y'/y$ and consider the profile $P = (q \times \{b,c\}, q \times \{c,d\}, 1 \times \{a,b,d\})$. 
For ${\axis} \in \{ \axis^1, \axis^2\}$ with $\axis ^1 = a b c d$ and $\axis ^2 = b c d a$, we have $\cost_f(\{a,b,d\}, \axis) = y'$.
All other axes break one of the pairs $\{ b,c\}$, $\{ c,d\}$, thus ensuring a cost of at least $q\cdot y > y'$. Therefore, ${\axis_1}, {\axis_2} \in f(P)$.  

Consider now the profile $P'$ in which we add a clone $b'$ of $b$: $ P' = (q \times \{b,b',c\}, q \times \{c,d\}, 1 \times \{a,b,b',d\})$. By clone-proximity, 
$b$ and $b'$ are next to each other on every ${\axis} \in f(P')$. 
By resistance to cloning, there exists $\axis^3$ (resp. $\axis^4$) in $f(P')$ extending $\axis^1$ (resp. $\axis^2$). Combining this with neutrality, $f(P')$ contains $\axis ^3= a b b' c d$ and $\axis ^4 = b b' c d a$,
which thus must have the same cost.
Since the ballots $\{b,b',c\}$ and $\{c,d\}$ are intervals of both of these axes and the rule is consistent with linearity, they contribute a cost of $0$ and thus the cost difference of the two axes only depends on the remaining ballot $\{a,b,b',d\}$. This implies $\cost_f(\{a,b,b',d\},\axis^3) = \cost_f(\{a,b,b',d\},\axis^4)$, i.e., $g_f((1,1,1,0,1)) = g_f((1,1,0,1,1))$. 

Now, consider the profile $P''$ which is a copy of $P$ but with a clone $a'$ of $a$:  $ P'' = (q \times \{b,c\}, q \times \{c,d\}, 1 \times \{a,a',b,d\})$. 
Using the same arguments as in the case of $P'$ yields two optimal axes $\axis^5 = a a' b c d$ and $\axis^6 =  b c d a a'$.
However, let us now compare $\axis^5$ to $\axis^7 = a  b c d a'$. The ballots $\{b,c\}$ and $\{c,d\}$ are intervals of both axes, and the cost of $\{a,a',b,d\}$ is the same on both, as we already showed that $g_f((1,1,1,0,1)) = g_f((1,1,0,1,1))$. Thus, $\axis^7$ is also an optimal axis, which is in contradiction with clone-proximity, since $a$ and $a'$ are not next to each other.
\end{proof}

We can show that resistance to cloning and ballot monotonicity in fact characterize VD among scoring rules. This not only distinguishes VD from the other introduced rules, but shows its normative appeal among the entire class of scoring rules. 
The full proof is in \Cref{sec:appCharacterization}, where we also show that the axioms are logically independent, assuming neutrality.
\begin{restatable}{theorem}{vdcharac}
	\label{thm:vdcharac}
    Let $m \ge 6$, and let $f$ be a neutral scoring rule. Then $f$ satisfies consistency with linearity, ballot monotonicity, and resistance to cloning if and only if it is VD.
\end{restatable}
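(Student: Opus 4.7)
The plan is to reduce the characterization to pinning down the cost function of $f$. By \Cref{lem:costfunction}, neutrality and consistency with linearity imply $\cost_f(A, \axis) = g(x_{A, \axis})$ for some function $g$ with $g(x) = 0$ iff $x$ is an interval vector and $g(x) > 0$ otherwise. Since scaling $\cost_f$ by a positive constant preserves the rule, $g$ may be normalized so that $\min\{g(x) : x \text{ non-interval}\} = 1$. It then suffices to show $g(x) = 1$ for every non-interval vector $x$ of length $m$, since in that case $\cost_f(P, \axis)$ equals the number of non-interval ballots of $P$ on $\axis$, making $f = \VD$.

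The first main step uses resistance to cloning to establish that $g$ is invariant under duplicating a coordinate, i.e., $g(x_1, \ldots, x_m) = g(x_1, \ldots, x_i, x_i, \ldots, x_m)$ for each position $i$. Given a non-interval vector $x$, the idea is to exhibit a profile $P$ and an axis $\axis \in f(P)$ such that a specific ballot $A \in P$ has approval vector $x$ on $\axis$, and moreover $\axis$ is essentially uniquely optimal. Cloning the $i$-th candidate of $\axis$ yields a profile $P^+$, and resistance to cloning forces some extension of $\axis$ to lie in $f(P^+)$; by engineering $P$ so that placing the clone anywhere but next to its twin introduces penalties, the extension $\axis^+$ inserting the clone adjacent to its twin must belong to $f(P^+)$. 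Equating the scoring-rule cost of $\axis^+$ on $P^+$ with that of $\axis$ on $P$ (the interval contributions cancelling because $g$ vanishes on intervals) yields $g(x) = g(\mathrm{dup}_i(x))$.

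The second main step uses ballot monotonicity to compare $g$-values of non-interval vectors that do not merely differ by coordinate duplication. Given a profile $P$ with $\axis \in f(P)$ and a non-interval ballot $A$ having approval vector $x$ on $\axis$, replacing $A$ by its $\axis$-completion $A'$ preserves optimality, so $\axis \in f(P')$. Combining $\cost_f(P, \axis) \le \cost_f(P, \axis')$ with $\cost_f(P', \axis) \le \cost_f(P', \axis')$ yields $g(x_{A, \axis}) - g(x_{A, \axis'}) \ge g(x_{A', \axis}) - g(x_{A', \axis'})$. By choosing $\axis'$ so that $x_{A, \axis'}$ realizes a chosen second non-interval vector $y$ and so that $A'$ is also an interval of $\axis'$, this collapses to an inequality relating $g(x)$ and $g(y)$. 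Running the argument symmetrically on a suitably modified profile converts this inequality into an equality.

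Combining the duplication invariance of the first step with the between-pattern equalities of the second, any two non-interval vectors of length $m$ can be linked through a chain of duplications, their inverses, and monotonicity-based comparisons; the hypothesis $m \ge 6$ gives enough room to realize the intermediate vectors and profiles that these comparisons require. The main obstacle will be the profile engineering behind each link in this chain: constructing profiles on which the two axioms' optimality constraints are tight enough to force $g$-equalities rather than loose inequalities, and ensuring that the auxiliary axes used in the ballot-monotonicity step are themselves optimal for the comparison profile. Once $g$ is shown constant on non-intervals, $f = \VD$ follows immediately, and the converse direction (that VD satisfies the three axioms) has already been established in \Cref{thm:stability} together with \Cref{thm:clone_resistance} and the ballot-monotonicity result.
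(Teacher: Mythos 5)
Your high-level plan matches the paper's: reduce to the function $g$ on approval vectors via \Cref{lem:costfunction}, use resistance to cloning to make $g$ invariant under duplicating coordinates, and use ballot monotonicity to equate $g$ across patterns with the same parameters, then chain these relations together; your two main steps correspond (in reversed order) to the paper's first two lemmas. However, there is a genuine gap in the chaining: neither mechanism can cross the divide between \emph{spanning} vectors (those approving both extremes of the axis, equivalently $k_{\text{app}} + k_{\text{int}} = m$) and non-spanning ones. Coordinate duplication preserves the first and last entries of a vector, hence preserves spanning-ness. Your ballot-monotonicity comparison requires the $\axis$-completion $A'$ of $A$ to be an interval of both $\axis$ and $\axis'$; when $x_{A,\axis}$ is spanning we have $A' = C$ and you do obtain the one-sided inequality $g(x) \ge g(y)$ for a non-spanning $y$, but the ``symmetric'' application needed for the reverse inequality is geometrically impossible: if $A$ has a non-spanning pattern on one optimal axis and a spanning pattern on another, its completion with respect to the first axis has fewer than $m$ elements, yet it would have to be an interval of the second axis containing both of that axis's extremes, forcing it to be all of $C$, a contradiction. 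So your argument only yields that $g$ equals some constant $h$ on non-spanning vectors and some constant $h^* \ge h$ on spanning ones, which does not exclude scoring rules other than VD.

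Closing exactly this gap is where the paper spends most of its effort: two further cloning constructions (its Lemmas 5 and 6 in the appendix) establish $h^* \le h$ and $h^* \ge h$ by cloning a candidate sitting at an extreme of the axis and then deleting a different clone so as to move an approved candidate past the end; the second of these is also the only place where the hypothesis $m \ge 6$ is genuinely used, which your sketch does not identify. A smaller issue: the inequality you derive from combining $\cost_f(P,\axis) \le \cost_f(P,\axis')$ with $\cost_f(P',\axis) \le \cost_f(P',\axis')$ in fact requires $\axis'$ to be optimal for $P$ as well, so that the first relation holds with equality; the paper arranges this by starting from a fully symmetric profile in which all axes are tied. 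You flag this as an obstacle at the end, but it must be built into the construction rather than deferred.
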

\begin{proof}[Proof sketch.]
      Let $f$ be a scoring rule satisfying neutrality, consistency with linearity, resistance to cloning and ballot monotonicity. As shown in \Cref{sec:appNeutr}, $f$ is induced by a symmetric cost function $\cost$ with $\cost(A,\axis) = 0$ if and only if $A$ forms an interval in $\axis$. Further, $\cost$ only depends on the approval vector $x_{A,\axis}$, i.e., there exists a function $g:\{0,1\}^m \rightarrow \mathbb R_{\ge 0}$ such that $\cost(A,\axis) = g(x_{A,\axis})$ for all ballots $A$ and axis $\axis$.

    The steps of the proof are as follows:
    \begin{enumerate}
        \item Using ballot monotonicity, we show that there is a function $h$ such that for all $A$ and $\axis$ such that $A$ is not an interval of $\axis$, $\cost(A,\axis) = h(m, k_{\text{app}} ,k_{\text{int}})$, where $m$ is the number of candidates, $k_{\text{app}} =|A|$ is the number of approved candidates and $k_{\text{int}}$ is the number of interfering candidates.
        \item Using resistance to cloning, we show that for $A$ not interval of $\axis$, $\cost(A,\axis)$ only depends on the sum $k_{\text{app}}+k_{\text{int}}$, i.e., there is $h$ such that $\cost(A,\axis) = h(m,  k_{\text{app}}+k_{\text{int}})$.
        \item We show that for $A$ not interval of $\axis$, $\cost(A,\axis)$ can only take two values: $\cost(A,\axis) = h_m^*$ if $k_{\text{app}}+k_{\text{int}}=m$ and $\cost(A,\axis) = h_m$ otherwise.
        \item Finally, we show that $h^*_m = h_m$ and thus that the rule is VD. \qedhere
    \end{enumerate}
\end{proof}

If we drop ballot monotonicity, we can use similar ideas to show that resistance to cloning characterizes (under mild conditions) the class of  \emph{topological rules}. These are scoring rules with a function $h$ such that $\cost_f(A,\axis) = h(k)$, where $k$ is the number of contiguous holes that the axis $\axis$ creates in $A$ (\Cref{sec:appTopological}), such as the \emph{genus rule} with $h(k) = k$ that counts the total number of contiguous holes. For instance, on ${\axis} = abcde$, according to the genus rule, the cost induced by $\{a,e\}$ is 1, but the one induced by $\{a,c,e\}$ is 2.

\subsection{Heredity and Partition Consistency}

Resistance to cloning can be strengthened to \emph{heredity} \citep{tydrichova2023structural}, a kind of independence of irrelevant alternatives axiom. It states that if we remove \emph{any} candidate (not just a clone), the rule should return the original axes with that candidate omitted.

\property{Heredity}{
A rule $f$ satisfies \emph{heredity} if for every profile $P$ and every subset of candidates $C' \subseteq C$, we have that for each axis ${\axis} \in f(P)$, there exists $\axis^* \in f(P_{C'})$ such that $\axis_{C'} = \axis^*$.
}

However, an easy impossibility theorem shows that no reasonable axis rule can satisfy this axiom.

\begin{proposition}
    No axis rule satisfies heredity and consistency with linearity.
\end{proposition}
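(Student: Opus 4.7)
The plan is to exhibit one explicit profile on four candidates for which heredity combined with consistency with linearity forces $f(P) = \emptyset$, which contradicts the definition of an axis rule. I will use the ``4-cycle'' profile $P = (\{a,b\}, \{b,c\}, \{c,d\}, \{a,d\})$ on $C = \{a,b,c,d\}$. This profile is itself not linear, but the key feature is that dropping any single candidate from $C$ removes one edge of the cycle and yields a linear sub-profile whose consistent axis is determined up to reversal.

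The steps in order are as follows. First, compute $P_{\{a,b,c\}}$, $P_{\{a,c,d\}}$, and $P_{\{a,b,d\}}$, check that each is linear, and read off that their sets of consistent axes are $\{abc, cba\}$, $\{adc, cda\}$, and $\{bad, dab\}$ respectively; this is immediate since in each restriction two two-element ballots remain and the axis is forced by two adjacency requirements. Second, take any $\axis \in f(P)$ (which exists since axis rules are non-empty) and, using consistency with linearity to pin down $f$ on the three restricted profiles together with heredity to transport each restriction back to $\axis$, derive three betweenness constraints on $\axis$: $b$ lies between $a$ and $c$, $d$ lies between $a$ and $c$, and $a$ lies between $b$ and $d$.

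The only step that requires any thought, and thus the main obstacle, is the final combinatorial observation: the first two constraints force $a$ and $c$ to be the two outermost of the four candidates on $\axis$, but then $a$ cannot simultaneously lie strictly between $b$ and $d$, since both $b$ and $d$ sit strictly inside the $(a,c)$-interval. This contradiction shows that no $\axis$ can belong to $f(P)$, completing the proof.
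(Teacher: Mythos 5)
Your proof is correct. It follows the same overall strategy as the paper's — use heredity plus consistency with linearity to extract betweenness constraints from linear restrictions of a non-linear profile, then show these constraints are jointly unsatisfiable — but with a different witness profile and a different endgame. The paper uses the star profile $(\{a,b\},\{a,c\},\{a,d\})$ and needs only \emph{one} restriction: by pigeonhole, two of $b,c,d$ lie on the same side of $a$ in any $\axis\in f(P)$, so deleting the third puts $a$ at an extreme of the restricted axis, contradicting $f(P_{-d})=\{bac,cab\}$. You instead use the $4$-cycle $(\{a,b\},\{b,c\},\{c,d\},\{a,d\})$ and three restrictions, yielding the three betweenness constraints ``$b$ between $a,c$'', ``$d$ between $a,c$'', ``$a$ between $b,d$'', which visibly cannot coexist on four candidates. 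The paper's argument is slightly more economical (three ballots, one restriction), while yours is more symmetric and avoids the WLOG/pigeonhole step at the cost of checking three restrictions; both are equally elementary and both correctly handle the fact that restricted ballots degenerate only to (harmless) singletons rather than to empty sets.
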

\begin{proof}
	Let $f$ be an axis rule satisfying heredity and let $P = ( \{a,b\},\{a,c \},\{a,d\})$. Let ${\axis} \in f(P)$. In $\axis$, there must be at least two candidates on the same side of $a$ (as there are two sides and three candidates $b$, $c$, and $d$), without loss of generality $b$ and $c$. By heredity, if we remove $d$, in $f(P_{-d})$ there must be an axis where $a$ is in an extreme position. However by consistency with linearity, $f(P_{-d}) = \{bac, cab\}$, a contradiction.	
\end{proof}

Because of this impossibility, we cannot construct an axis by succesively adding candidates. However, if a profile can be decomposed into subprofiles, we can expect the optimal axes to be the various concatenations of the optimal axes for the subprofiles. Specifically, given a profile $P$, consider the co-approval equivalence relation $\sim$ on $C$ with $x \sim y$ whenever there is some ballot $A\in P$ with $x,y \in A$. We call the equivalence classes $C_1, \dots, C_k$ of $\sim$ the \emph{co-approval partition} of $P$. Consider for instance the following profile:
\vspace{-10pt}

\begin{minipage}{0.5\textwidth}
    \begin{align*}
        \\[7pt]
        5 & \times \{a,b,c\}, \\
        4 & \times \{c,d\}, \\
        3 & \times \{x,y\}, \\
        2 & \times \{w,x,y\}, \\
        1 & \times \{a,b,d\}, \\
        1 & \times \{a,c\}, \\
        1 & \times \{w,y,z\}. \\
    \end{align*}
    \end{minipage}
    \begin{minipage}{0.5\textwidth}
    \begin{tikzpicture}
        \renewcommand{\rowheight}{0.595cm}
        \axisheader{0}{a,b,c,d,w,x,y,z}
        
        \multiplicity{1}{5}
        \interval{1}{1}{3}
    
        \multiplicity{2}{4}
        \interval{2}{3}{4}
    
        \multiplicity{3}{3}
        \interval{3}{6}{7}
    
        \multiplicity{4}{2}
        \interval{4}{5}{7}
    
        \multiplicity{5}{1}
        \interval{5}{1}{2}
        \interval{5}{4}{4}
    
        \multiplicity{6}{1}
        \interval{6}{1}{1}
        \interval{6}{3}{3}
    
        \multiplicity{7}{1}
        \interval{7}{5}{5}
        \interval{7}{7}{8}
    \end{tikzpicture}
    \end{minipage}

The co-approval partition of this profile consists of $C_1 = \{a,b,c,d\}$ and $C_2 = \{w,x,y,z\}$, since every ballot forms a subset of one of these two classes. We  can partition $P$ into two subprofiles $P_1$ and $P_2$ defined on $C_1$ and $C_2$. We define a property called \emph{partition consistency}, which in this example says that if the optimal axis in $P_1$ is $abcd$ and in $P_2$ is $wxyz$, then the optimal axes for the complete profile $P$ should be $abcdwxyz$, $abcdzyxw$, $dcbawxyz$ and $dcbazyxw$ (up to reversal).

\property{Partition consistency}{
A rule $f$ satisfies \emph{partition consistency} if for every profile $P$ with co-approval partition $C_1, \dots, C_k$, we have that $\axis \in f(P)$ if and only if for each $j \in [1,k]$, the class $C_j$ is an interval of $\axis$ and the axis $\axis_{C_j}$ restricted to $C_j$ is a member of $f(P_{C_j})$, where $P_{C_j}$ is the profile obtained from $P$ by restricting to $C_j$.
}

From the computational point of view, this property helps to reduce computation time as it divides the task into several sub-profiles with a smaller number of candidates (formally, computing an optimal axis can be done in time that is fixed-parameter tractable with respect to the size of the largest set $C_j$ in the partition). Note also that partition consistency implies clearance, as a never-approved candidate $x$ forms a singleton equivalence class, so partition consistency implies that for each $\axis \in f(P)$, there cannot be co-approved candidates $a, a'$ with $a \axis x \axis a'$. Thus, since VD and MF fail clearance, they also fail partition consistency.

\begin{restatable}{proposition}{partitionconsistency} \label{thm:partition_consistency}
    Partition consistency is satisfied by BC, MS and FT, but not by VD and MF.
\end{restatable}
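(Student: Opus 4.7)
The plan splits into a cheap negative half and a more substantive positive half. For VD and MF the result is immediate from the observation stated just before the proposition: if $x$ is never approved then $\{x\}$ is a singleton class of the co-approval partition, so partition consistency forces $x$ to sit outside every other class-block and hence forbids $x$ from being interfering. Since Proposition~\ref{thm:clearance} already exhibits profiles where VD and MF place a never-approved candidate in an interfering position, both rules violate partition consistency.

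For BC, MS, and FT the approach is to show that their costs decompose additively along the co-approval partition whenever each class is an interval of $\axis$, and that any violation of the interval condition can be strictly improved. The key lemma is: if some class $C_j$ is not an interval of $\axis$, then some ballot $A \in P_{C_j}$ has an interfering candidate $x \notin C_j$. To prove it I would pick $a,b \in C_j$ and $x \notin C_j$ with $a \axis x \axis b$, and use the definition of the co-approval equivalence to build a chain $a = a_0, a_1, \dots, a_t = b$ where each consecutive pair lies in a common ballot $A_i \in P$. Because $a_0$ is left of $x$ and $a_t$ is right of $x$, some index $i$ has $a_i$ and $a_{i+1}$ straddling $x$ on $\axis$. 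The ballot $A_i$ is contained in $C_j$ (since $a_i \in C_j$ and $A_i$ is a co-approval subset), so $x \notin A_i$, and $x$ interferes with $A_i$ on $\axis$.

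Given the lemma, I would construct $\axis'$ by gathering each class $C_j$ into one contiguous block, ordered by the leftmost position of the class in $\axis$ and with each class's internal order preserved. For every ballot $A \subseteq C_j$, the interfering candidates of $A$ on $\axis'$ are exactly the interfering candidates of $A$ on $\axis$ that lie in $C_j$, and each such candidate has the same number of approved candidates of $A$ to its left and right as before. Since $\cost_{\BC}$, $\cost_{\MS}$, and $\cost_{\FT}$ are each determined by the multiset of interfering candidates with their left/right approved counts, we obtain $\cost_f(A, \axis') \le \cost_f(A, \axis)$; by the lemma, some ballot witnesses strict inequality, so $\cost_f(P,\axis') < \cost_f(P,\axis)$ and $\axis \notin f(P)$. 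This handles the interval half of the iff in both directions.

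Once every $C_j$ is an interval, the same observation about interfering candidates shows that no interference crosses classes, so we get the clean decomposition $\cost_f(P, \axis) = \sum_{j=1}^k \cost_f(P_{C_j}, \axis_{C_j})$. From here both directions of the iff follow immediately by minimizing each summand independently: if some $\axis_{C_j}$ were not in $f(P_{C_j})$, swapping it for an optimal internal order would strictly lower the sum, contradicting $\axis \in f(P)$; conversely, if each $\axis_{C_j} \in f(P_{C_j})$, then $\axis$ realizes the minimum of the sum (and, by the first half, no axis failing the interval condition can beat it). The main obstacle is the co-approval chain lemma, since the other steps are direct cost-decomposition checks; the subtlety is translating a purely combinatorial non-interval condition into an interfering candidate for a specific ballot, which requires walking the chain of pairwise co-approvals until one lands on a ballot that actually straddles $x$.
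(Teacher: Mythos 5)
Your proposal is correct and follows essentially the same route as the paper's proof: show that any axis on which some co-approval class fails to be an interval is strictly beaten by the axis that gathers each class into a contiguous block with internal order preserved, then observe that once every class is an interval the cost decomposes as $\cost_f(P,\axis)=\sum_j \cost_f(P_{C_j},\axis_{C_j})$ and minimize each summand independently; the negative half for VD and MF via the clearance counterexample (with $C_2=\{e\}$ a singleton class) is exactly the paper's argument.

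The one place you go beyond the paper is your chain lemma, and it is a genuine improvement in rigor. The paper asserts that for $a,a'\in C_j$ straddling some $b\notin C_j$ there is a single ballot $A^*$ approving both $a$ and $a'$; since the co-approval classes are really the classes of the \emph{transitive closure} of the pairwise co-approval relation, that assertion is not literally justified for arbitrary $a,a'$ in the same class. Your argument — walk a chain $a=a_0,\dots,a_t=b$ of pairwise co-approved candidates and locate a consecutive pair straddling the foreign candidate $x$, whose common ballot then witnesses the interference — supplies exactly the missing step, and the rest of your cost comparison (interfering candidates of a ballot on $\axis'$ are precisely those on $\axis$ lying inside the class, with unchanged left/right approved counts) matches the paper's reasoning.
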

\begin{proof}
    Let $f \in \{\BC, \MS, \FT\}$. We show that $f$ satisfies partition consistency. Let $P$ be a profile with co-approval partition $C_1, \dots, C_k$. Let $\axis$ be an axis on which we have $a \axis b \axis a'$ for some candidates $a, a' \in C_j$ and $b \not\in C_j$. We will show that $\axis\notin f(P)$.
    
    Define $\axis_{C_j}$  to be the restriction of $\axis$ to candidates from $C_j$ and let $\axis' = \axis_{C_1} \dots \axis_{C_k}$. Since every ballot contains only candidate from one of the $C_j$ and since the relative order of the candidates in each subaxis is preserved, we only removed interfering candidates for each ballot by moving from $\axis$ to $\axis'$. It follows that $\cost_f(A, \axis) \ge \cost_f(A, \axis')$ for all ballots.\footnote{Note that this  inequality is true for all rules defined in \Cref{sec:rules} (but the following strict version can fail for VD and MF). It is clear for VD. It is also clear for BC, MS and FT as we sum over interfering candidates and that this term in the sum only depends on the approved candidates. MF also satisfies this condition as the first term does not depend on interfering candidates, and the second term increases when we add interfering candidates.} Because $a,a' \in C_j$, there is some ballot $A^* \in P$ approving both of them. Also, $A^*$ only approves candidates from $C_j$ and thus does not approve $b \not\in C_j$, which is thus an interfering candidate for $A^*$. Moving $b$ away when moving from $\axis$ to $\axis'$ thus means, from the definitions of rule $f$, that $\cost_f(A^*, \axis) > \cost_f(A^*, \axis')$. This proves that $\axis \notin f(P)$. 
    
    Thus, in all optimal axes $\axis \in f(P)$, all the sets $C_j$ form intervals. Thus, we can write each such axis as $\axis = \axis_{\sigma_1} \dots \axis_{\sigma_k}$ for some permutation $(\sigma_1, \dots, \sigma_k)$ of $[1,k]$, where $\axis_j$ orders the candidates in $C_j$. For each $j$, let $P_{C_j}$ be the profile obtained from $P$ by restricting to $C_j$. From the definitions of $f$, for each ballot $A \in P_{C_j}$, we have $\cost_f(A, \axis) = \cost_f(A, \axis_j)$. Thus $\cost_f(P, \axis) = \sum_{j =1}^k \cost_f(P_{C_j}, \axis_j)$. From this we directly deduce that $\axis \in f(P)$ if and only if $\axis_j \in f(P_{C_j})$ for all $j \in [1,k]$. This proves that $f$ satisfies partition consistency.

    For $f \in \{\VD,\MF\}$ we can use the examples from the proof of \Cref{thm:clearance} with $C_1 = \{a,b,c,d\}$ and $C_2 = \{e\}$, $P_1 = P$ and $P_2 = \emptyset$ to show that $f$ does not satisfy partition consistency.
\end{proof}

\section{Experiments}  \label{sec:experiments}

In this section, we investigate the rules from \Cref{sec:rules} using an experimental analysis, based on synthetic and real datasets. While the rules are hard to compute, 
for $m$ up to about 12, we can find the best axes in reasonable time. We describe the algorithms we used in \Cref{sec:appExpe}, including a brute force approach (using pruning and heuristics) and ILP encodings.

Our main aims are (1) to compare our rules to each other, and (2) to compare our rules for approval profiles to two known rules for nearly single-peakedness for ranking profiles.

\subsection{Synthetic Data}
\label{sec:synthetic}

To better understand how different rules behave, 
we tested them on several synthetic data models (see \Cref{sec:appSynthetic} for detailed descriptions and results) which sample a linear profile on a ground truth axis and add random noise to it.
We then measured the distance of a rule's output to the ground truth.
Some of our rules are in fact the MLEs of these noise models, so as predicted they perform well in those cases. However some rules adapted better than others to different noise models. 
We observed that for all models, our rules tend to push the least approved candidates towards the extremes.

To compare approval-based and ranking-based rules, we introduce the \emph{noisy observation model}, inspired by random utility models such as the Thurstone--Mosteller model. Each candidate and voter $x \in C \cup V$ is associated with a position $p(x) \in \mathbb R$ on the line. Each voter $v$ estimates the position of each candidate $c$ under independent normal noise: $p_v(c) = p(c) + \mathcal N(0,\sigma)$ with $\sigma$ a parameter of the model. Voters approve (resp. rank) candidates based on their estimations. More precisely, the approval set of voter $v$ contains all candidates such that $|p(v)-p_v(c)| \le r$, where the approval radius $r$ is a parameter of the model. The ranking of $v$ is given by decreasing distances between $p(v)$ and $p_v(c)$. 

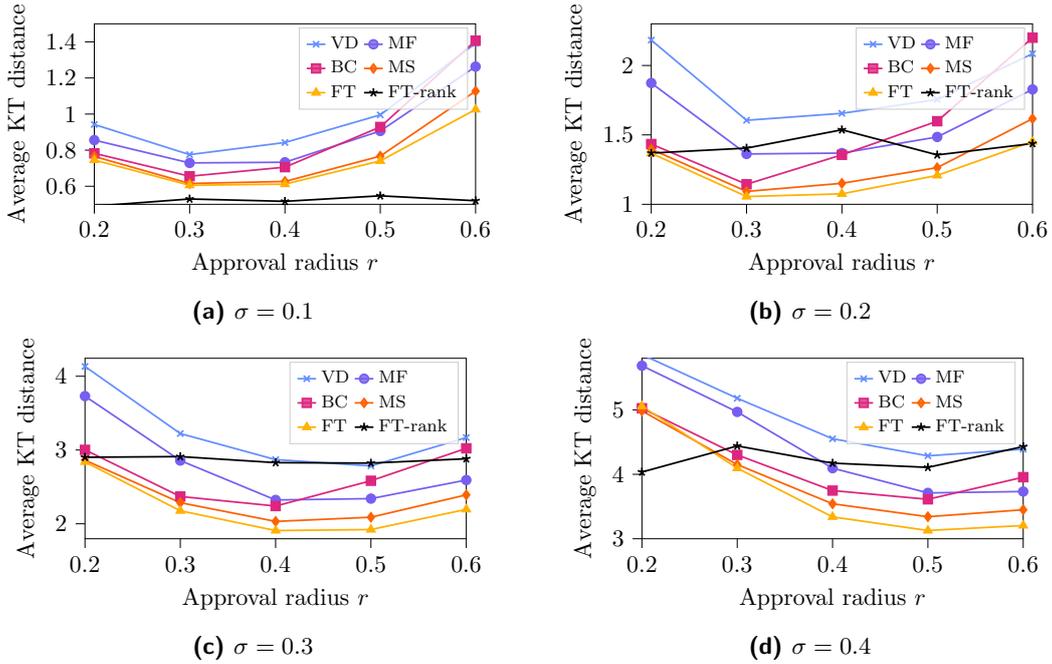
\begin{figure}[!t]
    \centering
    \begin{subfigure}{.45\textwidth}
        \centering
        \scalebox{0.82}{
\begin{tikzpicture}

\definecolor{crimson2143940}{RGB}{254, 97, 0}
\definecolor{darkgray176}{RGB}{176,176,176}
\definecolor{darkorange25512714}{RGB}{120, 94, 240}
\definecolor{forestgreen4416044}{RGB}{	220, 38, 127}
\definecolor{lightgray204}{RGB}{204,204,204}
\definecolor{mediumpurple148103189}{RGB}{255, 176, 0}
\definecolor{sienna1408675}{RGB}{0,0,0}
\definecolor{steelblue31119180}{RGB}{100, 143, 255}

\begin{axis}[
legend cell align={left},
legend style={fill opacity=0.8, draw opacity=1, text opacity=1, draw=lightgray204,
  legend columns=2,},
tick align=outside,
tick pos=left,
x grid style={darkgray176},
xlabel={Approval radius $r$},
xmin=0.2, xmax=0.6,
xtick style={color=black},
y grid style={darkgray176},
ylabel={Average KT distance},
xtick distance=0.1,
ymin=0.5, ymax=1.5,
ytick style={color=black},
height=4.5cm, width=7.7cm
]
\addplot [thick, steelblue31119180, mark=x, mark size=2, mark options={solid}]
table {%
0.2 0.942394047619048
0.3 0.775284523809524
0.4 0.842316666666667
0.5 0.995516666666667
0.6 1.39187591575092
};
\addlegendentry{\scriptsize VD}
\addplot [thick, darkorange25512714, mark=*, mark size=2, mark options={solid}]
table {%
0.2 0.855759523809524
0.3 0.729255555555555
0.4 0.733016666666667
0.5 0.906291666666667
0.6 1.26299186507937
};
\addlegendentry{\scriptsize MF}
\addplot [thick, forestgreen4416044, mark=square*, mark size=2, mark options={solid}]
table {%
0.2 0.7826
0.3 0.6555
0.4 0.705866666666667
0.5 0.928495238095238
0.6 1.40686686507937
};
\addlegendentry{\scriptsize BC}
\addplot [thick, crimson2143940, mark=diamond*, mark size=2, mark options={solid}]
table {%
0.2 0.76525
0.3 0.615333333333333
0.4 0.626533333333333
0.5 0.767
0.6 1.127225
};
\addlegendentry{\scriptsize MS}
\addplot [thick, mediumpurple148103189, mark=triangle*, mark size=2, mark options={solid}]
table {%
0.2 0.745416666666667
0.3 0.6055
0.4 0.611833333333333
0.5 0.7405
0.6 1.02558333333333
};
\addlegendentry{\scriptsize FT}
\addplot [thick, sienna1408675, mark=star, mark size=2, mark options={solid}]
table {%
0.2 0.49
0.3 0.5295
0.4 0.5165
0.5 0.547
0.6 0.519833333333333
};
\addlegendentry{\scriptsize FT-rank}
\end{axis}

\end{tikzpicture}}
        \caption{$\sigma=0.1$}
        \label{fig:noisy1}
    \end{subfigure}
    \begin{subfigure}{.45\textwidth}
        \centering
        \scalebox{0.82}{
\begin{tikzpicture}

\definecolor{crimson2143940}{RGB}{254, 97, 0}
\definecolor{darkgray176}{RGB}{176,176,176}
\definecolor{darkorange25512714}{RGB}{120, 94, 240}
\definecolor{forestgreen4416044}{RGB}{	220, 38, 127}
\definecolor{lightgray204}{RGB}{204,204,204}
\definecolor{mediumpurple148103189}{RGB}{255, 176, 0}
\definecolor{sienna1408675}{RGB}{0,0,0}
\definecolor{steelblue31119180}{RGB}{100, 143, 255}

\begin{axis}[
legend cell align={left},
legend style={fill opacity=0.8, draw opacity=1, text opacity=1, draw=lightgray204,
  legend columns=2,},
tick align=outside,
tick pos=left,
x grid style={darkgray176},
xlabel={Approval radius $r$},
xmin=0.2, xmax=0.6,
xtick style={color=black},
y grid style={darkgray176},
ylabel={Average KT distance},
xtick distance=0.1,
ymin=1, ymax=2.3,
ytick style={color=black},
height=4.5cm, width=7.7cm
]
\addplot [thick, steelblue31119180, mark=x, mark size=2, mark options={solid}]
table {%
0.2 2.18270833333333
0.3 1.60515634920635
0.4 1.65519047619048
0.5 1.75440714285714
0.6 2.08504041040659
};
\addlegendentry{\scriptsize VD}
\addplot [thick, darkorange25512714, mark=*, mark size=2, mark options={solid}]
table {%
0.2 1.87355238095238
0.3 1.36265
0.4 1.36896666666667
0.5 1.48513333333333
0.6 1.827592002442
};
\addlegendentry{\scriptsize MF}
\addplot [thick, forestgreen4416044, mark=square*, mark size=2, mark options={solid}]
table {%
0.2 1.43395833333333
0.3 1.14441666666667
0.4 1.35648333333333
0.5 1.59848571428571
0.6 2.20001666666667
};
\addlegendentry{\scriptsize BC}
\addplot [thick, crimson2143940, mark=diamond*, mark size=2, mark options={solid}]
table {%
0.2 1.3969
0.3 1.09275
0.4 1.15133333333333
0.5 1.26408333333333
0.6 1.61666666666667
};
\addlegendentry{\scriptsize MS}
\addplot [thick, mediumpurple148103189, mark=triangle*, mark size=2, mark options={solid}]
table {%
0.2 1.36766666666667
0.3 1.05583333333333
0.4 1.0755
0.5 1.20816666666667
0.6 1.45166666666667
};
\addlegendentry{\scriptsize FT}
\addplot [thick, sienna1408675, mark=star, mark size=2, mark options={solid}]
table {%
0.2 1.369
0.3 1.40483333333333
0.4 1.53666666666667
0.5 1.3559
0.6 1.43766666666667
};
\addlegendentry{\scriptsize FT-rank}
\end{axis}

\end{tikzpicture}}
        \caption{$\sigma=0.2$}
        \label{fig:noisy2}
    \end{subfigure}
    \begin{subfigure}{.45\textwidth}
        \centering
        \scalebox{0.82}{
\begin{tikzpicture}

\definecolor{crimson2143940}{RGB}{254, 97, 0}
\definecolor{darkgray176}{RGB}{176,176,176}
\definecolor{darkorange25512714}{RGB}{120, 94, 240}
\definecolor{forestgreen4416044}{RGB}{	220, 38, 127}
\definecolor{lightgray204}{RGB}{204,204,204}
\definecolor{mediumpurple148103189}{RGB}{255, 176, 0}
\definecolor{sienna1408675}{RGB}{0,0,0}
\definecolor{steelblue31119180}{RGB}{100, 143, 255}

\begin{axis}[
legend cell align={left},
legend style={fill opacity=0.8, draw opacity=1, text opacity=1, draw=lightgray204,
  legend columns=2,},
tick align=outside,
tick pos=left,
x grid style={darkgray176},
xlabel={Approval radius $r$},
xmin=0.2, xmax=0.6,
xtick style={color=black},
y grid style={darkgray176},
ylabel={Average KT distance},
xtick distance=0.1,
ymin=1.7971234539072, ymax=4.24374080128205,
ytick style={color=black},
height=4.5cm, width=7.7cm
]
\addplot [thick, steelblue31119180, mark=x, mark size=2, mark options={solid}]
table {%
0.2 4.13253092185592
0.3 3.22310266106443
0.4 2.86981507936508
0.5 2.78420952380952
0.6 3.16934484126984
};
\addlegendentry{\scriptsize VD}
\addplot [thick, darkorange25512714, mark=*, mark size=2, mark options={solid}]
table {%
0.2 3.73041666666667
0.3 2.85762619047619
0.4 2.32291666666667
0.5 2.34119761904762
0.6 2.59296984126984
};
\addlegendentry{\scriptsize MF}
\addplot [thick, forestgreen4416044, mark=square*, mark size=2, mark options={solid}]
table {%
0.2 3.00263333333333
0.3 2.37018333333333
0.4 2.24043333333333
0.5 2.58329285714286
0.6 3.02299761904762
};
\addlegendentry{\scriptsize BC}
\addplot [thick, crimson2143940, mark=diamond*, mark size=2, mark options={solid}]
table {%
0.2 2.86219444444444
0.3 2.28631666666667
0.4 2.03221666666667
0.5 2.0889
0.6 2.392
};
\addlegendentry{\scriptsize MS}
\addplot [thick, mediumpurple148103189, mark=triangle*, mark size=2, mark options={solid}]
table {%
0.2 2.83941666666667
0.3 2.1763
0.4 1.90833333333333
0.5 1.92191666666667
0.6 2.19566666666667
};
\addlegendentry{\scriptsize FT}
\addplot [thick, sienna1408675, mark=star, mark size=2, mark options={solid}]
table {%
0.2 2.90186666666667
0.3 2.91086666666667
0.4 2.82875
0.5 2.82266666666667
0.6 2.88133333333333
};
\addlegendentry{\scriptsize FT-rank}
\end{axis}

\end{tikzpicture}}
        \caption{$\sigma=0.3$}
        \label{fig:noisy3}
    \end{subfigure}
    \begin{subfigure}{.45\textwidth}
        \centering
        \scalebox{0.82}{
\begin{tikzpicture}

\definecolor{crimson2143940}{RGB}{254, 97, 0}
\definecolor{darkgray176}{RGB}{176,176,176}
\definecolor{darkorange25512714}{RGB}{120, 94, 240}
\definecolor{forestgreen4416044}{RGB}{	220, 38, 127}
\definecolor{lightgray204}{RGB}{204,204,204}
\definecolor{mediumpurple148103189}{RGB}{255, 176, 0}
\definecolor{sienna1408675}{RGB}{0,0,0}
\definecolor{steelblue31119180}{RGB}{100, 143, 255}

\begin{axis}[
legend cell align={left},
legend style={fill opacity=0.8, draw opacity=1, text opacity=1, draw=lightgray204,
  legend columns=2,},
tick align=outside,
tick pos=left,
x grid style={darkgray176},
xlabel={Approval radius $r$},
xmin=0.2, xmax=0.6,
xtick style={color=black},
y grid style={darkgray176},
ylabel={Average KT distance},
xtick distance=0.1,
ymin=3, ymax=5.8,
ytick style={color=black},
height=4.5cm, width=7.7cm
]
\addplot [thick, steelblue31119180, mark=x, mark size=2, mark options={solid}]
table {%
0.2 5.85683118548119
0.3 5.17958964646465
0.4 4.5515623015873
0.5 4.28635189255189
0.6 4.38900528083028
};
\addlegendentry{\scriptsize VD}
\addplot [thick, darkorange25512714, mark=*, mark size=2, mark options={solid}]
table {%
0.2 5.68447063492064
0.3 4.96767579365079
0.4 4.09504087301587
0.5 3.71126428571429
0.6 3.73289206349206
};
\addlegendentry{\scriptsize MF}
\addplot [thick, forestgreen4416044, mark=square*, mark size=2, mark options={solid}]
table {%
0.2 5.02454722222222
0.3 4.3005
0.4 3.74755119047619
0.5 3.61196666666667
0.6 3.95501666666667
};
\addlegendentry{\scriptsize BC}
\addplot [thick, crimson2143940, mark=diamond*, mark size=2, mark options={solid}]
table {%
0.2 4.98440714285714
0.3 4.15078571428571
0.4 3.54246666666667
0.5 3.34168333333333
0.6 3.44997619047619
};
\addlegendentry{\scriptsize MS}
\addplot [thick, mediumpurple148103189, mark=triangle*, mark size=2, mark options={solid}]
table {%
0.2 5.0554
0.3 4.09516666666667
0.4 3.339
0.5 3.128
0.6 3.20508333333333
};
\addlegendentry{\scriptsize FT}
\addplot [thick, sienna1408675, mark=star, mark size=2, mark options={solid}]
table {%
0.2 4.03533333333333
0.3 4.44
0.4 4.173
0.5 4.10746666666667
0.6 4.43016666666667
};
\addlegendentry{\scriptsize FT-rank}
\end{axis}

\end{tikzpicture}}
        \caption{$\sigma=0.4$}
        \label{fig:noisy4}
    \end{subfigure}
    \caption{Evolution of the average KT distance between the axes returned by the rules and the actual axes for $r \in [0.2,0.6]$, averaged over 1\,000 random samples.}
    \label{fig:expe-synth}
\end{figure}

The positions $p(c)$ of the candidates describe a ground truth axis ${\axis} = c_1c_2 \dots c_m$ such that $p(c_1) \le p(c_2) \le \dots \le p(c_m)$. \Cref{fig:expe-synth} shows the Kendall-tau (KT) swap distance between the axes output by different rule results and the ground truth for $\sigma \in \{0.1,0.2,0.3,0.4\}$ and $r \in \{0.2,0.3,0.4,0.5,0.6\}$. We conducted experiments with $m=7$ candidates, $n=100$ voters and $1\,000$ random profiles for each set of parameters. We compared our approval-based axis rules to two ranking-based axis rules: VD-rank and FT-rank, studied by \citet{escoffier2021nearlysp}, which are defined analogously to our VD and FT rules.%
\footnote{We did not study the rules based on local swaps or global swaps (which are somewhat similar to our MS and MF rules, respectively), since these are expensive to compute. Indeed, the known algorithm for computing the cost of a single given axis for these rules takes $O(m^3)$ time \citep[Theorem~6.21]{erdelyi2017nearlysp}.}

We find that VD-rank is always far from the true axes (at distance 7--8, too much to fit in the chart), and that for most values of $\sigma$ and $r$, approval rules actually perform better than FT-rank, returning axes with a lower average KT distance to the ground truth. This is surprising, 
as intuitively rankings provide more information than approvals.
We note however that FT-rank is better than approval methods when $r$ is very small or very large, so many approval sets are of size $0$ or $1$ (or $m$), and thus provide no information on candidates' proximity. FT-rank is also slightly better when $\sigma$ is small, but in this case all approval rules also have very good performance, with their average KT distances all below $1$.
We also observe that for all parameter values, the axes returned by the rules using more information (e.g., FT) are closer to the ground truth axes than those returned by the rules using less information (e.g., VD).

\subsection{The French Presidential Election}
\label{sec:experiments:france}

We now present the results of our rules on two political datasets: the 2017 and 2022 edition of the online experiment \emph{Voter Autrement} conducted during the French presidential elections~\citep{voterautrement2018}.
In parallel to the actual elections, the participants were invited to express their opinions on candidates using various voting methods, including approval and ranking-based ones. 
This allows us to compare our axis rules for both settings. 
After data cleaning, for the 2017 [2022] dataset, we obtained approval preferences of 20\,076 voters  [1\,379 voters] and preference rankings of 5\,796 voters [412 voters] over 11 candidates [12 candidates]. Details on how the data was gathered and the experiments conducted can be found in \Cref{sec:appFrench}, together with our detailed results. There, we also explain how we reweighted votes to counteract response bias and to match the distribution of official election results.

\begin{table*}[t]
	\centering
	\scalebox{0.9}{\begin{tabular}{cc c c c c c c c c c c} \toprule
			Institute & & & & & &$\axis$ & & & & &  \\ \midrule
			BVA & $\party{LO}$ & $\party{NPA}$ &  $\party{LFI}$ & $\party{PS}$ &  $\party{EM}$ & $\party{LR}$ & $\party{DLF}$ & $\party{FN}$ &$\party{UPR}$ & $\party{SP}$ &$\party{R}$\\
			Opinionway & $\party{LO}$ & $\party{NPA}$ &  $\party{LFI}$ & $\party{PS}$ &  $\party{EM}$ & $\party{LR}$ & $\party{DLF}$ & $\party{FN}$ &$\party{UPR}$& $\party{SP}$ &$\party{R}$ \\
			IFOP & $\party{LO}$ & $\party{NPA}$ &  $\party{LFI}$ & $\party{PS}$ &  $\party{EM}$&$\party{R}$ & $\party{LR}$ & $\party{DLF}$ & $\party{FN}$ &$\party{UPR}$ & $\party{SP}$ \\
			IPSOS & $\party{LO}$ & $\party{NPA}$ &  $\party{LFI}$ & $\party{PS}$ &  $\party{EM}$ &$\party{R}$& $\party{LR}$ & $\party{DLF}$ & $\party{FN}$ & $\party{SP}$&$\party{UPR}$  \\
			Harris Interactive & $\party{LO}$ & $\party{NPA}$ &  $\party{LFI}$ & $\party{PS}$ &  $\party{EM}$&$\party{R}$ & $\party{LR}$ & $\party{DLF}$ &$\party{SP}$ &$\party{UPR}$&  $\party{FN}$  \\
			Odoxa & $\party{LO}$ & $\party{NPA}$ &  $\party{LFI}$ & $\party{PS}$ &  $\party{EM}$&$\party{R}$ & $\party{LR}$ & $\party{DLF}$ &$\party{UPR}$& $\party{SP}$ & $\party{FN}$  \\
			Elabe & $\party{NPA}$ & $\party{LO}$ &  $\party{LFI}$ & $\party{PS}$ &  $\party{EM}$ &$\party{R}$ & $\party{LR}$&$\party{UPR}$ & $\party{DLF}$ & $\party{FN}$  & $\party{SP}$\\ \bottomrule
	\end{tabular}}
	\caption{Axes used by polling institutes for the 2017 French presidential election}
	\label{tab:appFrenchPolls2017}
\end{table*}

\begin{table*}[t]
    \centering
    \scalebox{0.9}{\begin{tabular}{cc c c c c c c c c c ccc} \toprule
        Rule & & & & & &$\axis$ & & & & &  & Min KT&Avg KT \\ \midrule
       VD  & $\party{R}$ & $\party{LO}$ & $\party{NPA}$ & $\party{LFI}$ & $\party{PS}$ & $\party{EM}$ & $\party{LR}$ & $\party{DLF}$ & $\party{FN}$& $\party{UPR}$ &$\party{SP}$ & 5 & 7.71  \\
       MF  & $\party{LO}$ & $\party{NPA}$ & $\party{LFI}$ & $\party{PS}$ & $\party{EM}$ & $\party{LR}$ & $\party{DLF}$ & $\party{FN}$& $\party{UPR}$ & $\party{R}$ &$\party{SP}$ & \textbf{1} & 4.43 \\
       BC  & $\party{LO}$ & $\party{NPA}$ & $\party{LFI}$ & $\party{PS}$ & $\party{EM}$ & $\party{LR}$ & $\party{DLF}$ & $\party{FN}$& $\party{R}$ &$\party{UPR}$ & $\party{SP}$ & 2 & 4.0\\
       MS  & $\party{LO}$ & $\party{NPA}$ & $\party{LFI}$ & $\party{PS}$ & $\party{EM}$ & $\party{LR}$ & $\party{DLF}$ & $\party{FN}$& $\party{R}$ &$\party{UPR}$ & $\party{SP}$ & 2 & 4.0 \\
       FT  & $\party{LO}$ & $\party{NPA}$ & $\party{PS}$ & $\party{LFI}$ &  $\party{EM}$& $\party{R}$ & $\party{LR}$ & $\party{DLF}$ & $\party{FN}$ &$\party{UPR}$ & $\party{SP}$ & \textbf{1} & \textbf{3.71}  \\ \midrule
       VD-rank & $\party{FN}$ & $\party{DLF}$ & $\party{R}$ &  $\party{LO}$ &  $\party{NPA}$ & $\party{LFI}$ & $\party{PS}$ & $\party{EM}$ &  $\party{SP}$ &   $\party{UPR}$ & $\party{LR}$ & 22 & 24.0  \\
       FT-rank & $\party{LO}$ & $\party{NPA}$ &$\party{R}$ &  $\party{LFI}$ & $\party{PS}$ &  $\party{EM}$& $\party{LR}$ & $\party{DLF}$ & $\party{FN}$ &$\party{UPR}$ & $\party{SP}$ & 3 & 5.71\\ \bottomrule
    \end{tabular}}
    \caption{Optimal axis of each rule for the 2017 French presidential election}
    \label{tab:appFrenchRes2017}
\end{table*}

In \Cref{tab:appFrenchRes2017}, we present the axes returned by all the tested rules, as well as the minimum and average Kendall-tau distance to the axes used by the main 7 polling institutes in France, which are displayed in \Cref{tab:appFrenchPolls2017}. (The corresponding results for the 2022 are shown in \Cref{sec:appFrench}.)
Regarding approval rules, we note that they all returned very similar axes.
They mostly differ on the position of less popular candidates (often placed at one of the extremes), and the relative order of candidates within their ideological subgroup (e.g., left-wing candidates). We computed the KT distance between the axes returned by our rules and the ones of the polling institutes. All rules return an axis that has a KT distance of less than $5$ to at least one polling institute axis (while the worst possible KT distance are 27 and 33 for $m = 11$ and $12$). For instance, the ordering obtained with FT is very similar to the one of the \emph{Ipsos institute}:
\begin{align*}
    \text{FT: }&  
\party{LO}, 
\party{NPA}, 
\party{PS}, 
\party{LFI}, 
\party{EM}, 
\party{R}, 
\party{LR}, 
\party{DLF}, 
\party{FN}, 
\party{UPR}, 
\party{SP} \\
    \text{Ipsos: }& 
\party{LO}, 
\party{NPA}, 
\party{LFI}, 
\party{PS}, 
\party{EM}, 
\party{R}, 
\party{LR}, 
\party{DLF}, 
\party{FN}, 
\party{SP}, 
\party{UPR}
\end{align*}

The KT distance between them is $2$. Small parties ($\party{LO}$, $\party{NPA}$, $\party{R}$, $\party{UPR}$, $\party{SP}$) are displayed using small font. Note that all but one of them are placed at the extremes.

Regarding ranking-based methods, the quality of the axes returned by FT-rank seems comparable to the axes returned by approval rules. Again, the VD-rank axes were much less convincing.
This corroborates other observations in the literature. For instance, \citet{sui2013voterdeletion} ran experiments on 2002 Irish General Election data and found that the VD-rank axis only fit 0.4\%--2.9\% of voters. \citet{escoffier2021nearlysp} ran experiments on a similar French presidential election dataset and also observed that the optimal axis found using VD-rank was very different from the orderings discussed in French media. In our experiments, the optimal VD-rank axes only cover less then 4\% of voters. For comparison, the approval version of VD returned axes covering more than 60\% of voters.

Finally, we observe that all rules violate the heredity property on our dataset. Removing even the least approved candidate could change the returned axis. However, these changes are marginal, like a less popular candidate being pushed towards an extreme or two left-wing candidates being inverted.

\subsection{Supreme Court of the United States}
\label{sec:SCOTUS}
Finally, we used our rules to obtain an ideological ordering of the 9 justices of the Supreme Court of the United States. The dataset is based on the opinions authored and joined by the justices, derived from the Supreme Court Database \citep{spaeth2023}. Each opinion, concurrence, or dissent becomes a ballot ``approving'' the justices that joined in it. The intuition is that justices joining the same opinion share an ideology so should be placed close together. See \Cref{sec:appScotus} for details and results. 

The problem of ordering the justices has been extensively studied; the standard method used by political analysts is the \emph{Martin-Quinn} (MQ) method, which uses a dynamic item response theory model \citep{martinquinn2002dynamic}.
A limitation of this model is that it can only use the vote data (whether a justice agreed with the majority or not), while our model can use more fine-grained data from which opinions were joined.
We compare the axes returned by our rules for 65 terms between 1946 and 2021, removing the years having more than 9 justices involved (e.g., if one is replaced mid-term). 

\begin{table*}[t]
    \centering
    \begin{tabular}{ccc} \toprule
      Rule   & Avg KT & Correct Median \\ \midrule
     VD    & 4.94 & 53.8 $\%$ \\
     MF    &  4.22&58.5 $\%$\\
     BC    & 3.68 & 56.9 $\%$\\
     MS    &  3.55&64.6 $\%$ \\
     FT    &  \textbf{3.43}& \textbf{66.2} $\%$\\\bottomrule
    \end{tabular}
    \caption{Average Kendall-tau distance to MQ axis, and $\%$ of time the axis has the same median candidate than the MQ axis, averaged over 65 terms.}
    \label{tab:appSCOTUS}
\end{table*}

\Cref{tab:appSCOTUS} shows the average KT distance of the axes returned by our rules to the Martin-Quinn axis. We see that these distances are on average quite low (noting that the worst possible KT distance for $m=9$ is 18). Moreover, we observe that the FT rule comes closest, while the VD rule is relatively far away. We also checked how often the axes computed by our rules agreed with the Martin-Quinn axis on which justice is placed in the median position. This is of particular interest since the median justice tends to be pivotal. All rules agreed with MQ on who was the median justice in more than half of the years. The FT rule agrees most frequently, choosing the same median justice in $66\%$ of terms. For future work, we see potential in adapting our rules to obtain methods perhaps more interesting than the Martin-Quinn method (as they will satisfy axiomatic properties).

\section{Future Work}  \label{sec:conclusion}

There are many promising directions for future work, such as considering methods that output other types of structures, like circular axes (in which the first and last candidates on the axis are next to each other) or embeddings into multiple dimensions, or introducing metric distances between candidates on the axis. An axiomatic approach could provide novel insights for all these problems. Moreover, the methods we present not only return a set of optimal axes, but also their ``cost'', which provides an indicator of how close a profiles is to be linear. One could try to analyze these methods as rules measuring the degrees of linearity of approval profiles. In addition, one can further investigate the interpretation of scoring rules as maximum likelihood estimators, mentioned at the end of \Cref{sec:preliminaries}. In particular, one could develop noise models that give rise to the rules that we study, or develop new natural noise models and (axiomatically) study the rules that they induce, similar to the work of \citet[Section 4.5]{tydrichova2023structural} for rankings.

Technically, several open questions remain. It would be interesting to obtain an axiomatic characterization of the class of scoring rules using the reinforcement axiom, though this is made challenging by the neutrality axiom being quite weak in our setting. 
It would also be useful to design polynomial-time computable rules that produce good outputs, to be able to deal with many candidates. Greedy versions of our rules are a natural starting point, but maybe better techniques exist.

 \section*{Acknowledgments}
This work was funded in part by the French government under management of Agence Nationale de la Recherche as part of the ``Investissements d’avenir'' program, reference ANR-19-P3IA-0001 (PRAIRIE 3IA Institute). The authors would like to thank Isabelle Lebon for her help in the early stages of this project, Jérôme Lang for his helpful comments, and the anonymous reviewers at IJCAI 2024 for helping improve the presentation.
 
\bibliographystyle{ACM-Reference-Format}

\begin{thebibliography}{41}

%
%

\ifx \showCODEN    \undefined \def \showCODEN     #1{\unskip}     \fi
\ifx \showDOI      \undefined \def \showDOI       #1{#1}\fi
\ifx \showISBNx    \undefined \def \showISBNx     #1{\unskip}     \fi
\ifx \showISBNxiii \undefined \def \showISBNxiii  #1{\unskip}     \fi
\ifx \showISSN     \undefined \def \showISSN      #1{\unskip}     \fi
\ifx \showLCCN     \undefined \def \showLCCN      #1{\unskip}     \fi
\ifx \shownote     \undefined \def \shownote      #1{#1}          \fi
\ifx \showarticletitle \undefined \def \showarticletitle #1{#1}   \fi
\ifx \showURL      \undefined \def \showURL       {\relax}        \fi
%
%
\providecommand\bibfield[2]{#2}
\providecommand\bibinfo[2]{#2}
\providecommand\natexlab[1]{#1}
\providecommand\showeprint[2][]{arXiv:#2}

\bibitem[Baujard and Lebon(2022)]%
        {france2017}
\bibfield{author}{\bibinfo{person}{Antoinette Baujard} {and}
  \bibinfo{person}{Isabelle Lebon}.} \bibinfo{year}{2022}\natexlab{}.
\newblock \showarticletitle{Retelling the story of the 2017 {French}
  presidential election: The contribution of approval voting}.
\newblock \bibinfo{journal}{\emph{Homo Oeconomicus}} (\bibinfo{year}{2022}),
  \bibinfo{pages}{1--22}.
\newblock
\urldef\tempurl%
\url{https://doi.org/10.1007/s41412-022-00134-7}
\showDOI{\tempurl}


\bibitem[Baxter(2003)]%
        {baxter2003statistics}
\bibfield{author}{\bibinfo{person}{Michael~J. Baxter}.}
  \bibinfo{year}{2003}\natexlab{}.
\newblock \bibinfo{booktitle}{\emph{Statistics in archaeology}}.
\newblock \bibinfo{publisher}{Arnold}.
\newblock


\bibitem[Black(1948)]%
        {black1948rationale}
\bibfield{author}{\bibinfo{person}{Duncan Black}.}
  \bibinfo{year}{1948}\natexlab{}.
\newblock \showarticletitle{On the rationale of group decision-making}.
\newblock \bibinfo{journal}{\emph{The Journal of Political Economy}}
  \bibinfo{volume}{56}, \bibinfo{number}{1} (\bibinfo{year}{1948}),
  \bibinfo{pages}{23--34}.
\newblock
\urldef\tempurl%
\url{https://doi.org/10.1086/256633}
\showDOI{\tempurl}


\bibitem[Booth(1975)]%
        {booth1975pq}
\bibfield{author}{\bibinfo{person}{Kellogg~Speed Booth}.}
  \bibinfo{year}{1975}\natexlab{}.
\newblock \emph{\bibinfo{title}{PQ-tree algorithms.}}
\newblock \bibinfo{thesistype}{Ph.\,D. Dissertation}.
  \bibinfo{school}{University of California, Berkeley, and Lawrence Livermore
  Laboratory}.
\newblock
\urldef\tempurl%
\url{https://dominik-peters.de/archive/booth1975.pdf}
\showURL{%
\tempurl}


\bibitem[Booth and Lueker(1976)]%
        {booth_lueker_C1P}
\bibfield{author}{\bibinfo{person}{Kellogg~S. Booth} {and}
  \bibinfo{person}{George~S. Lueker}.} \bibinfo{year}{1976}\natexlab{}.
\newblock \showarticletitle{Testing for the consecutive ones property, interval
  graphs, and graph planarity using {PQ}-tree algorithms}.
\newblock \bibinfo{journal}{\emph{J. Comput. System Sci.}}
  \bibinfo{volume}{13}, \bibinfo{number}{3} (\bibinfo{year}{1976}),
  \bibinfo{pages}{335--379}.
\newblock
\urldef\tempurl%
\url{https://doi.org/10.1016/S0022-0000(76)80045-1}
\showDOI{\tempurl}


\bibitem[Bouveret et~al\mbox{.}(2018)]%
        {voterautrement2018}
\bibfield{author}{\bibinfo{person}{Sylvain Bouveret}, \bibinfo{person}{Renaud
  Blanch}, \bibinfo{person}{Antoinette Baujard}, \bibinfo{person}{Fran\c{c}ois
  Durand}, \bibinfo{person}{Herrade Igersheim}, \bibinfo{person}{J\'er\^ome
  Lang}, \bibinfo{person}{Annick Laruelle}, \bibinfo{person}{Jean-Fran\c{c}ois
  Laslier}, \bibinfo{person}{Isabelle Lebon}, {and} \bibinfo{person}{Vincent
  Merlin}.} \bibinfo{year}{2018}\natexlab{}.
\newblock \bibinfo{title}{Voter Autrement 2017 - Online Experiment}.
\newblock \bibinfo{howpublished}{Dataset and companion article on Zenodo}.
\newblock
\urldef\tempurl%
\url{https://doi.org/10.5281/zenodo.1199545}
\showDOI{\tempurl}


\bibitem[Brandl et~al\mbox{.}(2016)]%
        {brandl2016consistent}
\bibfield{author}{\bibinfo{person}{Florian Brandl}, \bibinfo{person}{Felix
  Brandt}, {and} \bibinfo{person}{Hans~Georg Seedig}.}
  \bibinfo{year}{2016}\natexlab{}.
\newblock \showarticletitle{Consistent Probabilistic Social Choice}.
\newblock \bibinfo{journal}{\emph{Econometrica}} \bibinfo{volume}{84},
  \bibinfo{number}{5} (\bibinfo{year}{2016}), \bibinfo{pages}{1839--1880}.
\newblock
\urldef\tempurl%
\url{https://doi.org/10.3982/ECTA13337}
\showDOI{\tempurl}


\bibitem[Bredereck et~al\mbox{.}(2016)]%
        {bredereck2016there}
\bibfield{author}{\bibinfo{person}{Robert Bredereck}, \bibinfo{person}{Jiehua
  Chen}, {and} \bibinfo{person}{Gerhard~J. Woeginger}.}
  \bibinfo{year}{2016}\natexlab{}.
\newblock \showarticletitle{Are there any nicely structured preference profiles
  nearby?}
\newblock \bibinfo{journal}{\emph{Mathematical Social Sciences}}
  \bibinfo{volume}{79} (\bibinfo{year}{2016}), \bibinfo{pages}{61--73}.
\newblock
\urldef\tempurl%
\url{https://doi.org/10.1016/j.mathsocsci.2015.11.002}
\showDOI{\tempurl}


\bibitem[Ceron and Gonzalez(2021)]%
        {ceron2021approval}
\bibfield{author}{\bibinfo{person}{Federica Ceron} {and}
  \bibinfo{person}{St{\'e}phane Gonzalez}.} \bibinfo{year}{2021}\natexlab{}.
\newblock \showarticletitle{Approval voting without ballot restrictions}.
\newblock \bibinfo{journal}{\emph{Theoretical Economics}} \bibinfo{volume}{16},
  \bibinfo{number}{3} (\bibinfo{year}{2021}), \bibinfo{pages}{759--775}.
\newblock
\urldef\tempurl%
\url{https://doi.org/10.3982/te4087}
\showDOI{\tempurl}


\bibitem[Chauve et~al\mbox{.}(2009)]%
        {chauve2009gapped}
\bibfield{author}{\bibinfo{person}{Cedric Chauve}, \bibinfo{person}{J{\'a}n
  Ma{\v{n}}uch}, {and} \bibinfo{person}{Murray Patterson}.}
  \bibinfo{year}{2009}\natexlab{}.
\newblock \showarticletitle{On the gapped consecutive-ones property}.
\newblock \bibinfo{journal}{\emph{Electronic Notes in Discrete Mathematics}}
  \bibinfo{volume}{34} (\bibinfo{year}{2009}), \bibinfo{pages}{121--125}.
\newblock
\urldef\tempurl%
\url{https://doi.org/10.1016/j.endm.2009.07.020}
\showDOI{\tempurl}


\bibitem[Chen et~al\mbox{.}(2023)]%
        {chen2023efficient}
\bibfield{author}{\bibinfo{person}{Jiehua Chen}, \bibinfo{person}{Christian
  Hatschka}, {and} \bibinfo{person}{Sofia Simola}.}
  \bibinfo{year}{2023}\natexlab{}.
\newblock \showarticletitle{Efficient algorithms for {Monroe} and {CC} rules in
  multi-winner elections with (nearly) structured preferences}. In
  \bibinfo{booktitle}{\emph{Proceedings of the 26th European Conference on
  Artificial Intelligence (ECAI)}}. \bibinfo{pages}{397--404}.
\newblock
\urldef\tempurl%
\url{https://doi.org/10.3233/FAIA230296}
\showDOI{\tempurl}


\bibitem[Conitzer et~al\mbox{.}(2009)]%
        {conitzer2009preference}
\bibfield{author}{\bibinfo{person}{Vincent Conitzer}, \bibinfo{person}{Matthew
  Rognlie}, {and} \bibinfo{person}{Lirong Xia}.}
  \bibinfo{year}{2009}\natexlab{}.
\newblock \showarticletitle{Preference functions that score rankings and
  maximum likelihood estimation}. In \bibinfo{booktitle}{\emph{Proceedings of
  the 21st International Joint Conference on Artificial Intelligence (IJCAI)}}.
  \bibinfo{pages}{109--115}.
\newblock
\urldef\tempurl%
\url{https://www.cs.cmu.edu/~conitzer/preferenceIJCAI09.pdf}
\showURL{%
\tempurl}


\bibitem[Dietrich and List(2010)]%
        {dietrich2015}
\bibfield{author}{\bibinfo{person}{Franz Dietrich} {and}
  \bibinfo{person}{Christian List}.} \bibinfo{year}{2010}\natexlab{}.
\newblock \showarticletitle{Majority voting on restricted domains}.
\newblock \bibinfo{journal}{\emph{Journal of Economic Theory}}
  \bibinfo{volume}{145}, \bibinfo{number}{2} (\bibinfo{year}{2010}),
  \bibinfo{pages}{512--543}.
\newblock
\urldef\tempurl%
\url{https://doi.org/10.1016/j.jet.2010.01.003}
\showDOI{\tempurl}


\bibitem[Dom(2009)]%
        {consecutiveones}
\bibfield{author}{\bibinfo{person}{Michael Dom}.}
  \bibinfo{year}{2009}\natexlab{}.
\newblock \showarticletitle{Algorithmic aspects of the consecutive-ones
  property}.
\newblock \bibinfo{journal}{\emph{Bulletin of the European Association for
  Theoretical Computer Science}}  \bibinfo{volume}{98} (\bibinfo{year}{2009}),
  \bibinfo{pages}{27--59}.
\newblock
\urldef\tempurl%
\url{http://www.mdom.de/fsujena/publications/Dom09_s.pdf}
\showURL{%
\tempurl}


\bibitem[Dom et~al\mbox{.}(2010)]%
        {dom2010approximation}
\bibfield{author}{\bibinfo{person}{Michael Dom}, \bibinfo{person}{Jiong Guo},
  {and} \bibinfo{person}{Rolf Niedermeier}.} \bibinfo{year}{2010}\natexlab{}.
\newblock \showarticletitle{Approximation and fixed-parameter algorithms for
  consecutive ones submatrix problems}.
\newblock \bibinfo{journal}{\emph{J. Comput. System Sci.}}
  \bibinfo{volume}{76}, \bibinfo{number}{3-4} (\bibinfo{year}{2010}),
  \bibinfo{pages}{204--221}.
\newblock
\urldef\tempurl%
\url{https://doi.org/10.1016/j.jcss.2009.07.001}
\showDOI{\tempurl}


\bibitem[Elkind and Lackner(2014)]%
        {elkind2014detecting}
\bibfield{author}{\bibinfo{person}{Edith Elkind} {and} \bibinfo{person}{Martin
  Lackner}.} \bibinfo{year}{2014}\natexlab{}.
\newblock \showarticletitle{On detecting nearly structured preference
  profiles}. In \bibinfo{booktitle}{\emph{Proceedings of the 28th AAAI
  Conference on Artificial Intelligence (AAAI)}}. \bibinfo{pages}{661--667}.
\newblock
\urldef\tempurl%
\url{https://doi.org/10.1609/aaai.v28i1.8823}
\showDOI{\tempurl}


\bibitem[Elkind and Lackner(2015)]%
        {elkind2015structure}
\bibfield{author}{\bibinfo{person}{Edith Elkind} {and} \bibinfo{person}{Martin
  Lackner}.} \bibinfo{year}{2015}\natexlab{}.
\newblock \showarticletitle{Structure in dichotomous preferences}. In
  \bibinfo{booktitle}{\emph{Proceedings of the 24th International Joint
  Conference on Artificial Intelligence (IJCAI)}}. \bibinfo{pages}{2019--2025}.
\newblock
\urldef\tempurl%
\url{https://arxiv.org/pdf/1505.00341.pdf}
\showURL{%
\tempurl}


\bibitem[Elkind et~al\mbox{.}(2017)]%
        {elkind2017structured}
\bibfield{author}{\bibinfo{person}{Edith Elkind}, \bibinfo{person}{Martin
  Lackner}, {and} \bibinfo{person}{Dominik Peters}.}
  \bibinfo{year}{2017}\natexlab{}.
\newblock \showarticletitle{Structured Preferences}.
\newblock In \bibinfo{booktitle}{\emph{Trends in Computational Social Choice}},
  \bibfield{editor}{\bibinfo{person}{Ulle Endriss}} (Ed.).
  \bibinfo{publisher}{AI Access}, Chapter~10, \bibinfo{pages}{187--207}.
\newblock
\urldef\tempurl%
\url{https://archive.illc.uva.nl/COST-IC1205/BookDocs/Chapters/TrendsCOMSOC-10.pdf}
\showURL{%
\tempurl}


\bibitem[Elkind et~al\mbox{.}(2022)]%
        {elkind2022preference}
\bibfield{author}{\bibinfo{person}{Edith Elkind}, \bibinfo{person}{Martin
  Lackner}, {and} \bibinfo{person}{Dominik Peters}.}
  \bibinfo{year}{2022}\natexlab{}.
\newblock \bibinfo{title}{Preference restrictions in computational social
  choice: A survey}.
\newblock
\newblock
\showeprint[arxiv]{2205.09092}~[cs.GT]
\urldef\tempurl%
\url{https://arxiv.org/abs/2205.09092}
\showURL{%
\tempurl}


\bibitem[Erd{\'e}lyi et~al\mbox{.}(2017)]%
        {erdelyi2017nearlysp}
\bibfield{author}{\bibinfo{person}{G{\'a}bor Erd{\'e}lyi},
  \bibinfo{person}{Martin Lackner}, {and} \bibinfo{person}{Andreas Pfandler}.}
  \bibinfo{year}{2017}\natexlab{}.
\newblock \showarticletitle{Computational aspects of nearly single-peaked
  electorates}.
\newblock \bibinfo{journal}{\emph{Journal of Artificial Intelligence Research
  (JAIR)}}  \bibinfo{volume}{58} (\bibinfo{year}{2017}),
  \bibinfo{pages}{297--337}.
\newblock
\urldef\tempurl%
\url{https://doi.org/10.1613/jair.5210}
\showDOI{\tempurl}


\bibitem[Escoffier et~al\mbox{.}(2021)]%
        {escoffier2021nearlysp}
\bibfield{author}{\bibinfo{person}{Bruno Escoffier}, \bibinfo{person}{Olivier
  Spanjaard}, {and} \bibinfo{person}{Magdal{\'e}na Tydrichov{\'a}}.}
  \bibinfo{year}{2021}\natexlab{}.
\newblock \showarticletitle{Measuring nearly single-peakedness of an
  electorate: Some new insights}. In \bibinfo{booktitle}{\emph{Proceeding of
  the 7th International Conference on Algorithmic Decision Theory (ADT)}}.
  \bibinfo{pages}{19--34}.
\newblock
\urldef\tempurl%
\url{https://doi.org/10.1007/978-3-030-87756-9_2}
\showDOI{\tempurl}


\bibitem[Faliszewski et~al\mbox{.}(2011)]%
        {faliszewski2011shield}
\bibfield{author}{\bibinfo{person}{Piotr Faliszewski}, \bibinfo{person}{Edith
  Hemaspaandra}, \bibinfo{person}{Lane Hemaspaandra}, {and}
  \bibinfo{person}{J{\"o}rg Rothe}.} \bibinfo{year}{2011}\natexlab{}.
\newblock \showarticletitle{The shield that never was: {S}ocieties with
  single-peaked preferences are more open to manipulation and control}.
\newblock \bibinfo{journal}{\emph{Information and Computation}}
  \bibinfo{volume}{209}, \bibinfo{number}{2} (\bibinfo{year}{2011}),
  \bibinfo{pages}{89--107}.
\newblock
\urldef\tempurl%
\url{https://doi.org/10.1016/j.ic.2010.09.001}
\showDOI{\tempurl}


\bibitem[Faliszewski et~al\mbox{.}(2014)]%
        {faliszewski2014voterdeletion}
\bibfield{author}{\bibinfo{person}{Piotr Faliszewski}, \bibinfo{person}{Edith
  Hemaspaandra}, {and} \bibinfo{person}{Lane~A. Hemaspaandra}.}
  \bibinfo{year}{2014}\natexlab{}.
\newblock \showarticletitle{The complexity of manipulative attacks in nearly
  single-peaked electorates}.
\newblock \bibinfo{journal}{\emph{Artificial Intelligence}}
  \bibinfo{volume}{207} (\bibinfo{year}{2014}), \bibinfo{pages}{69--99}.
\newblock
\urldef\tempurl%
\url{https://doi.org/10.1016/j.artint.2013.11.004}
\showDOI{\tempurl}


\bibitem[Garey and Johnson(1979)]%
        {garey1979computers}
\bibfield{author}{\bibinfo{person}{Michael~R. Garey} {and}
  \bibinfo{person}{David~S. Johnson}.} \bibinfo{year}{1979}\natexlab{}.
\newblock \bibinfo{booktitle}{\emph{Computers and Intractability: A Guide to
  the Theory of {NP}-completeness}}.
\newblock \bibinfo{publisher}{W. H. Freeman and Company}.
\newblock
\showISBNx{9780716710448}


\bibitem[Garey et~al\mbox{.}(1976)]%
        {Garey1976}
\bibfield{author}{\bibinfo{person}{Michael~R. Garey}, \bibinfo{person}{David~S.
  Johnson}, {and} \bibinfo{person}{Larry Stockmeyer}.}
  \bibinfo{year}{1976}\natexlab{}.
\newblock \showarticletitle{Some simplified {NP}-complete graph problems}.
\newblock \bibinfo{journal}{\emph{Theoretical Computer Science}}
  \bibinfo{volume}{1}, \bibinfo{number}{3} (\bibinfo{year}{1976}),
  \bibinfo{pages}{237--267}.
\newblock
\urldef\tempurl%
\url{https://doi.org/10.1016/0304-3975(76)90059-1}
\showDOI{\tempurl}


\bibitem[Hajiaghayi and Ganjali(2002)]%
        {hajiaghayi2002note}
\bibfield{author}{\bibinfo{person}{Mohammad~Taghi Hajiaghayi} {and}
  \bibinfo{person}{Yashar Ganjali}.} \bibinfo{year}{2002}\natexlab{}.
\newblock \showarticletitle{A note on the consecutive ones submatrix problem}.
\newblock \bibinfo{journal}{\emph{Inform. Process. Lett.}}
  \bibinfo{volume}{83}, \bibinfo{number}{3} (\bibinfo{year}{2002}),
  \bibinfo{pages}{163--166}.
\newblock
\urldef\tempurl%
\url{https://doi.org/10.1016/s0020-0190(01)00325-8}
\showDOI{\tempurl}


\bibitem[Karp(1972)]%
        {Karp1972}
\bibfield{author}{\bibinfo{person}{Richard~M. Karp}.}
  \bibinfo{year}{1972}\natexlab{}.
\newblock \showarticletitle{Reducibility among Combinatorial Problems}. In
  \bibinfo{booktitle}{\emph{Complexity of Computer Computations}}.
  \bibinfo{pages}{85--103}.
\newblock
\urldef\tempurl%
\url{https://doi.org/10.1007/978-1-4684-2001-2_9}
\showDOI{\tempurl}


\bibitem[Lebon et~al\mbox{.}(2017)]%
        {france2012}
\bibfield{author}{\bibinfo{person}{Isabelle Lebon}, \bibinfo{person}{Antoinette
  Baujard}, \bibinfo{person}{Frédéric Gavrel}, \bibinfo{person}{Herrade
  Igersheim}, {and} \bibinfo{person}{Laslier Jean-François}.}
  \bibinfo{year}{2017}\natexlab{}.
\newblock \showarticletitle{What approval voting reveals about the preferences
  of {French} voters}.
\newblock \bibinfo{journal}{\emph{Revue économique}}  \bibinfo{volume}{68}
  (\bibinfo{year}{2017}), \bibinfo{pages}{1063--1076}.
\newblock
\urldef\tempurl%
\url{https://doi.org/10.3917/reco.pr2.0084}
\showDOI{\tempurl}


\bibitem[Martin and Quinn(2002)]%
        {martinquinn2002dynamic}
\bibfield{author}{\bibinfo{person}{Andrew~D. Martin} {and}
  \bibinfo{person}{Kevin~M. Quinn}.} \bibinfo{year}{2002}\natexlab{}.
\newblock \showarticletitle{Dynamic ideal point estimation via {Markov} chain
  {Monte} {Carlo} for the {US} {Supreme} {Court}, 1953--1999}.
\newblock \bibinfo{journal}{\emph{Political analysis}} \bibinfo{volume}{10},
  \bibinfo{number}{2} (\bibinfo{year}{2002}), \bibinfo{pages}{134--153}.
\newblock
\urldef\tempurl%
\url{https://www.jstor.org/stable/25791672}
\showURL{%
\tempurl}


\bibitem[Misra et~al\mbox{.}(2017)]%
        {misra2017complexity}
\bibfield{author}{\bibinfo{person}{Neeldhara Misra}, \bibinfo{person}{Chinmay
  Sonar}, {and} \bibinfo{person}{P.~R. Vaidyanathan}.}
  \bibinfo{year}{2017}\natexlab{}.
\newblock \showarticletitle{On the complexity of {Chamberlin--Courant} on
  almost structured profiles}. In \bibinfo{booktitle}{\emph{Proceedings of the
  5th International Conference on Algorithmic Decision Theory (ADT)}}.
  \bibinfo{pages}{124--138}.
\newblock
\urldef\tempurl%
\url{https://doi.org/10.1007/978-3-319-67504-6_9}
\showDOI{\tempurl}


\bibitem[Myerson(1995)]%
        {Myer95b}
\bibfield{author}{\bibinfo{person}{Roger~B. Myerson}.}
  \bibinfo{year}{1995}\natexlab{}.
\newblock \showarticletitle{Axiomatic derivation of scoring rules without the
  ordering assumption}.
\newblock \bibinfo{journal}{\emph{Social Choice and Welfare}}
  \bibinfo{volume}{12}, \bibinfo{number}{1} (\bibinfo{year}{1995}),
  \bibinfo{pages}{59--74}.
\newblock
\urldef\tempurl%
\url{https://doi.org/10.1007/BF00182193}
\showDOI{\tempurl}


\bibitem[Narayanaswamy and Subashini(2015)]%
        {narayanaswamy2015obtaining}
\bibfield{author}{\bibinfo{person}{N.~S. Narayanaswamy} {and}
  \bibinfo{person}{R. Subashini}.} \bibinfo{year}{2015}\natexlab{}.
\newblock \showarticletitle{Obtaining matrices with the consecutive ones
  property by row deletions}.
\newblock \bibinfo{journal}{\emph{Algorithmica}}  \bibinfo{volume}{71}
  (\bibinfo{year}{2015}), \bibinfo{pages}{758--773}.
\newblock
\urldef\tempurl%
\url{https://doi.org/10.1007/s00453-014-9925-1}
\showDOI{\tempurl}


\bibitem[Petrie(1899)]%
        {petrie1899sequences}
\bibfield{author}{\bibinfo{person}{W.~M.~Flinders Petrie}.}
  \bibinfo{year}{1899}\natexlab{}.
\newblock \showarticletitle{Sequences in prehistoric remains}.
\newblock \bibinfo{journal}{\emph{Journal of the Anthropological Institute of
  Great Britain and Ireland}} (\bibinfo{year}{1899}),
  \bibinfo{pages}{295--301}.
\newblock
\urldef\tempurl%
\url{https://doi.org/10.2307/2843012}
\showDOI{\tempurl}


\bibitem[Pivato(2013)]%
        {Piva13a}
\bibfield{author}{\bibinfo{person}{Marcus Pivato}.}
  \bibinfo{year}{2013}\natexlab{}.
\newblock \showarticletitle{Variable-population voting rules}.
\newblock \bibinfo{journal}{\emph{Journal of Mathematical Economics}}
  \bibinfo{volume}{49}, \bibinfo{number}{3} (\bibinfo{year}{2013}),
  \bibinfo{pages}{210--221}.
\newblock
\urldef\tempurl%
\url{https://doi.org/10.1016/j.jmateco.2013.02.001}
\showDOI{\tempurl}


\bibitem[Spaeth et~al\mbox{.}(2023)]%
        {spaeth2023}
\bibfield{author}{\bibinfo{person}{Harold~J. Spaeth}, \bibinfo{person}{Lee
  Epstein}, \bibinfo{person}{Andrew~D. Martin}, \bibinfo{person}{Jeffrey~A.
  Segal}, \bibinfo{person}{Theodore~J. Ruger}, {and} \bibinfo{person}{Sara~C.
  Benesh}.} \bibinfo{year}{2023}\natexlab{}.
\newblock \bibinfo{title}{2023 {Supreme} {Court} {Database}, {Version} 2023
  {Release} 01}.
\newblock \bibinfo{howpublished}{\url{http://supremecourtdatabase.org}}.
\newblock


\bibitem[Sui et~al\mbox{.}(2013)]%
        {sui2013voterdeletion}
\bibfield{author}{\bibinfo{person}{Xin Sui}, \bibinfo{person}{Alex
  Francois-Nienaber}, {and} \bibinfo{person}{Craig Boutilier}.}
  \bibinfo{year}{2013}\natexlab{}.
\newblock \showarticletitle{Multi-dimensional single-peaked consistency and its
  approximations}. In \bibinfo{booktitle}{\emph{Proceedings of the 23rd
  International Joint Conference on Artificial Intelligence (IJCAI)}}.
  \bibinfo{pages}{375--382}.
\newblock
\urldef\tempurl%
\url{https://www.cs.toronto.edu/~cebly/Papers/SuiEtAl_singlePeaked_ijcai13.pdf}
\showURL{%
\tempurl}


\bibitem[Tan and Zhang(2007)]%
        {tan2007consecutive}
\bibfield{author}{\bibinfo{person}{Jinsong Tan} {and} \bibinfo{person}{Louxin
  Zhang}.} \bibinfo{year}{2007}\natexlab{}.
\newblock \showarticletitle{The consecutive ones submatrix problem for sparse
  matrices}.
\newblock \bibinfo{journal}{\emph{Algorithmica}}  \bibinfo{volume}{48}
  (\bibinfo{year}{2007}), \bibinfo{pages}{287--299}.
\newblock
\urldef\tempurl%
\url{https://doi.org/10.1007/s00453-007-0118-z}
\showDOI{\tempurl}


\bibitem[Terzopoulou et~al\mbox{.}(2021)]%
        {terzopoulou2021restricted}
\bibfield{author}{\bibinfo{person}{Zoi Terzopoulou}, \bibinfo{person}{Alexander
  Karpov}, {and} \bibinfo{person}{Svetlana Obraztsova}.}
  \bibinfo{year}{2021}\natexlab{}.
\newblock \showarticletitle{Restricted domains of dichotomous preferences with
  possibly incomplete information}. In \bibinfo{booktitle}{\emph{Proceedings of
  the 35th AAAI Conference on Artificial Intelligence (AAAI)}}.
  \bibinfo{pages}{5726--5733}.
\newblock
\urldef\tempurl%
\url{https://doi.org/10.1609/aaai.v35i6.16718}
\showDOI{\tempurl}


\bibitem[Tideman(1987)]%
        {tideman1987independence}
\bibfield{author}{\bibinfo{person}{T.~Nicolaus Tideman}.}
  \bibinfo{year}{1987}\natexlab{}.
\newblock \showarticletitle{Independence of clones as a criterion for voting
  rules}.
\newblock \bibinfo{journal}{\emph{Social Choice and Welfare}}
  \bibinfo{volume}{4} (\bibinfo{year}{1987}), \bibinfo{pages}{185--206}.
\newblock
\urldef\tempurl%
\url{https://doi.org/10.1007/bf00433944}
\showDOI{\tempurl}


\bibitem[Tydrichov{\'a}(2023)]%
        {tydrichova2023structural}
\bibfield{author}{\bibinfo{person}{Magdal{\'e}na Tydrichov{\'a}}.}
  \bibinfo{year}{2023}\natexlab{}.
\newblock \emph{\bibinfo{title}{Structural and algorithmic aspects of
  preference domain restrictions in collective decision making: Contributions
  to the study of single-peaked and {Euclidean} preferences}}.
\newblock \bibinfo{thesistype}{Ph.\,D. Dissertation}. \bibinfo{school}{Sorbonne
  Universit{\'e}}.
\newblock
\urldef\tempurl%
\url{https://theses.hal.science/tel-04143294v1/document}
\showURL{%
\tempurl}


\bibitem[Young(1975)]%
        {young1975social}
\bibfield{author}{\bibinfo{person}{H.~Peyton Young}.}
  \bibinfo{year}{1975}\natexlab{}.
\newblock \showarticletitle{Social choice scoring functions}.
\newblock \bibinfo{journal}{\emph{SIAM J. Appl. Math.}} \bibinfo{volume}{28},
  \bibinfo{number}{4} (\bibinfo{year}{1975}), \bibinfo{pages}{824--838}.
\newblock
\urldef\tempurl%
\url{https://doi.org/10.1137/0128067}
\showDOI{\tempurl}


\end{thebibliography}

\clearpage 
\appendix 
\addtocontents{toc}{\protect\setcounter{tocdepth}{1}}

\section{Appendix of \Cref{sec:rules}}

\subsection{Non-equivalence of Axis Rules}\label{sec:app_rules_not_eq}

In \Cref{sec:rules} (\Cref{ex1}), we discussed an example with $m=4$ for which 3 rules returned different axes. In this section, we provide another example constructed so that every pair of rules select different axes. Consider the profile 
\vspace{-15pt}

\begin{minipage}{0.5\textwidth}
\begin{align*}
	\\[7pt]
    18 & \times \{a,b\}, \\
    N & \times \{b,c\}, \\
    N & \times \{c,d\}, \\
    15 & \times \{d,e\}, \\
    4 & \times \{e,f\}, \\
    1 & \times \{a,g\}, \\
    20 & \times \{b,c,f,g\}, \\
    15 & \times \{a,e,f,g\}, \\
    2 & \times \{a,d,g\}. \\
\end{align*}
\end{minipage}
\begin{minipage}{0.5\textwidth}
\begin{tikzpicture}
	\renewcommand{\rowheight}{0.595cm}
    \axisheader{0}{a,b,c,d,e,f,g}
    
    \multiplicity{1}{18}
    \interval{1}{1}{2}

    \multiplicity{2}{$N$}
    \interval{2}{2}{3}

    \multiplicity{3}{$N$}
    \interval{3}{3}{4}

    \multiplicity{4}{15}
    \interval{4}{4}{5}

    \multiplicity{5}{4}
    \interval{5}{5}{6}

    \multiplicity{6}{1}
    \interval{6}{1}{1}
    \interval{6}{7}{7}

    \multiplicity{7}{20}
    \interval{7}{2}{3} 
    \interval{7}{6}{7}

    \multiplicity{8}{15}
    \interval{8}{1}{1}
    \interval{8}{5}{7}

    \multiplicity{9}{2}
    \interval{9}{1}{1}
    \interval{9}{4}{4}
    \interval{9}{7}{7}
\end{tikzpicture}
\end{minipage}
on $C =\{a,b,c,d,e,f,g\}$, where $N$ is an integer large enough to ensure $bcd$ is an interval of every optimal axis for any rule (note that such $N$ can always be found). 

\begin{table}[!t]
	\centering
	\begin{tabular}{l c c c c c} 
		\toprule
		& VD & MF & BC & MS & FT \\ 
		\midrule
		$\mathit{aefgbcd}$ & \textbf{36} & 38 & 124 & 126 & 132 \\
		$\mathit{efgabcd}$ & 37 & \textbf{37} & 99  & 119 & 163 \\
		$\mathit{gfabcde}$ & 42 & 42 & \textbf{88}& 108 & 244 \\
		$\mathit{agfbcde}$ & 39 & 39 & 99 & \textbf{99} & 195 \\ 
		$\mathit{eagfbcd}$ & 40 & 40 & 122 & 122 & \textbf{128} \\
		\bottomrule
	\end{tabular}
	\caption{Five axes on the profile $P$ defined in \Cref{sec:app_rules_not_eq}, and their cost for the different scoring rules. For each axis, we give its cost for all axis rules. The optimal values for each rule are given in bold. 
	}
	\label{tab:rules_not_eq}
\end{table}

\Cref{tab:rules_not_eq} shows five axes and their respective cost for each scoring rule introduced in \Cref{sec:rules}. 
Note that each of these axes minimizes the cost for a distinct one of the five rules. This can be verified either using a computer or by hand (noting that the fact that $bcd$ must be an interval of axis reduces the search space significantly). Instead of giving all details of the computation, we will discuss some behavioural tendencies of rules in order to better understand their differences. 

First, we note that VD and MF often yield the same cost. This is because many ballots of $P$ only approve two candidates, in which case the VD cost and MF cost are equal. More generally, given a ballot $A$ and and axis $\axis$, $\cost_{\VD}(A, \axis) = \cost_{\MF}(A, \axis)$ if and only if (1) $A$ is an interval of $\axis$, (2) $\axis$ creates a unique contiguous hole of size 1 in $A$, or (3) $\axis$ creates a unique contiguous hole in $A$ and there is only one approved candidate on the left (or on the right) of this hole. To distinguish VD and MF, we have added a ballot $\{a,d,g\}$ to $P$ which creates two contiguous holes on $\underline{a}e\mathit{f}\underline{g}bc\underline{d}$. This ensures that this axis is only optimal for VD.

Let us now focus on differences between MF and BC. Roughly speaking, MF seems more sensitive to the number of contiguous holes, while BC seems more sensitive to total size of the holes. For instance, a ballot associated to the approval vector $(1,0,0,0,0,0,1)$ achieves higher BC-cost than a ballot associated to the approval vector $(1,0,0,0,0,1,0)$, while MF assigns both the same cost. On the other hand, BC assigns the same cost to $(0,1,1,0,0,0,1)$ and $(0,1,0,1,0,0,1)$ while MF assigns a lower cost to the first one. This observation makes it possible to distinguish MF and BC, by finding suitable weights of $\{d,e\}$, $\{e,f\}$, and $\{a,d,g\}$ in $P$. 

We then note that BC and MS seems to assign similar costs quite often. Actually, given a ballot $A$ and and axis $\axis$, we have $\cost_{\BC}(A, \axis) = \cost_{\MS}(A, \axis)$ if and only if (1) $A$ is an interval of $\axis$, (2) $\axis$ creates a unique contiguous hole in $A$ and there is only one approved candidate on the left (or on the right) of this hole, or (3) $\axis$ creates two contiguous holes in $A$ and there is a unique approved candidate on the left of the left-most hole and on the right of the right-most hole. As a consequence, the BC and MS cost function take the same values for every approval ballot of size $|A| \le 3$. Thus, only $\{b,c,f,g\}$ and $\{a,e,f,g\}$ are able to distinguish BC and MS. 

Finally, FT gives more importance to bigger ballots, as for each interfering candidate we multiply the number of approved candidates on its left by the number of approved candidates on its right, the FT cost becomes larger with an increasing number of approved candidates. Like for MS, we used the ballots $\{b,c,f,g\}$ and $\{a,e,f,g\}$ (with suitable weights) to help differentiate FT from other rules. We note that these ballots are intervals of $\mathit{eagfbcd}$.  

\subsection{Minimum Swaps} \label{sec:appMS}

We prove in this section that the formula 
\[
\textstyle \cost_{\MS}(A, \axis) = \sum_{x\notin A} \min(|\{y \in A: y\axis x\}|,|\{y \in A: x \axis y\}|)
\]
implements the description of the Minimum Swaps rule, for which the cost of an axis is the minimal number of swaps we need to perform on this axis to make $A$ an interval of it.

Fix some ballot $A$ and some axis $\axis$. Let $\axis'$ be an axis of minimum swap distance to $\axis$ such that $A$ is an interval of $\axis'$. Write $S = \{ \{x, y\} \subseteq C : x \axis y \text{ and } y \axis' x \}$ for this minimum swap distance. It is clear that $\axis$ and $\axis'$ agree on the ordering of the approved alternatives in $A$; if they ordered some pair of approved alternatives in different ways, we could swap them in one of the axes and thereby reduce the swap distance $S$.

We first show that $S \ge \sum_{x\notin A} \min(|\{y \in A: y\axis x\}|,|\{y \in A: x \axis y\}|)$. Note that because $A$ is an interval of $\axis'$, every non-approved candidates $x \notin A$ must appear either to the left or to the right of all approved candidates in $\axis'$. Thus, $x$ must have been swapped with at least all candidates $y \in A$ to its right or to its left.

To see that $S \le \sum_{x\notin A} \min(|\{y \in A: y\axis x\}|,|\{y \in A: x \axis y\}|)$, partition the set $C \setminus A$ of non-approved candidates into two parts, corresponding to those candidates for which it is cheaper to push them to the left or to the right, respectively:
\begin{align*}
	L &= \{x \notin A : |\{y \in A: y\axis x\}| < |\{y \in A: x \axis y\}|\}, \\
	R &= \{x \notin A : |\{y \in A: y\axis x\}| \ge |\{y \in A: x \axis y\}|\}.
\end{align*}
Consider the axis $\axis'' = \axis_L \axis_A \axis_R$ obtained by placing the $L$-candidates left, the $A$-candidates center, and the $R$-candidates right, but keeping the same ordering of candidates as $\axis$ within the sets $L$, $A$, and $R$. By construction, $A$ is an interval of $\axis''$, and it is easy to compute that the swap distance between $\axis$ and $\axis''$ is precisely $\sum_{x\notin A} \min(|\{y \in A: y\axis x\}|,|\{y \in A: x \axis y\}|)$, which by minimality of $S$ must be at least as large as $S$.

\subsection{Complexity} \label{sec:appComplexity}

In this section we provide the proof that the rules defined in \Cref{sec:rules} are all hard to compute. Some of these results are known. In particular, framed as \emph{near-C1P matrices problems}, the VD, MF, and BC rules are known to be NP-complete, as surveyed by \citet{consecutiveones}. Here, we give the reduction for VD and BC and use a simple argument to deduce hardness for the other rules MF, MS, and FT.

We first give the proof for \emph{Voter Deletion}. For this, we recall that a profile is linear if and only if its approval matrix satisfies the C1P. Thus, computing VD is equivalent to the \emph{consecutive ones submatrix} problem, which was already shown to be NP-complete \citep[Theorem 4.24]{booth1975pq}. For convenience, we provide the proof here.
\begin{restatable}{theorem}{complexityVD}
    \label{thm:vd-complexity}
    The VD problem is NP-complete even if each voter approves at most two candidates (i.e., $\max_i |A_i| = 2$).
\end{restatable}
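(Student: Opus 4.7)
The plan is to establish NP-hardness via a polynomial-time reduction from the Hamiltonian Path problem on simple graphs, which is well-known to be NP-complete \citep{Karp1972}. Given a simple graph $G=(V,E)$ with $n=|V|$ and $m=|E|$, I would construct a VD instance whose candidate set is $V$ and whose profile $P$ consists of one ballot $\{u,v\}$ per edge $\{u,v\}\in E$, and ask whether there is an axis of VD cost at most $k := m-(n-1)$. Every ballot has exactly two candidates, matching the theorem's restriction.

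The crux of the argument is the observation that a two-element ballot $\{u,v\}$ is an interval of an axis $\axis$ if and only if $u$ and $v$ occupy adjacent positions on $\axis$. Since an axis on $n$ candidates has exactly $n-1$ pairs of adjacent positions, at most $n-1$ ballots of $P$ can simultaneously be intervals, so the VD cost is always at least $m-(n-1)$. For the forward direction: if $v_1,\dots,v_n$ is a Hamiltonian path in $G$, the axis $v_1\axis v_2\axis\cdots\axis v_n$ turns all $n-1$ corresponding edge-ballots into intervals, achieving VD cost exactly $k$. For the reverse direction: if some axis achieves VD cost at most $k$, then at least $n-1$ ballots are intervals; because $G$ is simple and ballots are distinct $2$-sets, these correspond to $n-1$ distinct edges of $G$ that appear as adjacent pairs on the axis, and since only $n-1$ adjacent pairs exist in the axis, the interval ballots are exactly all of them. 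Reading the axis off in order therefore yields a Hamiltonian path of $G$.

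Membership in NP is immediate: an axis serves as a polynomial-size certificate, and for any given axis the number of non-interval ballots can be computed in time $O(nm)$ by scanning each ballot. I do not expect any serious obstacle in this proof; the only point requiring care is the reverse direction, where one must use the simplicity of $G$ to ensure that distinct interval ballots correspond to distinct adjacent pairs on the axis, so that $n-1$ of them really assemble into a Hamiltonian path rather than collapsing into a smaller structure through repetition.
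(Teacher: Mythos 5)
Your proof is correct and follows essentially the same route as the paper: a reduction from Hamiltonian Path with one two-element ballot per edge, using the fact that a size-2 ballot is an interval iff its two candidates are adjacent, so at most $n-1$ ballots can be satisfied and equality holds exactly when the adjacencies trace a Hamiltonian path. Your threshold $k = |E| - (n-1)$ is in fact stated more carefully than in the paper's own write-up, which phrases the target in terms of the number of satisfied voters rather than the deletion cost.
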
 
\begin{proof}
    We use a polynomial time reduction from the Hamiltonian path problem, known to be NP-complete \citep{Karp1972}. Let $G = (X, E)$ be an undirected graph with $|X| = n$ and $|E| = m$. A Hamiltonian path is a path that visits each vertex exactly once. The Hamiltonian path problem consists in deciding whether such a path exists. The Voter Deletion problem consists in deciding, given as input a profile $P$ and $k\in \mathbb N$, whether an axis of cost at most $k$ exists. We now show that we can reduce the Hamiltonian path problem to the VD problem.
    
    We create an election with $C = X$ as the set of candidates. Then, we define the profile $P$ as follows: for each edge $(u,v) \in E$, there exist a voter $v_e$ approving $\{u,v\}$. Thus, all voters are distinct.
    Since the size of all approval ballots is $2$, any axis can satisfy at most $n-1$ pairwise distinct voters, and if $n-1$ pairwise distinct voters are satisfied by the axis $c_1 \axis \dots \axis c_m$, then $(c_1,\dots, c_m)$ is a Hamiltonian path. Conversely, if $(c_1,\dots, c_m)$ is a Hamiltonian path of $G$, the axis $c_1 \axis \dots \axis c_m$ satisfies $n-1$ voters with approval ballots of the form $\{c_i, c_{i+1}\}$. Thus there exists a Hamiltonian path if and only if there exist an axis with a Voter Deletion cost of $n-1$ in the election $P$. This proves that VD is NP-hard. The completeness comes from the fact that it takes polynomial time to compute the VD cost.
\end{proof}

Similarly, one can show that the ballot completion rule is equivalent to the \emph{consecutive ones matrix augmentation} problem, which is also NP-complete \citep[Theorem 4.19]{booth1975pq}. For convenience, we provide the proof here.

\begin{restatable}{theorem}{complexityBC}
    \label{thm:bc-complexity}
    The BC problem is NP-complete even if each voter approves at most two candidates (i.e., $\max_i |A_i| = 2$). 
\end{restatable}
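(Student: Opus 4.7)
The plan is to reduce from the \textsf{Optimal Linear Arrangement} (OLA) problem, which asks, given an undirected graph $G = (X, E)$ and a bound $k$, whether there is a bijection $f : X \to [1, |X|]$ with $\sum_{\{u,v\} \in E} |f(u) - f(v)| \le k$. This problem is known to be NP-complete \citep{Garey1976}.

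Given an instance $(G, k)$ of OLA with $G = (X,E)$, I build a BC instance as in the proof of \Cref{thm:vd-complexity}: let $C = X$ and for each edge $e = \{u,v\} \in E$ introduce a voter $v_e$ with ballot $\{u, v\}$. In particular, every ballot has size exactly $2$.

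The key observation is that a size-$2$ ballot $\{u, v\}$ is an interval of an axis $\axis$ if and only if $u$ and $v$ are adjacent on $\axis$; more precisely, if $f$ denotes the positions on the axis, then $\cost_{\BC}(\{u,v\}, \axis) = |f(u) - f(v)| - 1$, because completing the ballot to an interval requires adding exactly the candidates strictly between $u$ and $v$. Summing over all edges,
\[
\cost_{\BC}(P, \axis) \;=\; \sum_{\{u,v\} \in E} \bigl(|f(u) - f(v)| - 1\bigr) \;=\; \Bigl(\sum_{\{u,v\} \in E} |f(u) - f(v)|\Bigr) - |E|.
\]
Hence an axis of BC cost at most $k - |E|$ exists if and only if $G$ admits a linear arrangement of cost at most $k$, giving the desired reduction. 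Membership in NP is immediate, since given an axis we can evaluate $\cost_{\BC}$ in polynomial time.

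I do not expect any real obstacle here: the only subtle point is picking a source problem whose natural objective matches the BC cost on size-$2$ ballots, and OLA fits precisely because its cost decomposes as a sum of pairwise axis distances, one per edge. Alternatively, one can cite the NP-completeness of the \emph{consecutive ones matrix augmentation} problem \citep[Theorem 4.19]{booth1975pq} directly, observing that with only size-$2$ ballots the two formulations coincide.
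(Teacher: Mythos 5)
Your proposal is correct and follows essentially the same route as the paper: a reduction from Optimal (Linear) Arrangement with one size-2 ballot per edge, using the identity $\cost_{\BC}(\{u,v\},\axis) = |f(u)-f(v)|-1$ so that the total BC cost equals the arrangement cost minus $|E|$. The paper's proof is the same construction with the same citation, so there is nothing to add.
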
 
\begin{proof}
    We use a polynomial time reduction from the Optimal Arrangement problem, known to be NP-complete \citep{Garey1976}. Let $G = (X, E)$ be an undirected graph with $|X| = n$ and $|E| = m$. The Optimal Arrangement problem decides, given an integer $k$, whether there is a one-to-one function $f:X \rightarrow [1,n]$ such that $\sum_{(u,v)\in E} |f(u) - f(v)| \le k$. 
    The Ballot Completion problem decides, given an integer $k$, whether there exists an axis with cost $\le k$.
    We create an election with $C = X$ the set of candidates. The set of voters is defined as follows: for each edge $(u,v) \in E$, we introduce a voter $v_e$ with ballot $\{u,v\}$. For an axis $\axis$, let $f_\axis(c)$ correspond to the position of the candidate $c$ on the axis (e.g., $1$ for the left-most candidate). Then given a ballot $A = \{u,v\}$ and an axis $\axis$, the Ballot Completion cost equals $\cost_{\BC}(A,\axis) = |f_\axis(u) - f_\axis(v)| - 1$. Thus, the ballot completion cost of an axis $\axis$ with this profile is equal to $\sum_{\{u,v\} \in E} |f_\axis(u) - f_\axis(v)| - 1 = \left (\sum_{\{u,v\} \in E} |f_\axis(u) - f_\axis(v)| \right )- |E|$. Therefore, there exists an arrangement of cost $\le k$ if and only if there exists an axis of BC cost $\le k - |E|$. Thus, BC is NP-hard. The completeness comes from the fact that the BC cost is computable in polynomial time.
\end{proof}

Now, observe that when $\max_i |A_i| = 2$, the VD and MF cost functions are identical, because if the ballot $A$ is not an interval of the axis, then it always costs one flip to make it an interval (by removing one of the two approved candidates from the ballot). Moreover, observe that the MS, FT, and BC cost functions are equivalent when $\max_i |A_i| = 2$, which becomes clear from their formulas, as for all interfering candidates $x \notin A$, we have $|\{y \in A, y \axis x\}| = |\{y \in A, x \axis y \}|) = 1$. Hence we obtain NP-completeness for all 5 rules.

\section{Omitted Proofs of \Cref{sec:axiom}}
\label{sec:appPROOFs}

\subsection{Neutrality and Consistency with Linearity}\label{sec:appNeutr}

\costfunction*
\begin{proof}
    Let $f$ be a scoring rule induced by the cost function $\cost$. It is straightforward to check that if $\cost$ satisfies conditions (1)--(3), then $f$ is neutral (due to (3)) and consistent with linearity (due to (1)).
    
    For the other direction, note that for each fixed ballot $A$, adding a constant to the cost function $\cost(A, \cdot)$ will not change the optimum axis. Thus, we can always select a cost function so that for each ballot $A$, we have $\min_{\axis} \cost(A, \axis) = 0$. We will show that the cost function, chosen in this way, satisfies (1) and (2), and afterwards we will derive another cost function that also induces $f$ and that in addition satisfies (3).
    
    We first show (1). That we always have $\cost(A, \axis) \ge 0$ is clear from our choice of cost function. Assume for a contradiction that there is an axis $\axis$ and a ballot $A$ such that $\cost(A,\axis) = 0$ but $A$ is not an interval of $\axis$. Then, on the linear profile $P = \{A\}$, we have ${\axis} \in f(P)$, which is a contradiction with consistency with linearity. Similarly, if $A$ is an interval of $\axis$ but $\cost(A,\axis) > 0$, then on the linear profile $P = \{A\}$, we have ${\axis} \not\in f(P)$ while $\axis$ is consistent with $P$, a contradiction with consistency with linearity.

    We now show (2). If $A$ is an interval of $\axis$, it is also an interval of $\cev \axis$, so from (1) we clearly have $\cost(A,\axis) = \cost(A,\cev \axis)$. Assume now that $A$ is not an interval of $\axis$. Thus, $y = \cost(A,\axis) > 0$ and $y' = \cost(A,\cev\axis) > 0$. Assume for a contradiction that $y \ne y'$, and without loss of generality that $y < y'$. Let us denote the candidates $c_1, \dots,c_m$ such that ${\axis} = c_1c_2\dots c_m$. Moreover, let $z > 0$ be the minimum value of $\cost(A',\axis)$ over all ballots $A'$ that are not an interval of $\axis$. Take $q \in \mathbb N$ such that $q > y/z$ and consider the profile $P$ which contains $A$ and for each $i \in [1,m-1]$, $q$ ballots $\{c_i, c_{i+1}\}$. Clearly, any axis ${\axis'} \notin \{\axis, \cev\axis\}$ is breaking at least one pair, inducing a cost greater that $q \cdot z > y$. The cost of $\axis$ is $y$ and the cost of $\cev\axis$ is $y'> y$. Thus, $f(P) = \{\axis\}$ which contradicts the definition of axis rules (which requires that whenever an axis is selected, then so is its reverse axis). Therefore, $y = y'$ and $\cost(A,\axis) = \cost(A,\cev \axis)$.

    For (3), we show that $f$ is induced by a cost function $\cost^*$ such that $\cost^*(A, \axis)$ only depends on $x_{A,\axis}$.
    Let $\Pi$ be the set of all permutations of the candidates. For a permutation $\pi \in \Pi$, for each ballot $A$ we write $\pi(A) =\{\pi(a):a\in A\}$, and for each axis ${\axis} = c_1\dots c_m$ we write $\pi(\axis) = \pi(c_1)\dots\pi(c_m)$.  Then we define 
    \[
    	\cost^*(A,\axis) = \sum_{\pi\in \Pi} \cost(\pi(A),\pi(\axis)) \text{ for all $A$ and ${\axis} \in \mathcal A$.}
   	\]
   	We will show that this cost function still induces $f$, and that it satisfies conditions (1)--(3).
   	
    To show that $f$ is still induced by this cost function, %
    let ${\axis} \in f(P)$ be an optimal axis for profile $P$.
    Then, by neutrality, $\pi(\axis) \in f(\pi (P))$ for all $\pi\in\Pi$. This implies that $\cost(\pi(\axis), \pi(P))\leq \cost(\pi(\axis'),\pi(P))$ for all axes ${\axis'} \in \mathcal A$. Since this inequality carries over to the sum over all $\pi\in\Pi$, this implies $\cost^*(\axis,P)\leq \cost^*(\axis',P)$ for all $\axis'$. For the other direction, let ${\axis'} \notin f(P)$ and fix some ${\axis} \in f(P)$. With the same argument, we obtain  $\cost^*(\axis,P)< \cost^*(\axis',P)$, which shows that, for all profiles, an axis $\axis$ has minimal cost w.r.t. $\cost^*$ if and only if it is chosen by $f$. 
    
    Finally, we check that $\cost^*$ satisfies the conditions of the lemma. Because $\cost$ satisfies conditions (1) and (2), it is clear that $\cost^*$ also satisfies conditions (1) and (2). For condition (3), take any $A,\axis$ and $A',\axis'$ with the same approval vector, i.e., $x_{A,\axis} = x_{A',\axis'}$. Then, there exists a permutation $\tau \in \Pi$ with $\tau(A)=A'$ and $\tau(\axis)=\axis'$.
    Thus, we obtain that  $\cost^*(A',\axis') =  \cost^*(\tau (A),\tau(\axis)) = \sum_{\pi\in \Pi} \cost(\pi(\tau(A)),\pi(\tau(\axis))) = \sum_{\pi'\in \Pi} \cost(\pi'(A),\pi'(\axis)) = \cost^*(A,\axis)$.
\end{proof}

\subsection{Characterization of Voter Deletion} \label{sec:appCharacterization}
\vdcharac*

\begin{proof}

    We already showed that VD satisfies all the axioms. For the other direction, let $f$ be a scoring rule satisfying neutrality, consistency with linearity, resistance to cloning and ballot monotonicity. As shown in \Cref{sec:appNeutr}, $f$ is induced by a symmetric cost function $\cost$ with $\cost(A,\axis) = 0$ if and only if $A$ forms an interval in $\axis$. Further, $\cost$ only depends on the approval vector $x_{A,\axis}$, i.e., there exists a function $g:\{0,1\}^m \rightarrow \mathbb R_{\ge 0}$ such that $\cost(A,\axis) = g(x_{A,\axis})$ for all ballots $A$ and axis $\axis$.

    The steps of the proof are as follows:
    \begin{itemize}
        \item In \Cref{lem:characterization:num_interfering}, using ballot monotonicity, we show that there is a function $h$ such that for all $A$ and $\axis$ such that $A$ is not an interval of $\axis$, $\cost(A,\axis) = h(m, k_{\text{app}} ,k_{\text{int}})$, where $m$ is the number of candidates, $k_{\text{app}} =|A|$ is the number of approved candidates and $k_{\text{int}}$ is the number of interfering candidates.
        \item In \Cref{lem:characterization:sum}, using resistance to cloning, we show that for $A$ not an interval of $\axis$, $\cost(A,\axis)$ only depends on the sum $k_{\text{app}}+k_{\text{int}}$, i.e., there is $h$ such that $\cost(A,\axis) = h(m,  k_{\text{app}}+k_{\text{int}})$.
        \item In \Cref{lem:characterization:extremes}, we show that for $A$ not interval of $\axis$, $\cost(A,\axis)$ can only take two values: $\cost(A,\axis) = h_m^*$ if $k_{\text{app}}+k_{\text{int}}=m$ and $\cost(A,\axis) = h_m$ otherwise.
        \item Finally, in \Cref{lem:characterization:at_least,lem:characterization:at_most}, we show that $h^*_m = h_m$ and thus that the rule is VD.
    \end{itemize}

    In all of the following, $A$ and $\axis$ are chosen such that $A$ is not an interval of $\axis$, and thus we already know that $\cost(A,\axis) > 0$. In the first step of the proof, we apply ballot monotonicity to a very symmetric profile.
    \begin{lemma}
    	\label{lem:characterization:num_interfering}
        There is a function $h$ such that $\cost(A,\axis) = h(m,k_{\textup{app}} ,k_{\textup{int}})$ for all $A$ and $\axis$ such that $A$ is not an interval of $\axis$, where $m$ is the number of candidates, $k_{\textup{app}} = |A|$ and $k_{\textup{int}}$ is the number of interfering candidates.
    \end{lemma}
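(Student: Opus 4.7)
The plan is to invoke Lemma~\ref{lem:costfunction} to write $\cost(A,\axis) = g(x_{A,\axis})$ for some function $g$ on approval vectors, and then show via ballot monotonicity that $g$ depends only on $m$, the number of ones $k_{\textup{app}}$, and the number of zeros strictly between the first and last one $k_{\textup{int}}$. The key tool will be a ``transfer inequality'' obtained as follows. Given a ballot $A$ that is non-interval on both axes $\axis, \axis'$, one can construct a profile $P$ tailored so that $\axis$ and $\axis'$ are tied for the optimum in $P + \{A\}$, by combining forcing ballots that are intervals of one axis but not the other (any cost gap can be balanced by such ballots, after rescaling to integer multiplicities). Ballot monotonicity applied to $A$ at $\axis$ then yields
\[
g(x_{A,\axis'}) - g(x_{A,\axis}) \;\le\; g(x_{A'_\axis, \axis'}),
\]
where $A'_\axis$ is the completion of $A$ along $\axis$; the symmetric inequality with $\axis$ and $\axis'$ exchanged holds as well. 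Thus, whenever both $A'_\axis$ is an interval of $\axis'$ and $A'_{\axis'}$ is an interval of $\axis$, the two slacks vanish and we conclude $g(x_{A,\axis}) = g(x_{A,\axis'})$.

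This immediately handles the first case: transpositions of two adjacent candidates $c_i, c_{i+1}$ at positions strictly inside the span of $A$. For such a swap, $A'_\axis$ and $A'_{\axis'}$ coincide with the block of candidates between the span endpoints, and this block remains an interval of the swapped axis (its positions form the same contiguous set up to the internal swap of $c_i, c_{i+1}$). The transfer inequality collapses and yields invariance of $g$ under such interior permutations; in particular $g(x)$ depends only on the span endpoints and $k_{\textup{app}}$.

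The main obstacle is comparing vectors whose spans lie at different positions of the axis. There, the completions fail to be intervals of the other axis, so a single transfer inequality leaves nonzero slack. My approach would be to chain transfer inequalities through a sequence of intermediate axes, using the hypothesis $m \ge 6$ for enough room, and to combine them with the reversal symmetry from Lemma~\ref{lem:costfunction} (which identifies $g(x)$ with $g(\bar x)$ and hence handles the mirror image of a given span). The slack term $g(x_{A'_\axis, \axis'})$ is itself a cost at a vector with strictly more ones than $x$, so an induction running on $(k_{\textup{app}} + k_{\textup{int}})$ downward to the ``full span'' case (where the vector is the all-ones vector and trivially an interval) should force the cumulative slack to vanish. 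Carrying out this induction while maintaining suitable tying profiles and tracking the parameters of the intermediate vectors is the most delicate part of the argument.
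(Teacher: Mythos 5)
Your proposal is not complete: the step you yourself flag as ``the most delicate part'' is precisely the case the lemma is really about, and the machinery you set up to attack it has a structural problem. Two concrete issues. First, the transfer inequality rests on constructing, for an arbitrary pair of axes $\axis,\axis'$, a profile $P$ in which both are \emph{global} optima of $P+\{A\}$. Balancing the gap $g(x_{A,\axis})-g(x_{A,\axis'})$ with integer multiplicities of forcing ballots is not possible in general, because the costs $g(\cdot)$ are arbitrary nonnegative reals and an exact tie would require the gap to lie in the integer span of the available cost differences; and even a tie between $\axis$ and $\axis'$ does not make them optimal among all $m!$ axes. Second, even granting the inequalities, they only close to equalities when both completions $A'_{\axis}$ and $A'_{\axis'}$ are intervals of the \emph{other} axis. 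That happens for interior adjacent transpositions, but the remaining case---vectors with the same $(k_{\text{app}},k_{\text{int}})$ whose spans sit at different positions, e.g.\ $(1,0,1,0,0,0)$ versus $(0,0,1,0,1,0)$, which are not related by reversal---is exactly where the slack terms are nonzero, and chaining one-sided inequalities with nonzero slack does not yield an equality. The proposed downward induction on $k_{\text{app}}+k_{\text{int}}$ is not worked out and it is not clear it can be.

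The paper's proof sidesteps all of this with one idea you are missing: instead of engineering a tying profile per pair of axes, take the profile $P^*$ containing \emph{every} ballot of size $k_{\text{app}}$ exactly once. This profile is invariant under all permutations of candidates, so by neutrality $f(P^*)$ is the set of \emph{all} axes---every pair is tied for free, globally. Now fix a partition $C\cup D\cup B$ of the candidates with $|C|=k_{\text{app}}$, $|D|=k_{\text{int}}$, call an axis \emph{good} if the interfering candidates for the ballot $C$ are exactly $D$, and let the voter holding $C$ complete it to $C\cup D$. For every good axis this replacement is exactly the completion required by ballot monotonicity, so every good axis survives into $f(P')$; comparing $\cost(P^*,\axis)=\cost(P',\axis)+\cost(C,\axis)$ across two good axes realizing the approval vectors $x^{\dagger}$ and $x^{\ddagger}$ immediately gives $g(x^{\dagger})=g(x^{\ddagger})$. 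Since any two vectors with the same $(m,k_{\text{app}},k_{\text{int}})$ are realized by good axes for the same ballot $C$, this handles spans at arbitrary positions in one stroke, with no induction and no slack to control. If you want to salvage your write-up, replace the bespoke tying profiles by this symmetric profile; the rest of your transfer-inequality reasoning then collapses into the paper's argument.
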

    \begin{proof}
        Assume for a contradiction that there are ballots $A^{\dagger}$, $A^{\ddagger}$ and axes $\axis^{\dagger}$ and $\axis^{\ddagger}$ such that $\cost(A^{\dagger},\axis^{\dagger}) \ne \cost(A^{\ddagger},\axis^{\ddagger})$ but $|A^{\dagger}| = |A^{\ddagger}| = k_{\text{app}}$ and the number of interfering candidates $k_{\text{int}}$ is the same in both cases. Denote by $x^{\dagger} = x_{A^{\dagger},\axis^{\dagger}}$ and $x^{\ddagger} = x_{A^{\ddagger},\axis^{\ddagger}}$ the respective approval vectors. Obviously, $x^{\dagger} \ne x^{\ddagger}$.
        
        Consider the set of candidates $\mathcal C = \{c_1,\dots, c_{k_{\text{app}}}\} \cup \{d_1,\dots, d_{k_{\text{int}}}\} \cup \{b_1,\dots, b_{m-(k_{\text{app}}+k_{\text{int}})}\} = C \cup D \cup B$. %
        Let $P^*$ be the profile in which every ballot $A$ of size $|A| = k_{\text{app}}$ is approved by one voter. By neutrality, all axes are chosen.

		Call an axis $\axis$ \emph{good} if the set of interfering candidates of $C$ on $\axis$ is $D$, i.e., if $\{d \notin C: \exists c_i,c_j \in C, c_i \axis d \axis c_j\} = D$. 
        Now, let $P$ be the profile obtained from $P^*$ where the voter with ballot $C$ changes it to $C\cup D$. Since every good axis is chosen at $P^*$, by ballot monotonicity, every good axis must be chosen at $P$.
        
        Let $\axis^1$ and $\axis^2$ be two good axes such that the approval vectors of $C$ on these axes are $x_{C,\axis^1} = x^{\dagger}$ and $x_{C,\axis^2} = x^{\ddagger}$. This is possible since there are $k_{\text{app}}$ approved candidates in $C$ and $k_{\text{int}}$ interfering candidates with respect to both axes. Because both axes are chosen at both $P^*$ and $P$, we have that $\cost(P^*,\axis^1) = \cost(P^*,\axis^2)$  and $\cost(P,\axis^1) = \cost(P,\axis^2)$.
        Note that, for any good axis $\axis$, we have $\cost(C \cup D, \axis) = 0$, and therefore 
        \[
        \cost(P^*,\axis) = \cost(P,\axis) - \cost(C \cup D, \axis) + \cost(C,\axis) = \cost(P,\axis) + \cost(C,\axis).
        \]
        Thus, we deduce $\cost(C,\axis^1) = \cost(C,\axis^2)$ which means $g(x^{\dagger}) = g(x^{\ddagger})$. This contradicts $\cost(A^{\dagger},\axis^{\dagger}) \ne \cost(A^{\ddagger},\axis^{\ddagger})$, and concludes the proof of the lemma.
    \end{proof}

    Next, we use resistance to cloning to show that the cost actually only depends on $k_{\text{app}} + k_{\text{int}}$.
    \begin{lemma}
    	\label{lem:characterization:sum}
        There is a function $h$ such that $\cost(A,\axis) = h(m, k_{\textup{app}} + k_{\textup{int}})$ for all $A$ and $\axis$ such that $A$ is not an interval of $\axis$.
    \end{lemma}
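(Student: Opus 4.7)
The plan is to show $h(m, a, b) = h(m, a+1, b-1)$ for every valid pair $(a, b)$ with $b \geq 2$, which iterated in $a$ gives that $h(m, \cdot, \cdot)$ depends on its second and third arguments only through their sum $k_{\textup{app}} + k_{\textup{int}}$.

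To establish this identity, I would construct a profile $P^{\circ}$ containing a pair of cloned candidates, designed so that resistance to cloning forces two distinct optimal axes $\axis_1, \axis_2 \in f(P^{\circ})$, on which a single ``test'' ballot $A$ has approval vectors with parameters $(a+1, b-1)$ and $(a, b)$ respectively. I would arrange the remaining ballots in $P^{\circ}$ (the ``forcing'' ballots) to be intervals of both axes, so that they contribute the same cost (indeed zero) on both $\axis_1$ and $\axis_2$. Since both axes are optimal, $\cost_f(P^{\circ}, \axis_1) = \cost_f(P^{\circ}, \axis_2)$, and subtracting the shared contributions from the forcing ballots leaves $\cost_f(A, \axis_1) = \cost_f(A, \axis_2)$, which by Lemma~\ref{lem:characterization:num_interfering} gives $h(m, a+1, b-1) = h(m, a, b)$.

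The profile $P^{\circ}$ would be built by starting from a pivot profile with a forced unique axis (obtained via many copies of consecutive pair ballots), then adding a single clone pair whose admissible placements (permitted by resistance to cloning) include both a position where the clone contributes to the approved part of $A$'s span and a position where it contributes to the interfering part, and choosing $A$ so that these placements induce the two different decompositions of the same span. The two admissible placements of the clone would then give rise to the two optimal axes $\axis_1, \axis_2$, both restricting to the same axis of the uncloned profile, via resistance to cloning.

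The hardest part will be the combinatorial design of $P^{\circ}$: I must ensure that both clone placements yield genuinely optimal axes (so that no third axis of strictly lower cost exists) and that the forcing ballots remain intervals on both axes. This requires a careful choice of multiplicities in the forcing structure and a delicate placement of $A$ relative to the clone so that the clone's ``role'' with respect to $A$'s span genuinely differs across the two placements. Once the design is verified, the cost identity follows directly from optimality, additivity of the scoring cost, and Lemma~\ref{lem:characterization:num_interfering}.
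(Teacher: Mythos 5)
There is a genuine obstruction to your construction as described. The first parameter $k_{\textup{app}} = |A|$ is a property of the ballot alone and does not depend on the axis: a candidate either belongs to $A$ (and then counts towards $k_{\textup{app}}$ on \emph{every} axis) or does not (and then, depending on its position, counts towards $k_{\textup{int}}$ or towards nothing). Hence a single test ballot $A$ inside one profile $P^{\circ}$ can never have approval vectors with parameters $(a+1,b-1)$ on $\axis_1$ and $(a,b)$ on $\axis_2$ --- only the interference count can vary across axes, never the approval count. Your phrase ``a position where the clone contributes to the approved part of $A$'s span and a position where it contributes to the interfering part'' conflates membership in $A$ (fixed by the profile) with position on the axis (variable), and this is exactly where the argument breaks. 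At best a one-ballot construction proves $h(m,a,b)=h(m,a,b')$, which is not the trade-off between the two coordinates that the lemma requires.

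The paper's proof repairs precisely this point by using \emph{two} distinct single-copy test ballots of different sizes, namely $\{c_1,c_2\}$ and $\{d_1,c_2\}$ in a base profile on four candidates padded with $q$ copies of forcing ballots. Each of the four surviving axes breaks exactly one of the two test ballots while the other is an interval, and all four axes have the identical cost $h(4,2,1)$, hence all are optimal. Cloning $c_2$, $d_1$, and the dummy $b_1$ to reach the target parameters and candidate count, and invoking resistance to cloning to keep all lifted axes tied, yields $h(m,k_{\textup{app}},k_{\textup{int}}) = h(m,k_{\textup{app}}+k_{\textup{int}}-1,1)$, from which dependence on the sum follows. Your overall strategy (force a tie between axes via cloning, then cancel the shared forcing costs and equate the residual test-ballot costs) is the right one and matches the paper's; you would need to replace the single test ballot by two ballots of sizes differing by the amount you wish to shift between $k_{\textup{app}}$ and $k_{\textup{int}}$, each an interval of the other's optimal axes, and verify the tie in a small base profile before cloning.
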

    \begin{proof}
        For $m=3$, all scoring rules are equivalent. 
        So let $m\geq 4$.
        Let $3\leq k_{\text{app}}+k_{\text{int}} \leq m $. By \Cref{lem:characterization:num_interfering}, there exists $h$ such that $\cost(A,\axis) = h(m, k_{\text{app}}, k_{\text{int}})$ for all $A$ and $\axis$. In this proof, we show that for all $k_{\text{app}}$ and $k_{\text{int}}$, we have $h(m, k_{\text{app}}, k_{\text{int}}) = h(m, k_{\text{app}}+ k_{\text{int}}-1,1)$, implying that the cost function only depends on $k_{\text{app}}+ k_{\text{int}}$.

        This is clearly true for $k_{\text{app}}+ k_{\text{int}}=3$ as in this case the only possibility is $k_{\text{app}} = 2$ and $k_{\text{int}} = 1$, otherwise the ballot is an interval.
        For $k_{\text{app}}+ k_{\text{int}}>3$, let the set of candidates be $\mathcal C = \{c_1,\dots, c_{k_{\text{app}}}\} \cup \{d_1,\dots, d_{k_{\text{int}}}\} \cup \{b_1,\dots, b_{m-(k_{\text{app}}+k_{\text{int}})}\} = C \cup D \cup B$. Note that if $k_{\text{app}}+k_{\text{int}}=m$, $B = \emptyset$.

        Assume first that $k_{\text{app}}+k_{\text{int}}<m$.
        Let $z$ be the minimal value of $\cost(A,\axis)$, taken over all ballots $A$ and axes $\axis$ defined on $m$ or fewer candidates, such that $A$ is not an interval of $\axis$. Further, let $y = \max(h(4,2,1), h(m,k_{\text{app}},k_{\text{int}}), h(m,k_{\text{app}}+k_{\text{int}}-1,1))$.
        Take $q \in \mathbb N$ such that $q > y/z$ and consider the following profile $P$ on 4 candidates $\{c_1,c_2,d_1,b_1\}$.
		\vspace{-10pt}

        \begin{minipage}{0.5\textwidth}
            \begin{align*}
                \\[7pt]
                q & \times \{c_1, d_1\}, \\
                q & \times \{c_1, d_1, c_2\}, \\
                q & \times \{d_1\}, \\
                q & \times \{c_2\}, \\
                q & \times \{b_1\}, \\
                1 & \times \{c_1, c_2\},\\
                1 & \times \{d_1,c_2\}.\\
            \end{align*}
            \end{minipage}
            \begin{minipage}{0.5\textwidth}
            \begin{tikzpicture}
                \renewcommand{\rowheight}{0.595cm}
                \axisheader{0}{c_1,d_1,c_2,b_1}

                \multiplicity{1}{$q$}
                \interval{1}{1}{2}

                \multiplicity{2}{$q$}
                \interval{2}{1}{3}

                \multiplicity{3}{$q$}
                \interval{3}{2}{2}

                \multiplicity{4}{$q$}
                \interval{4}{3}{3}

                \multiplicity{5}{$q$}
                \interval{5}{4}{4}

                \multiplicity{6}{1}
                \interval{6}{1}{1}
                \interval{6}{3}{3}

                \multiplicity{7}{1}
                \interval{7}{2}{3}
            \end{tikzpicture}
            \end{minipage}

        Clearly, any axis such that $\{c_1,d_1\}$ or $\{c_1,d_1,c_2\}$ do not form an interval has cost greater than $q \cdot z > y$. The other axes (up to reversal) are:
        \begin{align*}
            \underline{c_1}d_1\underline{c_2}b_1 \\
            b_1\underline{c_1}d_1\underline{c_2} \\
            \underline{d_1}c_1\underline{c_2}b_1 \\
            b_1\underline{d_1}c_1\underline{c_2}
        \end{align*}
        They each break one of $\{c_1,c_2\}$ or $\{d_1,c_2\}$ with cost $h(4,2,1) \le y$, thus they all are optimal.

        Now, clone $c_2$ into $\{c_2, \dots, c_{k_{\text{app}}}\}$, $d_1$ into $\{d_1,\dots,d_{k_{\text{int}}}\}$ and $b_1$ into $\{b_1,\dots,b_{m-(k_{\text{app}}+k_{\text{int}})}\}$. Clearly, all clones of each category need to be next to each other on the axis, otherwise the $q$ ballots containing these clones (obtained from $\{d_1\}$, $\{c_2\}$ or $\{b_1\}$) would induce a cost greater than $q \cdot z > y$. Combining this with resistance to cloning gives that the axes should be of one of the following forms (up to change of the positions of the clones):
        \begin{align*}
            \underline{c_1}d_1\dots d_{k_{\text{int}}}\underline{ c_2\dots c_{k_{\text{app}}} }b_1\dots b_{m-(k_{\text{app}}+k_{\text{int}})} \\
            b_1\dots b_{m-(k_{\text{app}}+k_{\text{int}})}\underline{c_1}d_1\dots d_{k_{\text{int}}}\underline{ c_2\dots c_{k_{\text{app}}} } \\
            \underline{d_1\dots d_{k_{\text{int}}} } c_1 \underline{ c_2\dots c_{k_{\text{app}}} } b_1\dots b_{m-(k_{\text{app}}+k_{\text{int}})} \\
            b_1\dots b_{m-(k_{\text{app}}+k_{\text{int}})} \underline{d_1\dots d_{k_{\text{int}}} } c_1 \underline{ c_2\dots c_{k_{\text{app}}} }
        \end{align*}
        Indeed, for all these axes, there is only one ballot that is not an interval. The first two axes break the one obtained from $\{c_1,c_2\}$ (by adding clones) with $k_{\text{app}}$ approved candidates and $k_{\text{int}}$ interfering ones, while the last two break the ballot obtained from $\{d_1,c_2\}$ with $k_{\text{app}}+k_{\text{int}}-1$ approved candidates and $1$ interfering one ($c_1$). Thus, the first two have cost $h(m,k_{\text{app}},k_{\text{int}}) \le y$ and the last two have cost $h(m,k_{\text{app}}+k_{\text{int}}-1,1) \le y$. Thus, they are respectively the axes with lowest cost that can be reduced to the axes obtained with $4$ candidates $\{c_1,c_2,d_1,b_1\}$. By resistance to cloning, this means that each of these axes should be among the optimal ones. This directly implies that  $h(m,k_{\text{app}},k_{\text{int}}) = h(m,k_{\text{app}}+k_{\text{int}}-1,1)$. 
        
        The proof if $k_{\text{app}}+k_{\text{int}} = m$ is exactly the same, but without the candidates $b_i$.
    \end{proof}

    We now proceed to show that for a given $m$, the cost only depends on whether both extremes of the axis are approved.
    \begin{lemma}
    	\label{lem:characterization:extremes}
        For each number $m$ of alternatives, there are positive constants $h^*_m$ and $h_m$ such that for all $A$ and $\axis$ such that $A$ is not an interval of $\axis$, 
        \[
        	\cost(A,\axis) = \begin{cases}
        		h^*_m & \text{if $A$ contains both extremes of $\axis$,} \\
        		h_m & \text{otherwise.}
        	\end{cases}
        \]
    \end{lemma}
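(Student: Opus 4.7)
By \Cref{lem:characterization:sum}, for each ballot $A$ that is not an interval of $\axis$, the cost $\cost(A, \axis)$ depends only on $m$ and the span $s := k_{\text{app}} + k_{\text{int}}$; write this as $h(m, s)$, and set $h_m^* := h(m, m)$. What remains is to show that $h(m, s)$ is constant as $s$ ranges over $\{3, \ldots, m-1\}$, which then defines the value $h_m$. My plan is to prove $h(m, s) = h(m, s+1)$ for every $s \in \{3, \ldots, m-2\}$, from which the conclusion follows by iteration.

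For each such $s$, I would construct a profile $P$ on candidates $c_1, \ldots, c_m$ admitting two simultaneously optimal axes $\axis_1 := c_1 c_2 \cdots c_m$ and $\axis_2 := c_1 \cdots c_{s-1} c_{s+1} c_s c_{s+2} \cdots c_m$ (differing only by the swap of $c_s$ and $c_{s+1}$), and containing a distinguished ballot $A^\# := \{c_1, c_{s+1}\}$, whose span is $s+1$ on $\axis_1$ and $s$ on $\axis_2$. The skeleton of $P$ would consist of $q$ copies of each adjacency $\{c_i, c_{i+1}\}$ for $i \in \{1, \ldots, m-1\} \setminus \{s-1, s+1\}$, which for $q$ large enough forces the three blocks $[c_1, \ldots, c_{s-1}]$, $[c_s, c_{s+1}]$, and $[c_{s+2}, \ldots, c_m]$ to be contiguous on any optimal axis. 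Symmetric auxiliary ballots around the swap position (for instance, a matched pair $\{c_{s-1}, c_{s+1}\}$ and $\{c_{s-1}, c_s\}$, together with analogous ballots ruling out reflections and permutations of the blocks) would equalize the cost contributions of the non-$A^\#$ ballots across $\axis_1$ and $\axis_2$.

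Given such $P$, I would finish the argument by a two-sided application of ballot monotonicity. The interval completion of $A^\#$ on $\axis_1$ is $\{c_1, \ldots, c_{s+1}\}$, and because the transposed pair $c_s, c_{s+1}$ lies inside this completion, it also forms an interval on $\axis_2$. Replacing $A^\#$ by this completion drops $\cost(\axis_1, P)$ by $h(m, s+1)$ and $\cost(\axis_2, P)$ by $h(m, s)$; since $\axis_1 \in f(P)$ and the two axes are tied in $P$, ballot monotonicity yields $h(m, s+1) \ge h(m, s)$. The symmetric application launched from $\axis_2$, using the completion $\{c_1, \ldots, c_{s-1}, c_{s+1}\}$ of $A^\#$ on $\axis_2$ (which is also an interval on $\axis_1$), gives the reverse inequality.

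The hard part will be engineering the auxiliary ballots so that $\axis_1$ and $\axis_2$ are indeed both optimal in $P$: the adjacency skeleton admits many other low-cost axes obtained by reversing one of the three blocks or permuting them, and some of these would make $A^\#$ an interval, beating $\axis_1$ and $\axis_2$. I expect this to require either additional pair constraints involving non-adjacent candidates or a cloning reduction in the spirit of the proof of \Cref{lem:characterization:sum}, while preserving the symmetry that keeps $\axis_1$ and $\axis_2$ tied.
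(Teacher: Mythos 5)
Your reduction to showing $h(m,s)=h(m,s+1)$ is the right target, and your forward direction is sound in principle: the completion of $A^\#=\{c_1,c_{s+1}\}$ on $\axis_1$ is $\{c_1,\dots,c_{s+1}\}$, which is indeed an interval of both $\axis_1$ and $\axis_2$, so (granting that you can engineer $\axis_1,\axis_2$ as tied optima) ballot monotonicity launched from $\axis_1$ gives $h(m,s+1)\ge h(m,s)$. But the reverse direction contains a genuine error: the completion of $A^\#$ on $\axis_2$ is $\{c_1,\dots,c_{s-1},c_{s+1}\}$, and this is \emph{not} an interval of $\axis_1$, since $c_s$ lies strictly between $c_{s-1}$ and $c_{s+1}$ on $\axis_1$ and is excluded. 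With that correction, the replacement leaves the cost of $\axis_1$ unchanged (the new ballot still has span $s+1$ on $\axis_1$) while lowering the cost of $\axis_2$ by $h(m,s)$, and ballot monotonicity from $\axis_2$ yields only the vacuous inequality $-h(m,s)\le 0$. Moreover this is not a repairable detail within your framework: whenever two axes differ by a single adjacent swap and the span of a ballot is strictly larger on one of them, the swapped pair necessarily straddles the boundary of the ballot's hull on the smaller-span axis, so the completion taken on the cheaper side can never be an interval of the other axis. Hence this technique only ever produces the inequality $h(\text{smaller span})\le h(\text{larger span})$, i.e., monotonicity of $h(m,\cdot)$, never the equality the lemma needs.

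This is also where your route diverges from the paper's, which does not use ballot monotonicity here at all: it builds a five-candidate profile whose optimal axes include both an axis on which the distinguished ballot has span $k$ and one on which it has span $k-1$, and then invokes \emph{resistance to cloning} to force all the corresponding extended axes on $m$ candidates to be simultaneously optimal, which directly yields the equality $h(m,k)=h(m,k-1)$ in one step. To complete your argument you would need a second mechanism of this kind (or the paper's) for the direction $h(m,s+1)\le h(m,s)$; as written, the proposal establishes at most that $h(m,\cdot)$ is nondecreasing on $[3,m-1]$, which does not suffice.
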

    \begin{proof}
       From \Cref{lem:characterization:sum}, we know that there is $h$ such that $\cost(A,\axis) = h(m, k_{\text{app}}+k_{\text{int}})$. In the rest of the proof, we write $k = k_{\text{app}} + k_{\text{int}}$.
       
       To obtain the values required by the lemma, we can simply take $h^*_m = h(m, m)$, as $A$ contains both extremes of $\axis$ if and only if $k = m$. To define $h_m$, we will show that for $4 \le k < m$, we have $h(m,k) = h(m,k-1)$. Since we know that $k \ge 3$ (otherwise the ballot is an interval), this implies that there is $h_m$ such that for all $k \in [3,m-1]$, $h(m,k) = h_m$, proving the lemma. 
       
       Let $k \in [4,m-2]$ (the proof for $k = m-1$ is identical, but without the candidates $b_i$).
        As in the proof of \Cref{lem:characterization:sum}, let $z$ be the minimum cost for a non-interval ballot of an axis (for at most $m$ candidates) and let $y = \max(h(5,3),h(m,k),h(m,k-1))$. Take $q \in \mathbb N$ such that $q > y/z$ and consider the following profile $P$ over the set of 5 candidates $C = \{a_1,b_1,c_1,c_2,d_1\}$:
        \vspace{-10pt}
        
        \begin{minipage}{0.5\textwidth}
            \begin{align*}
                \\[7pt]
                q & \times \{d_1,c_2\}, \\
                q & \times \{c_2, a_1\}, \\
                q & \times \{c_1,d_1,c_2,a_1\}, \\
                q & \times \{d_1\}, \\
                q & \times \{c_2\},\\
                q & \times \{b_1 \}, \\
                1 & \times \{c_1,c_2\}.\\
            \end{align*}
            \end{minipage}
            \begin{minipage}{0.5\textwidth}
            \begin{tikzpicture}
                \renewcommand{\rowheight}{0.595cm}
                \axisheader{0}{c_1,d_1,c_2,a_1,b_1}

                \multiplicity{1}{$q$}
                \interval{1}{2}{3}

                \multiplicity{2}{$q$}
                \interval{2}{3}{4}

                \multiplicity{3}{$q$}
                \interval{3}{1}{4}

                \multiplicity{4}{$q$}
                \interval{4}{2}{2}

                \multiplicity{5}{$q$}
                \interval{5}{3}{3}

                \multiplicity{6}{$q$}
                \interval{6}{5}{5}

                \multiplicity{7}{1}
                \interval{7}{1}{1}
                \interval{7}{3}{3}

            \end{tikzpicture}
            \end{minipage}

        On this profile, any axis breaking one of the ballots of the first three categories induces a cost greater than $q \cdot z > y$. The only axes that do not break these ballots are the following:
        \begin{align*}
            \underline{c_1}d_1\underline{c_2}a_1b_1 \\
            b_1\underline{c_1}d_1\underline{c_2}a_1 \\
            d_1\underline{c_2}a_1\underline{c_1}b_1 \\
            b_1d_1\underline{c_2}a_1\underline{c_1}
        \end{align*}
        Note that the only ballot in $P$ that is not an interval of these axes is $\{c_1,c_2\}$ with a cost of $h(5,3) \le y$. Thus, all these axes are optimal for $f$.

        Now, clone $c_2$ into $\{c_2, \dots, c_{k-2}\}$, $d_1$ into $\{d_1,d_2\}$ and $b_1$ into $\{b_1,\dots,b_{m-k-1}\}$. Clearly, all clones of each category need to be next to each other on the axis, otherwise the $q$ ballots containing these clones (obtained from $\{d_1\}$, $\{c_2\}$ or $\{b_1\}$) would induce a cost greater than $q \cdot z > y$. Combining this with resistance to cloning gives that the axes should be of one of the following forms (up to change of the positions of the clones):
        \begin{align*}
            \underline{c_1}d_1d_2\underline{c_2\dots c_{k-2}}a_1 b_1 \dots b_{m-k-1} \\
          b_1 \dots b_{m-k-1}   \underline{c_1}d_1d_2\underline{c_2\dots c_{k-2}}a_1 \\
            d_1d_2\underline{c_2\dots c_{k-2}}a_1  \underline{c_1}b_1 \dots b_{m-k-1} \\
          b_1 \dots b_{m-k-1} d_1d_2\underline{c_2\dots c_{k-2}}a_1  \underline{c_1}
        \end{align*}
        Indeed, for all these axes, there is only one ballot that is not an interval: $\{c_1,c_2,\dots,c_{k-2}\}$ with cost $h(m,k)$ for the first two axes and $h(m,k-1)$ for the last two axes. In both cases, this cost is at most $y$, so they are respectively the axes with lowest cost that can be reduced to the axes obtained with $5$ candidates $\{a_1,b_1,c_1,c_2,d_1\}$. By resistance to cloning, this means that each of these axes should be among the optimal ones. This  implies that  $h(m,k) = h(m,k-1)$. The proof if $k = m-1$ is exactly the same, but without candidates $b_i$.

        This proves that for a given $m$, there exist some value $h_m$ such that for all non-interval ballots $A$ on $\axis$ with $k_{\text{app}} + k_{\text{int}} < m$, we have $\cost(A,\axis) = h_m$.
    \end{proof}

    Finally, we prove that for all $m \ge 4$, we have $h_m = h_m^*$. The argument proceeds in three parts: (i) for all $m \ge 4$, we have $h_m^* \le h_m$ (\Cref{lem:characterization:at_least}), (ii) for all $m \ge 6$, we have $h_m^* \ge h_m$ (\Cref{lem:characterization:at_most}), and finally (iii) for all $m \ge 4$, if $h^*_{m+1} = h_{m+1}$, then $h_m^* = h_m$. 
    
    \begin{lemma} \label{lem:lem5vd}
    	\label{lem:characterization:at_most}
        $h^*_m\leq h_m$ for all $m\geq 4$.
    \end{lemma}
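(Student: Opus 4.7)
The plan is to mirror the template of \Cref{lem:characterization:num_interfering,lem:characterization:sum,lem:characterization:extremes}: first prove the inequality for a specific small base case via an explicit profile on $4$ or $5$ distinct candidate types, then extend to arbitrary $m \ge 4$ by cloning, invoking resistance to cloning to propagate optimality. The base profile will be engineered so that neutrality forces a pair of structurally different axes $\axis_1, \axis_2$ to lie simultaneously in $f(P)$; on $\axis_1$ a distinguished ballot $A$ will have cost $h^*_m$ (both extremes approved) while on $\axis_2$ it will have cost $h_m$ (not both extremes). Optimality of both axes, combined with ballot monotonicity applied to the right ballot, will then pin down the inequality $h^*_m \le h_m$.

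Concretely, I would first build a profile on a small candidate set $\{a, a', c_1, \dots, c_{k}\}$ in which $a$ and $a'$ play symmetric roles forced by a $\pi$-invariant block of chain and triple ballots (of large multiplicity $q$), together with one or two distinguished ballots involving the extreme positions. By the same counting as in the proofs of \Cref{lem:characterization:sum,lem:characterization:extremes}, choosing $q$ large compared to $\max(h^*_m, h_m)$ lets me rule out all axes except a short list, and the $\pi$-symmetry shared by the chain and the distinguished ballots forces the two candidate axes $\axis_1, \axis_2$ to have exactly equal cost in $P$. I would then apply ballot monotonicity to one of these tied-optimal axes and to the distinguished ballot whose cost changes, obtaining a new profile $P'$ in which the inequality $\cost(\axis_1, P') \le \cost(\axis_2, P')$ becomes a clean statement of the form $h^*_m \le h_m$ (possibly up to harmless additive terms that cancel by construction).

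To extend from the base case to all $m \ge 4$, I would add clones to the ``middle'' of the axis so that the resulting profile on $m$ candidates still has the same pair of tied-optimal axes (now with the new candidates inserted adjacent to their originals). Resistance to cloning guarantees that the tied-optimal structure survives cloning, so the inequality derived for the base case propagates to general $m$.

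The main obstacle is directional: as one sees by directly running ballot monotonicity on the simplest symmetric profile (chain plus a pair of complementary distinguished ballots), the inequality that comes out most naturally is $h_m \le h^*_m$, which is the content of the companion lemma \Cref{lem:characterization:at_least} rather than the one we need. To get the direction $h^*_m \le h_m$, the construction must be set up so that, when one completes a ballot of cost $h_m$ (not both extremes) via ballot monotonicity on the axis where it has that cost, the resulting completed ballot is already an interval of the competing axis, so that completing it does not decrease the competing axis's cost. Achieving this requires breaking the obvious symmetry with an asymmetric clone placement, and verifying that after the clone-insertion step the candidate set containing both extremes of one axis but not the other admits a completion that stays inside an interval of the other axis. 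This is the delicate structural point of the proof; once such a configuration is exhibited, the inequality follows immediately from the definition of $f(P')$ given by ballot monotonicity.
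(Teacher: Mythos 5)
There is a genuine gap, and it sits exactly at the point you flag as ``the delicate structural point'': the configuration your argument needs cannot exist. You want two tied-optimal axes $\axis_1,\axis_2$ and a distinguished ballot $A$ with $\cost(A,\axis_1)=h^*_m$ (so $A$ contains both extremes of $\axis_1$) and $\cost(A,\axis_2)=h_m$ (so $A$ omits at least one extreme of $\axis_2$), such that the interval completion $A'$ of $A$ with respect to $\axis_2$ is an interval of $\axis_1$. But $A'\supseteq A$ contains both extremes of $\axis_1$, and the only interval of $\axis_1$ containing both of its extremes is $C$ itself; on the other hand $A'$ is contained in the $\axis_2$-span of $A$, which misses an extreme of $\axis_2$, so $A'\subsetneq C$. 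Hence $A'$ is never an interval of $\axis_1$; in fact $A'$ still contains both extremes of $\axis_1$, so by the preceding lemma $\cost(A',\axis_1)=h^*_m$ and the cost of $\axis_1$ does not change at all when you complete $A$. Ballot monotonicity applied to $\axis_2$ then yields only $\cost(P,\axis_2)-h_m\le\cost(P,\axis_1)$, i.e.\ the vacuous $-h_m\le 0$ under a tie. So the monotonicity-based engine cannot produce $h^*_m\le h_m$, which is consistent with your own observation that the ``natural'' direction of that argument is the opposite inequality; the asymmetric construction you defer to is not merely delicate but impossible.

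The paper's proof uses a different mechanism, and ballot monotonicity plays no role in it. One assumes $h^*_m>h_m$ for contradiction and builds a profile on $\{c_1,c_2,b_1,d_1,\dots,d_{m-3}\}$ with heavy ballots forcing the axis to be $b_1 c_1 d_1\cdots d_{m-3}c_2$ or $c_1 b_1 d_1\cdots d_{m-3}c_2$ (up to reversal), plus one copy of $\{c_1,c_2\}$; under the contradiction hypothesis only the first (cost $h_m$) is optimal. One then \emph{adds} a clone $b_2$ of $b_1$, notes that all admissible placements of $b_2$ give the same cost $h_{m+1}$ so all are optimal, and then \emph{deletes the original} $b_1$, viewing it as a clone of $b_2$. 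Resistance to cloning (part (1)) forces both $b_2c_1d_1\cdots c_2$ (cost $h_m$) and $c_1b_2d_1\cdots c_2$ (cost $h^*_m$) to be optimal in the reduced profile, contradicting $h^*_m>h_m$. The double use of clone resistance --- insert a clone, then remove the other member of the pair --- is the idea your proposal is missing; your use of cloning only to propagate a base case to larger $m$ does not substitute for it.
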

\begin{proof}
    First, we show $h^*_m\leq h_m$ for all $m \ge 4$. Let $m \ge 4$, and assume for a contradiction that $h^*_m > h_m$. Again, let $z > 0$ be the minimal cost of a non-interval ballot on any axis defined on at most $m$ candidates. Take $q \in \mathbb N$ such that $q > \max(h_m,h_{m+1})/z$, and consider the profile $P$ on $m$ candidates $\mathcal C = \{c_1,c_2\} \cup \{b_1\} \cup \{d_1,\dots,d_{m-3}\}$ with $D = \{d_1,\dots,d_{m-3}\}$:
    \vspace{-10pt}
    
    \begin{minipage}{0.5\textwidth}
        \begin{align*}
            \\[7pt]
            q &\times D \cup \{c_2\}, \\
            q &\times D \cup \{c_1,b_1\}, \\
            1 &\times \{c_1,c_2\}.\\
        \end{align*}
        \end{minipage}
        \begin{minipage}{0.5\textwidth}
        \begin{tikzpicture}
            \renewcommand{\rowheight}{0.595cm}
            \renewcommand{\colwidth}{0.8cm}
            \axisheader{0}{c_1,b_1,d_1,\cdots,d_{m-3},c_2}

            \multiplicity{1}{$q$}
            \interval{1}{3}{6}

            \multiplicity{2}{$q$}
            \interval{2}{1}{5}

            \multiplicity{3}{1}
            \interval{3}{1}{1}
            \interval{3}{6}{6}

        \end{tikzpicture}
        \end{minipage}

    Note that all axes that breaks one of the first two ballots have cost at least $q \cdot z > h_m$. The other axes are of the following form (up to change of positions among candidates of $D$):
    \begin{align*}
        \underline{c_1}b_1d_1\dots d_{m-3}\underline{c_2} \\
        b_1\underline{c_1}d_1\dots d_{m-3}\underline{c_2}
    \end{align*}
    The only ballot that is not an interval of these axis is $\{c_1,c_2\}$ with a cost of $h^*_m$ on the first axis and $h_m$ on the second one. Since $h_m < h^*_m$, this means only the second axis is optimal. 
    
    Now, let's clone $b_1$ into $\{b_1,b_2\}$. Again, all axes that do not comply with the ballots of the first two categories have cost higher than $q \cdot z > h_{m+1}$. The only other axes that generalizes $b_1c_1d_1\dots d_{m-3}c_2$ and do not break the ballots of the first two categories are:
    \begin{align*}
        \underline{b_2}b_1c_1d_1\dots d_{m-3}c_2 \\
        b_1\underline{b_2}c_1d_1\dots d_{m-3}c_2 \\
        b_1c_1\underline{b_2}d_1\dots d_{m-3}c_2 
    \end{align*}
    The cost of $\{c_1,c_2\}$ on any of these axes is $h_{m+1}$. By resistance to cloning, at least one of them should be among the optimal axes. Since they all have the same cost, they all are optimal for this profile. 

    Now, let us remove the clone $b_1$ of $b_2$. By resistance to cloning, the following two axes should both be among the optimal axes:
    \begin{align*}
        \underline{b_2}c_1d_1\dots d_{m-3}c_2 \\
        c_1\underline{b_2}d_1\dots d_{m-3}c_2 
    \end{align*}
    Since the only ballot that is not an interval of these axes is $\{c_1,c_2\}$, their respective costs are $h_m$ and $h^*_m$. However, we assumed that $h^*_m > h_m$, so the second axis cannot be optimal, a contradiction.
\end{proof}

    \begin{lemma}
    	\label{lem:characterization:at_least}
        $h^*_m  \geq h_m$ for all $m\geq 6$.
    \end{lemma}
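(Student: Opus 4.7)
The plan is to argue by contradiction, mirroring exactly the strategy of \Cref{lem:characterization:at_most} but cloning $c_1$ in place of $b_1$. I would suppose that $h^*_m < h_m$ and reuse the same profile $P$ on $m$ candidates $\{c_1, c_2, b_1\} \cup D$ (with $D = \{d_1, \dots, d_{m-3}\}$) as in that lemma: $q$ copies of $D \cup \{c_2\}$, $q$ copies of $D \cup \{c_1, b_1\}$, and a single copy of $\{c_1, c_2\}$, with $q$ taken large enough that any axis breaking a $q$-ballot costs strictly more than $\max(h_m, h_{m+1})$. A direct case analysis shows that the only axes preserving both $q$-ballots are, up to reversal, $c_2 D c_1 b_1$ (cost $h_m$, since $c_1$ sits in the penultimate slot) and $c_2 D b_1 c_1$ (cost $h^*_m$, with $\{c_1, c_2\}$ hitting both extremes). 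Under the contradictory hypothesis, the second axis and its reverse are the unique optima of $P$.

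Next, I would clone $c_1$ into $\{c_1, c_1'\}$ to produce a profile $P'$ on $m+1$ candidates in which the ballot $\{c_1, c_2\}$ becomes $\{c_1, c_1', c_2\}$ and $D \cup \{c_1, b_1\}$ becomes $D \cup \{c_1, c_1', b_1\}$. Restricting again to the axes that preserve both $q$-ballots, each such axis has, up to reversal, the form $c_2 D X Y Z$ with $\{X, Y, Z\} = \{c_1, c_1', b_1\}$. The ballot $\{c_1, c_1', c_2\}$ contains both extremes of the axis precisely when the last slot $Z$ is occupied by $c_1$ or $c_1'$, in which case its cost is $h^*_{m+1}$, and otherwise its cost is $h_{m+1}$. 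Invoking \Cref{lem:characterization:at_most} to get $h^*_{m+1} \le h_{m+1}$, the optimum cost on $P'$ is $h^*_{m+1}$, and in particular the axis $\axis^* := c_2 D c_1 b_1 c_1'$ lies in $f(P')$.

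The key final step is to observe that $P'_{-c_1'} = P$, so direction~(1) of resistance to cloning forces the restriction $\axis^*_{-c_1'} = c_2 D c_1 b_1$ to lie in $f(P)$. But this axis has cost $h_m$, which strictly exceeds the optimal cost $h^*_m$ of $P$ under the assumption, which is the desired contradiction. I expect the bulk of the technical work to lie in the bookkeeping of the two case analyses, i.e., verifying that the listed axes really are all those preserving the $q$-ballots both in $P$ and in $P'$, and in cleanly invoking the extremes/non-extremes dichotomy of \Cref{lem:characterization:extremes} to pin down which of $h^*_{m+1}$ and $h_{m+1}$ is the cost of each candidate axis in $P'$.
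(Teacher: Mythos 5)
Your proof is correct, but it takes a genuinely different route from the paper's. The paper assumes $h^*_{m+1} < h_{m+1}$ and builds a fresh gadget on $\{c_1,c_2,d_1,d_2\}\cup B$ in which the two sandwich ballots $\{d_1,c_2\}$ and $\{d_2,c_2\}$ pin $c_2$ to the interior of every surviving axis; hence \emph{all} non-breaking axes cost $h_m$ and are tied, and after cloning $c_1$ into $\{c_1,c_3\}$ the axis placing the two clones at opposite extremes strictly beats every extension of a chosen optimal axis, violating clause~(2) of resistance to cloning (existence of an extension of an optimum). You instead assume $h^*_m < h_m$ directly, recycle the profile of \Cref{lem:characterization:at_most} so that the hypothesis singles out the both-extremes axis as the unique optimum of $P$, and after cloning $c_1$ you use the already-established inequality $h^*_{m+1}\le h_{m+1}$ to exhibit an optimal axis of $P'$ whose restriction is suboptimal for $P$, violating clause~(1) (restrictions of optima are optima). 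Your case analyses check out: $c_2$ is forced to an extreme by the $(m{-}1)$-candidate ballot, $D$ is forced adjacent to it, the three remaining candidates fill the last slots, and the cost dichotomy of \Cref{lem:characterization:extremes} is applied correctly in both $P$ and $P'$. Your version is shorter, leans on \Cref{lem:characterization:at_most} where the paper's is self-contained, and in fact only needs $D\neq\emptyset$, i.e.\ $m\ge 4$, so it would render the paper's final bridging step (deducing $h_m=h^*_m$ for $m\in\{4,5\}$ from the case $m+1$) unnecessary. The only loose end — that $q$ must be calibrated against the minimum non-interval cost over axes on up to $m+1$ candidates, since the cloned profile lives there — is a technicality the paper's own proofs share and is trivially repaired.
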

    \begin{proof}
        Let $m \ge 5$ and assume for a contradiction that $h^*_{m+1} < h_{m+1}$. Consider the set of candidates $\mathcal C = \{c_1, c_2\}\cup \{d_1,d_2\}\cup\{b_1,\dots b_{m-4}\}$ with $B =\{b_1,\dots b_{m-4}\} \ne \emptyset$. Again, let $z$ be the lowest cost of any non-interval approval ballot on any axis defined on at most $m$ candidates. Let $y = \max (h^*_m, h^*_{m+1})$ and take $q \in \mathbb N$ such that $q > y/z$, and consider the following profile:
        \vspace{-10pt}

        \begin{minipage}{0.5\textwidth}
            \begin{align*}
                \\[7pt]
                q & \times \{d_1,c_2\}, \\
                q & \times \{d_2,c_2\}, \\
                q & \times \{b_1,\dots,b_{m-4}\},\\
                1 & \times \{c_1,c_2\}.\\
            \end{align*}
            \end{minipage}
            \begin{minipage}{0.5\textwidth}
            \begin{tikzpicture}
                \renewcommand{\rowheight}{0.595cm}
                            \renewcommand{\colwidth}{0.8cm}
                \axisheader{0}{d_1,c_2,d_2,c_1,b_1,\cdots,b_{m-4}}
    
                \multiplicity{1}{$q$}
                \interval{1}{1}{2}
    
                \multiplicity{2}{$q$}
                \interval{2}{2}{3}

                \multiplicity{3}{$q$}
                \interval{3}{5}{7}

                \multiplicity{4}{1}
                \interval{4}{2}{2}
                \interval{4}{4}{4}
    
            \end{tikzpicture}
            \end{minipage}
        Again, any axis breaking one of the ballots of the first three categories induces a cost of at least $q \cdot z > y$. These ballots are intervals of an axis $\axis$ if $\axis$ contains the interval $d_1 \axis c_2 \axis d_2$ and the set $B$ forms an interval. It can have $B$ before or after $d_1 \axis c_2 \axis d_2$ on the axis, and $c_1$ between the two intervals or on one extremity of the axis. In any axis of this kind, the only ballot that is not an interval is $\{c_1,c_2\}$ with cost $h_m \le y$, since $c_2$ is not an extremity of the axis. Thus, all these axes are selected by the rule. In particular, the axis $\axis^* = d_1\underline{c_2}d_2\underline{c_1}b_1\dots b_{m-4}$ is selected.

        Let us now clone $c_1$ into $\{c_1,c_3\}$, thereby obtaining $m+1 \ge 6$ candidates. Any axis generalizing $\axis^*$ breaks at least the ballot $\{c_1,c_2,c_3\}$, and at least one extreme of the axis is not part of that ballot (since $c_1$ and $c_2$ are not on the extremes). Thus, the cost of any axis generalizing $\axis^*$ is at least $h_{m+1}$. However, consider the axis ${\axis'} = \underline{c_1}d_1\underline{c_2}d_2b_1\dots b_{m-4} \underline{c_3}$. The only ballot that is not an interval of this axis is $\{c_1,c_2,c_3\}$, and both extremes of the axis are approved by it, so the cost is $h^*_{m+1} < h_{m+1}$. This implies that no axes generalizing $\axis^*$ can be selected as they do not have lowest cost. This contradicts resistance to cloning. Therefore, $h^*_{m+1} \ge h_{m+1}$.
    \end{proof}

    We now know that $h_m = h^*_m$ for all $m \ge 6$ by combining the last two lemmas. Finally, we show that for $m > 4$, $h_{m} = h^*_{m}$ implies $h_{m-1} = h^*_{m-1}$. For this, take $m > 4$ and consider the profile $P$ defined in \Cref{lem:lem5vd}. Recall that the two axes that do not break the ballots appearing $q$ times are of the following form (up to change of positions among candidates in $D$):
    \begin{align*}
        \underline{c_1}b_1d_1\dots d_{m-3}\underline{c_2} \\
        b_1\underline{c_1}d_1\dots d_{m-3}\underline{c_2}
    \end{align*}

    The cost of these axes are respectively $h^*_m$ and $h_m$ so they both are optimal since $h_m = h^*_m$. If we remove the clone $d_{m-3}$, by resistance to cloning the following two axes should be selected by the rule:
    \begin{align*}
        \underline{c_1}b_1d_1\dots d_{m-4}\underline{c_2} \\
        b_1\underline{c_1}d_1\dots d_{m-4}\underline{c_2}
    \end{align*}
    The cost of these two axes are respectively $h^*_{m-1}$ and $h_{m-1}$. Thus, $h^*_{m-1} = h_{m-1}$.

    This implies that for all $m \ge 4$, the cost is $0$ if the ballot is an interval of the axis, and $h_m$ otherwise. Without loss of generality, we can take $h_m = 1$. For $m=3$, the only approval vector that is induced by non interval ballots is $(1,0,1)$, and we can assume without loss of generality that its cost is $1$. Thus, $f$ is equal to VD.
    \let\oldqed\qedsymbol
    \renewcommand{\qedsymbol}{}
  	 \qedhere (Proof of \Cref{thm:vdcharac}) \oldqed
\end{proof}

Regarding the independence of the conditions of the characterization among neutral scoring rules, note that the trivial rule TRIV returning all axes satisfies every axiom but  consistency with linearity, the \emph{genus rule}\footnote{This is the scoring rule with $\cost_{G}(A,\axis) = |\{(x,y) \in A: \exists z, x \axis z \axis y \text{ and } \forall z \text{ s.t. } x \axis z \axis y, z \notin A\}|$.} minimizing the total number of contiguous holes only fails ballot monotonicity, and the BC rule only violates resistance to cloning. We do not have an example showing that neutrality is necessary, but this axiom can be dropped if we allow an infinitely large ground set of candidates, because then resistance to cloning and consistency with linearity imply neutrality for scoring rules using standard arguments \citep[Lemma~1]{brandl2016consistent}.
     
\subsection{Supplementary Result: Resistance to cloning implies almost topological.}
\label{sec:appTopological}
In this section, we investigate the class of rules satisfying resistance to cloning.
A scoring rule $f$ belongs to the class of \emph{topological rules} if there is a monotone function $h$ such that $\cost_f(A,\axis) = h(k)$ for all $A$ and $\axis$, where $k$ is the number of contiguous holes that $A$ creates in $\axis$.

The following axiom of inclusion clearance is a (very mild) counterpart to clearance: While the latter demands that only axes can be chosen in which unapproved candidates are not interfering, the following axiom demands that many such axes must be included in the choice set. In contrast to clearance, inclusion clearance is satisfied by all five introduced rules in this paper.

\property{Inclusion Clearance}{
We say that a rule $f$ satisfies \emph{inclusion clearance} if the following holds:
let $X$ be the set of candidates that are never approved in $P$. Then there is ${\axis} \in f(P)$ such that there is no $A \in P$ with $y,z \in A$, $x\in X$ and $y \axis x \axis z$.
Further, all other $\axis'$ that have $X$ on the extremes and coincide with $\axis$ on $C\setminus X$ are chosen too.
}

\begin{theorem}\label{thm:almostTopological}
    Let $f$ be a neutral scoring rule that is consistent with linearity. If $f$ satisfies cloning consistency and inclusion clearance, then there are functions $h^*$ and $h$ such that for all $A$ and $\axis$,
    \[
    \cost(A,\axis) = \begin{cases}
    	h^*(n) & \text{if $A$ contains both extremes of $\axis$,} \\
    	h(n) & \text{otherwise,}
    \end{cases}
    \]
    where $n$ is the number of contiguous holes $A$ creates in $\axis$. Further, $h^*(n)\leq h(n)\leq h^*(n+1)$ for all $n$.
\end{theorem}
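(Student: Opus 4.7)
The plan is to adapt the structure of the VD characterization in \Cref{sec:appCharacterization}, replacing ballot monotonicity with inclusion clearance where possible. The first step is to apply \Cref{lem:costfunction} to obtain a function $g:\{0,1\}^m\to\mathbb R_{\ge 0}$ such that $\cost_f(A,\axis)=g(x_{A,\axis})$, with $g(x)=0$ iff $x$ is an interval and $g(x)=g(\cev x)$. All subsequent reasoning is about the structure of $g$.

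The core of the proof establishes two invariances of $g$. The first is \emph{duplication invariance}: for any non-interval vector $x$ and position $i$, if $x'$ is obtained from $x$ by repeating the $i$-th bit, then $g(x)=g(x')$. Following the rigid-profile constructions used in \Cref{lem:characterization:num_interfering,lem:characterization:sum,lem:characterization:extremes}, I would build a profile $P$ consisting of many high-multiplicity pair-ballots together with one distinguished ballot $A^\star$ whose vector on some axis $\axis^\star$ is $x$; the pair-ballots force every optimal axis to be structurally close to $\axis^\star$ up to reversal. Cloning the candidate at position $i$ produces a profile whose optimal axes, by resistance to cloning combined with the rigidity, must place the clones adjacent to each other at the intended position, so that $A^\star$ realizes the vector $x'$. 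Comparing minimum costs across the two profiles yields $g(x)=g(x')$. The second invariance is \emph{extreme-zero invariance}: $g(x)=g(0x)=g(x0)$ when $x$ does not have both extremes approved. Starting from a rigid profile realizing $x$, add a never-approved candidate $b$; by inclusion clearance some axis with $b$ at an extreme is optimal, and its second clause ensures both extreme placements of $b$ are chosen simultaneously and thus have equal cost. A further application of resistance to cloning (cloning $b$, which is never-approved, and using duplication invariance from the previous step to collapse the extra leading zero) then equates $g(0x)$ with $g(x)$.

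With these two invariances, any two non-interval vectors sharing the same number $n$ of contiguous holes and the same both-extremes-approved status are related by a sequence of duplications, extreme-zero insertions/deletions, and reversals; hence they share the same value under $g$. This defines the functions $h^*$ and $h$ as in the statement.

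For the inequalities $h^*(n)\le h(n)\le h^*(n+1)$, I would construct specific profiles in which two competing axes realize the two vector types to be compared, forcing the direction of the inequality via optimality of the scoring rule. For $h^*(n)\le h(n)$, one compares a profile-axis pair realizing an $h^*(n)$-vector to the related axis obtained by placing a never-approved candidate at an extreme, which changes the vector into the corresponding $h(n)$-vector; inclusion clearance forces the extreme placement to be optimal, yielding the inequality. For $h(n)\le h^*(n+1)$, one compares a vector with $n$ holes and an unapproved extreme to the vector obtained by appending a $1$ at that extreme, which creates one additional hole and turns the extreme approved; a rigid profile where both axes compete, combined with optimality, yields the inequality. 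The main obstacle throughout is the rigid-profile construction in the duplication-invariance step: without ballot monotonicity, the forcing must rely entirely on large-multiplicity pair-ballots, and we must carefully check that cloning operations do not inadvertently open up alternative optimal axes that break the rigidity argument.
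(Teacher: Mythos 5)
Your high-level architecture (reduce everything to invariances of the vector-cost function $g$ from \Cref{lem:costfunction}, then argue combinatorially that all vectors with the same hole count and extremes-status are equivalent) is in the right spirit, but the central step fails. Your ``duplication invariance'' equates $g(x)$ with $g(x')$ for vectors of \emph{different lengths} by ``comparing minimum costs across the two profiles.'' Resistance to cloning only constrains \emph{which axes} are optimal in $P$ versus $P_{-a}$; it says nothing about the optimal cost \emph{values}, which live on different candidate sets and are in any case only determined up to an independent rescaling for each $m$ (adding per-ballot constants or scaling the cost function does not change a scoring rule). So a rigid profile with a single optimal axis class yields no equation at all after cloning. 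The identities you actually need within a fixed $m$ (e.g.\ redistributing approved candidates among blocks, permuting blocks, trading approved for interfering candidates) must come from profiles with \emph{several tied optimal axes} on which a distinguished ballot realizes different vectors, exactly as in \Cref{lem:characterization:sum}; the paper's proof does this in its steps 2--4 and only afterwards removes the $m$-dependence (step 5) by explicitly normalizing $h(1,0,m)=1$ and proving by induction that cost \emph{differences} agree across $m$, using two axes differing by a single swap. Your extreme-zero invariance $g(x)=g(0x)$ has the same cross-$m$ problem.

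The inequalities also contain a direction error. If a never-approved candidate $b$ sits inside a hole of the distinguished ballot on axis $\axis$ (realizing an $h^*$-type vector) and at an extreme on axis $\axis'$ (realizing an $h$-type vector), then inclusion clearance forces $\axis'$ to be \emph{optimal}, i.e.\ it gives an \emph{upper} bound on the $h$-value by the $h^*$-value of the competing axis. Worked out correctly (removing $b$ from a size-one hole destroys that hole), this yields $h(n)\le h^*(n+1)$ --- the \emph{second} inequality --- not $h^*(n)\le h(n)$. The first inequality cannot come from clearance at all; the paper obtains it from resistance to cloning as in \Cref{lem:characterization:at_most}: assume $h^*(n)>h(n)$, set up a profile whose unique optimal axis realizes $h(n)$, clone a peripheral candidate so that all placements of the clone are tied, then delete the original clone and use resistance to cloning to force an axis realizing $h^*(n)$ to be optimal, a contradiction. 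You would need to import that argument rather than the clearance-based one you sketch.
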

Since many arguments remain similar to the ones in \Cref{sec:appCharacterization}, we only provide an outline of the proof.
\begin{proof}
    The steps are as follows:
    \begin{enumerate}
        \item There is a function $h$ such that $\cost(A,\axis) = h(n,x_1,\dots,x_{n+1} ,y_1,\dots y_n, i, m)$ (or $\cost(A,\axis)= 0$).
        \begin{itemize}
            \item where $n$ is the number of holes $m$ is the number of candidates present in the axis $\axis$, $x_i$ is the cardinality of the $i-th$ approved interval, and $y_i$ is the cardinality of the $i-th$ hole.
        \end{itemize}
        \item There is a function $h$ such that $\cost(A,\axis) = h(n,x,y_1,\dots y_n,m)$
        \begin{itemize}
            \item where $x =|A|= x_1+\dots+x_{n+1}$
        \end{itemize}
        \item There is $h$ such that $\cost(A,\axis) = h(n,x+y, m)$
        \begin{itemize}
            \item where $y$ = $y_1+ \dots y_n$ is the number of interfering candidates.    
        \end{itemize}
        \item There is a function $h$ such that $\cost(A,\axis) = h(n,i,m)$
        \begin{itemize}
            \item where $i=1$ if $A$ contains both extremes of $\axis$ and $i=0$ else.
        \end{itemize}
        \item There is a function $h$ such that $\cost(A,\axis) = h(n,i)$.
        \item For all $n$, we have $h(n,1) \leq h(n,0) \leq h(n+1,1)$.
    \end{enumerate}
    
    Step 1 follows from inclusion clearance and neutrality.
    
    Step 2 follows from two lemmas that work similarly to the characterization of VD. The first shows that $h(n,x_1,\dots, x_m,y_1,\dots y_n,m) = h(n,x_1\pm 1,\dots, x_m\mp 1,y_1,\dots y_n,m)$, while the second shows that we can invert the first $r$ approved intervals, i.e., $h(n,x_1,\dots, x_m,y_1,\dots y_n,m)= h(n,x_r,\dots,x_1,\dots, x_m,y_{r-1},\dots, y_1,y_r ,\dots y_n,m)$.

    Step 3 follows from a lemma which works similarly to the characterization of VD. There, we can flip $x_1$ and $y_1$ and still obtain the same cost. Thus, further combined with the previous two lemmas, we obtain $h(n,x,y_1,\dots y_n,m) = h(n,x+y-n,n,m)$.

    Step 4 works again exactly as in the characterization of VD, taking one hole of size $2$ and one of size $1$.

    Step 5 This uses a new construction. First, normalize $h(1,0,m)=1$ for all $m$.
    Then show (by induction) that $h(n+1,0,m) - h(1,0,m)= h(n+1,0,m+1) - h(1,0,m+1)$ for all $n,m$.
    For this, consider two axes differing in a single swap, $12 \dots m$, $21\dots m$. Take the ballot $\{1,3,5,\dots\}$. Then, assume that the differences are not equal, use this to create a profile where one of these two axes is chosen but not the other and thus cloning consistency is violated.

    For Step 6, the first inequality is obtained exactly as in the VD characterization and the second inequality follows from weak clearance.
\end{proof}

We can further restrict the class of scoring rules to the class of \emph{local} scoring rules fo which the cost does not depend on non-interfering candidates. Formally, a neutral scoring rule is local if $g(x_{A,\axis})$ = $g(x')$ for all $A,\axis$, where $x'$ is the subvector of $x_{A,\axis}$ where the non-interfering $0$'s are cut off.
Note that as long as we use the same rule for all feasible sets, all five introduced rules satisfy locality.

Clearly, among local scoring rules, \Cref{thm:almostTopological} turns into a characterization. We leave it open whether locality is required.

\newpage
\section{Details of the Experiments}

\subsection{Implementation}

\label{sec:appExpe}
In this section, we explain the methods we used for implementing the rules. We focus here on explaining our approach to reduce the runtime.
First, we present how we improved the brute-force method to be usable in all our experiments. Then, we explain the implementation of the Integer Linear Programming (ILP) encodings which we used for two rules: Voter Deletion and Ballot Completion.

\subsubsection{Brute-force Method} \label{sec:appbruteforce}

The brute-force method is straightforward: compute the cost of all the axes for the given profile, and return the ones with minimal cost. However, this approach takes time exponential in $m$, and hence is not usable in practice even for relatively small values of $m$. Thus, we used pruning methods and heuristics.

We start by pre-processing the approval profile. We assume that each voter has a weight $w_i$ (usually the weights are all initially equal to $1$). Then, we aggregate the weights of all the  voters whose approval ballots are identical. For instance, if two voters $i$ and $j$ have the same ballot $A_i = A_j$, we replace them by a unique voter having the ballot $A_i = A_j$ and the weight $w_i+w_j$. 
Moreover, we remove all ballots that are intervals of any axis, and hence do not help to identify the axes with minimal cost: namely, these are empty ballots, singletons, and full ballots ($A = \mathcal C$). 

We keep a variable containing the lowest cost found \emph{so far}, as well as a variable containing all axes with this cost. Every time we compute the cost of an axis for a given profile by adding up the costs of the ballots, the sum might surpass this value before we read the whole profile. We can then move to the next axis. To save as much running time as possible, we order the ballots by decreasing weights so that we start by the ballots of highest weight. 

Another similar method that reduces the running time is the following: for any axis on $m$ candidates, we can compute a lower bound of its cost by removing two candidates from the profile and computing the cost of the reduced axis on the $m-2$ remaining candidates. This works because one can check that for each of our rules, the cost weakly decreases as candidates are removed.
In our implementation, we group axes into sets of $(m-1)(m-2)/2$ axes (one for each position of the missing pair of candidates) and if we observe that the cost of their common reduced axis is higher than the current lowest cost, this means that no axis of this set will be optimal and that we can completely skip all of them.

Finally, we can initialize the current lowest cost axis with an axis expected to be good, such as one obtained by a greedy algorithm. For political datasets, we can use the axes adopted by the media. 

Combining all these strategies, we have never needed more than one hour to find optimal axes for profiles on up to 12 candidates. For less than 7 candidates, the result was always returned in less than one second. Our code is available at \url{https://github.com/TheoDlmz/AxisRules}.

\subsubsection{ILP Encoding} \label{sec:applinearsolver}

We also implemented ILP formulations of the Voter Deletion and Ballot Completion rules. In this section we briefly describe them.

First, we conduct the same pre-processing on the approval profile by merging the weights of identical ballots. Then, we create a binary variable $x_{a,b}$ such that $x_{a,b} = 1$ if and only if $a \axis b$ on the axis. Then, the formulation is different for VD and BC:
\begin{itemize}
    \item \textbf{VD:} For each voter $i$, we add a binary variable $y_i$ such that $y_i = 1$ if and only if the ballot $A_i$ is an interval of the axis. Then, for each voter $i$, each pair $(a,c) \in (A_i)^2$ of approved candidates, and each disapproved candidate $b$, we add the constraint
    \[ x_{a,b} + x_{b, c} \le 2 - y_i. \]
    Finally, the cost is equal to the sum of the $w_i \cdot (1-y_i)$ where $w_i$ is the weight of ballot $i$.
    \item \textbf{BC:} For each candidate $a \in C$, we introduce an integer variable $p_a \in [0,m-1]$ that encodes the position of the candidate on the axis ($p_a = \sum_b x_{b,a}$). Then, for each ballot $A_i$, we define two variables $M_i$ and $m_i$ respectively for the right-most and left-most position of candidates approved in $A_i$, which can be derived as the maximum and minimum values of the position variables. (This can be encoded using standard techniques or via Gurobi's ``general constraints''.) The BC cost of the ballot $A_i$ is then given by $M_i-m_i-|A_i|+1$. Finally, we sum this cost over all ballots (multiplied by the weights $w_i$) to obtain the overall cost.
\end{itemize}

For the precise implementation, refer to our code at \url{https://github.com/TheoDlmz/AxisRules}.

\subsection{Synthetic Data}
\label{sec:appSynthetic}

In this section, we present our experiments and results on synthetic data models.
As mentioned in \Cref{sec:preliminaries}, any scoring rule can be interpreted as the Maximum Likelihood Estimator of some appropriate noise model.
In each of our models, conditioned on a given \emph{ground truth axis} (which we will draw uniformly at random), the approval ballots will be sampled i.i.d.
The performance of the rules are likely to reflect simply how similar they are to the MLE of the models used. However, these experiments can give an idea of how well the rules can generalize to different models. In this section, we study four models, each inspired by one of our rules. Only the Maverick Voters model actually corresponds to the model of which VD is the MLE. For the other rules (MF, BC, and MS), the precise model for which they are MLEs are less natural than the intuitively similar models that we use here.

\begin{itemize}
    \item \textbf{Maverick Voters:} In this model, for each voter, we randomly decide if they are a ``maverick voter''. We throw a coin, and with probability $p \in [0,\frac12)$, they are a maverick and we sample an approval ballot at random (whether or not it is an interval of the axis). Otherwise we sample an approval ballot that is an interval of $\axis$ uniformly at random.
    \item \textbf{Random Flips:} In this model, we first sample for each voter an approval ballot that is an interval of the axis $\axis$ uniformly at random (among all interval ballots). Then for each candidate, we switch its status (from approved to non approved, or conversely) independently with probability $p \in  [0,\frac12)$ .
    \item \textbf{Random Omissions:} In this model,  we first sample for each voter an approval ballot that is an interval of the axis $\axis$ uniformly at random (among all interval ballots). Then for each \emph{approved} candidate, we switch its status (from approved to non approved) independently with probability $p \in  [0,\frac12)$.
    \item \textbf{Random Swaps:} In this model, for each voter, we sample an axis $\axis'$ using the Mallows model with center $\axis$ and dispersion parameter $\phi \in [0,1]$. As a reminder, the probability of $\axis'$ in this model is proportional $\phi^{KT(\axis,\axis')}$ where $KT$ is the Kendall-tau distance. Once $\axis'$ is sampled, we sample uniformly at random an approval ballot which is an interval of $\axis'$.
\end{itemize}

We do not have a noise model that corresponds intuitively to the FT rule.

For a given model and a given rule, we sample a profile according to the model and we compute the Kendall-tau distance between the axis returned by the rule and the ground truth axis. In case of a tie, we take the average KT over all returned axes. For all our experiments, we set $m=7$ candidates (for bigger $m$ the computation takes too long), $n=100$ voters, and we average over $1\,000$ random profiles.
\begin{figure}[!t]
    \centering
    \begin{subfigure}{.49\textwidth}
        \centering
        \includegraphics[width=1\textwidth]{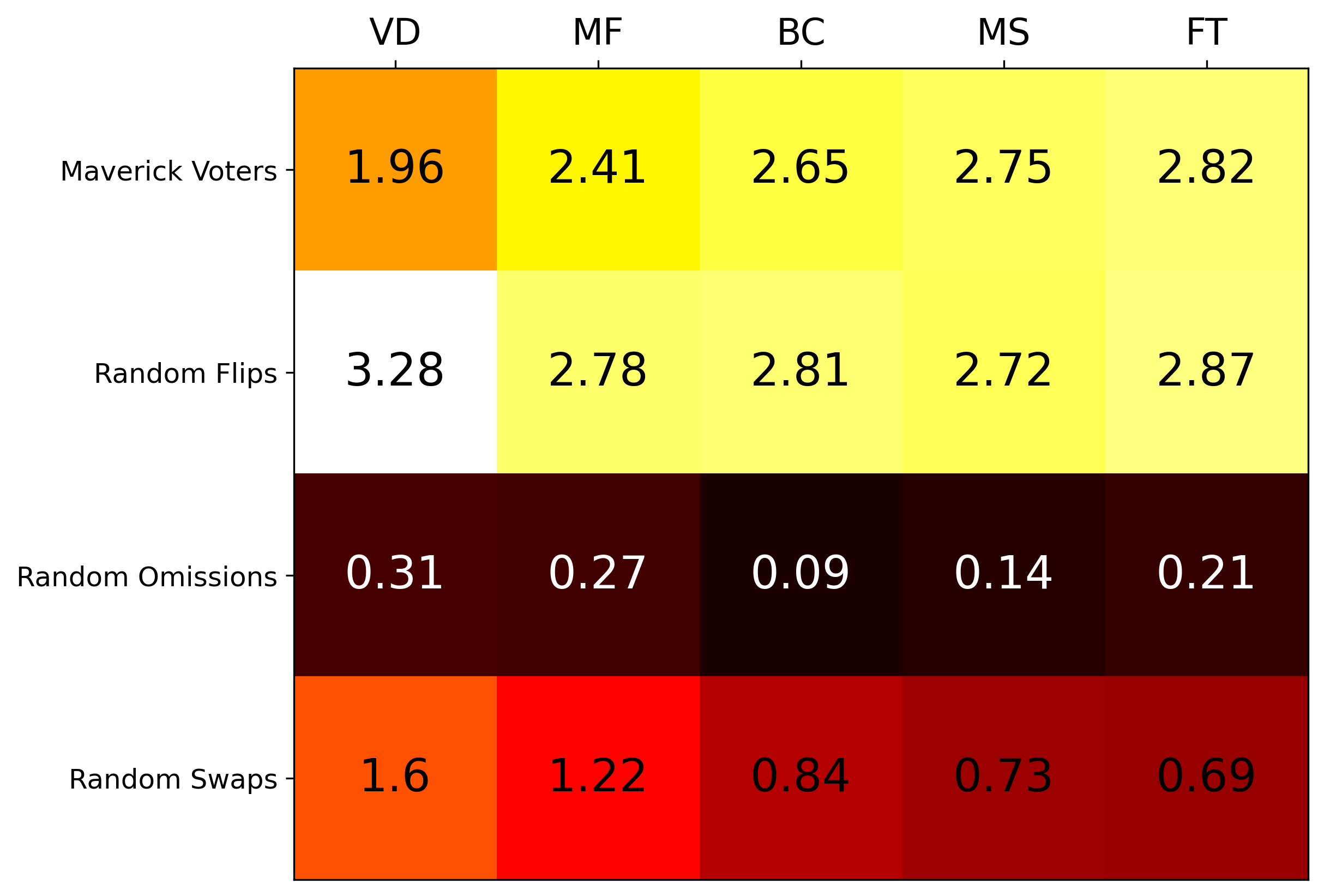}
        \caption{Not normalized.}
        \label{fig:appModels1}
    \end{subfigure}
    \begin{subfigure}{.49\textwidth}
        \centering
        \includegraphics[width=1\textwidth]{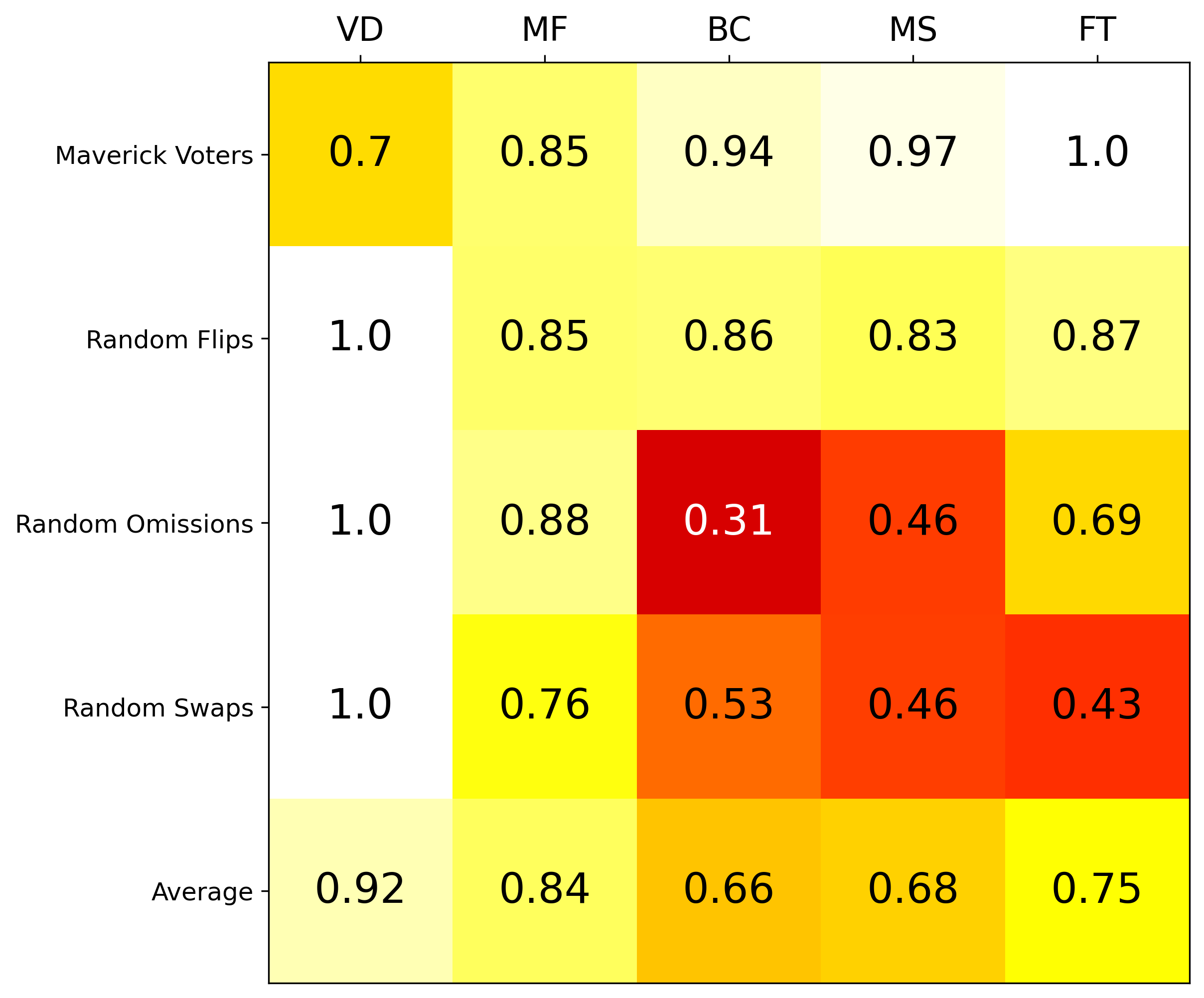}
        \caption{Normalized. }
        \label{fig:appModels2}
    \end{subfigure}
    \caption{Average Kendall-tau distance to the ground truth axes for different rules and models, averaged over 1\,000 profiles. The row labeled ``average'' in figure (b) shows the average KT distance induced by each rule across all 4 models.}
\end{figure}

We used the following parameters for the models: $p=0.2$ for Maverick Voters, $p=0.3$ for Random Flips, $p=0.45$ for Random Omissions, and $\phi=0.5$ for Random Swaps. These parameters were chosen to give roughly similar KT distances across models. \Cref{fig:appModels1} presents those KT distances. We normalized the distances such that the maximum is $1$ for each model. We also looked at the average performance of each rule across all models. The results are displayed in \Cref{fig:appModels2}. The main conclusion seems to be that no rule really generalizes to all models, but VD is particularly bad at generalizing beyond the Maverick Voters model.

\subsection{The French Presidential Election}
\label{sec:appFrench}

In this section, we present the results for the French presidential elections datasets. These datasets were gathered in parallel to the actual presidential elections of 2017 and 2022 and were part of the \emph{``Voter Autrement''} project.%
\footnote{See \url{https://www.gate.cnrs.fr/vote/}.}
During one month, anyone could answer an online survey, which was promoted on social networks and mailing lists. Participants were asked what would have been their vote for various voting methods, such as approval voting, score voting, Borda, instant runoff voting and the majority judgement rule. 

In our experiments, we only need approvals and rankings. 
For the approval preferences, no preprocessing is required, as we can simply use the approval votes of participants. 
For the ranking preferences, it is more complicated. Indeed, participants were allowed to rank only a subset of the candidates, for instance their four most favorite ones. However, in our experiments we need full rankings, so we removed all incomplete rankings from the datasets. Voters submitted these rankings in the context of instant runoff voting.

We added weights to the voters so that the sample is more representative. Indeed, the set of participants is heavily biased towards the left. Luckily, we know for each participant for which candidate they actually voted at the election (if they agreed to answer this question). Thus, we can adapt the weights of the participants based on this information, so that the distribution of opinions reflects the actual election result. For instance, participants who voted for the main candidate from the left are over-represented, so they get a weight smaller than 1, while participants who voted for the far-right candidates are under-represented, so they get a larger weight. Note that, obviously, this does not completely eliminate the bias.

As a benchmark, we used axes developed by the main polling institutes operating in France. They use these axes (1) when asking the participants which candidate they support and (2) when they present the results. We collected these axes from documents published online by the institutes.%
\footnote{See \url{https://en.wikipedia.org/w/index.php?title=Opinion_polling_for_the_2022_French_legislative_election&oldid=1207035923} and \url{https://en.wikipedia.org/w/index.php?title=Opinion_polling_for_the_2017_French_legislative_election&oldid=1059418478} for references.}
\Cref{tab:appFrenchPolls2017} in the main body (for 2017) and \Cref{tab:appFrenchPolls2022} (for 2022) show these axes. Note that the axes differ by polling institute. The main differences are (i) the positions of the ``small'' candidates, as these are hard to place since they often have no obvious classification as left-wing nor right-wing, and (ii) the positions of candidates inside an ideological subgroup (e.g., the far-left candidates or the far-right candidates).

\begin{table*}[t]
    \centering
    \makebox[\textwidth][c]{ %
    \scalebox{0.9}{\begin{tabular}{cc c c c c c c c c c c ccc} \toprule
        Rule & & & & & &$\axis$ & & & & & &  & \clap{Min KT}\qquad &Avg KT\\ \midrule
       VD  & $\party{PCF}$ &  $\party{LO}$ & $\party{NPA}$ & $\party{LFI}$ & $\party{EELV}$ & $\party{PS}$ & $\party{EM}$ & $\party{LR}$ & $\party{DLF}$ & $\party{REC}$ & $\party{RN}$ & $\party{R}$ & 4&5.62  \\
       MF  &   $\party{LO}$ & $\party{NPA}$ & $\party{LFI}$ &$\party{PCF}$ & $\party{PS}$ & $\party{EELV}$ & $\party{EM}$ & $\party{LR}$ & $\party{R}$ &  $\party{RN}$ &$\party{REC}$ &$\party{DLF}$ &   4&5.38  \\
       BC  &   $\party{LO}$ & $\party{NPA}$ &$\party{PCF}$ &  $\party{LFI}$ & $\party{EELV}$ &$\party{PS}$ & $\party{EM}$ & $\party{LR}$ & $\party{R}$ &  $\party{RN}$ &$\party{REC}$ &$\party{DLF}$ &   3&5.12  \\
       MS  &   $\party{LO}$ & $\party{NPA}$ &$\party{PCF}$ &  $\party{LFI}$ & $\party{PS}$ & $\party{EELV}$ &$\party{EM}$ & $\party{LR}$ & $\party{R}$ &  $\party{RN}$ &$\party{REC}$ &$\party{DLF}$ &   3&4.88 \\
       FT  &   $\party{LO}$ & $\party{NPA}$ & $\party{LFI}$ &$\party{PCF}$ &  $\party{PS}$ & $\party{EELV}$ &$\party{EM}$ & $\party{LR}$ & $\party{R}$ &  $\party{RN}$ &$\party{REC}$ &$\party{DLF}$ &  4&5.38  \\\midrule 
       
       VD-rank  &  $\party{DLF}$ & $\party{R}$ &$\party{PCF}$&   $\party{LO}$ & $\party{NPA}$ & $\party{LFI}$  & $\party{EELV}$ & $\party{PS}$ & $\party{EM}$ & $\party{LR}$ &  $\party{RN}$ &$\party{REC}$ & 18 & 20.62  \\
       FT-rank  &   $\party{LO}$ & $\party{NPA}$&$\party{PCF}$ & $\party{LFI}$ &  $\party{PS}$ & $\party{EELV}$ &$\party{EM}$ & $\party{LR}$ & $\party{R}$ &  $\party{RN}$  &$\party{DLF}$ &$\party{REC}$&   \textbf{2} & \textbf{3.88}      
       \\\bottomrule
    \end{tabular}}}
    \caption{Optimal axis of each rule for the 2022 French presidential election}
    \label{tab:appFrenchRes2022}
\end{table*}

\begin{table*}[t]
    \centering
    \scalebox{0.9}{\begin{tabular}{cc c c c c c c c c c c c} \toprule
        Institute & & & & & &$\axis$ & & & & & &  \\ \midrule
          BVA  &   $\party{LO}$ & $\party{NPA}$&$\party{LFI}$&$\party{PCF}$ &   $\party{PS}$& $\party{EELV}$  & $\party{EM}$& $\party{LR}$ &$\party{DLF}$ &  $\party{REC}$&$\party{RN}$  & $\party{R}$  \\ 
          Opinionway  &   $\party{LO}$ & $\party{NPA}$&$\party{PCF}$ & $\party{LFI}$&  $\party{PS}$& $\party{EELV}$  & $\party{EM}$& $\party{LR}$& $\party{R}$  &$\party{DLF}$ & $\party{REC}$ &$\party{RN}$  \\ 
          IFOP  &   $\party{LO}$ & $\party{NPA}$&$\party{PCF}$ & $\party{LFI}$&  $\party{PS}$& $\party{EELV}$  & $\party{EM}$& $\party{LR}$ &$\party{DLF}$ &  $\party{RN}$  &$\party{REC}$& $\party{R}$  \\ 
          IPSOS  &   $\party{NPA}$ & $\party{LO}$ & $\party{LFI}$ &$\party{PCF}$&$\party{EELV}$ &  $\party{PS}$ & $\party{EM}$ & $\party{LR}$ & $\party{R}$ &  $\party{RN}$ &$\party{DLF}$ &$\party{REC}$ \\ 
          Harris Interactive  &   $\party{LO}$ & $\party{NPA}$&$\party{PCF}$ & $\party{LFI}$&  $\party{PS}$& $\party{EELV}$  & $\party{EM}$& $\party{LR}$ &$\party{DLF}$ &  $\party{RN}$  &$\party{REC}$& $\party{R}$  \\ 
          Cluster17  &   $\party{LO}$ & $\party{NPA}$&$\party{PCF}$ & $\party{LFI}$& $\party{EELV}$ &  $\party{PS}$ & $\party{EM}$& $\party{R}$ & $\party{LR}$ &$\party{DLF}$ &  $\party{RN}$  &$\party{REC}$ \\ 
          Odoxa  &   $\party{LO}$ & $\party{NPA}$&$\party{PCF}$ & $\party{LFI}$& $\party{EELV}$ &  $\party{PS}$ & $\party{EM}$& $\party{R}$ & $\party{LR}$ &$\party{DLF}$ &  $\party{RN}$  &$\party{REC}$ \\ 
          Elabe  &   $\party{NPA}$&$\party{LO}$ & $\party{PCF}$ & $\party{LFI}$&  $\party{PS}$& $\party{EELV}$  & $\party{EM}$& $\party{LR}$ &$\party{DLF}$ &  $\party{RN}$  &$\party{REC}$& $\party{R}$  \\ \bottomrule
    \end{tabular}}
    \caption{Axes used by polling institutes for the 2022 French presidential election}
    \label{tab:appFrenchPolls2022}
\end{table*}

\Cref{tab:appFrenchRes2017} in the main body (for 2017) and \Cref{tab:appFrenchRes2022} (for 2022) show the axes returned by each of our rules (including ranking rules), their minimal KT distance to the axes of polling institutes (i.e., the distance to the closest of those axes), and their average KT distance to polling institutes. Note that the axes show the parties of the candidates (not the candidate names), and for the colors we followed the choices made by editors of \emph{Wikipedia}.%
\footnote{\href{https://fr.wikipedia.org/wiki/Modèle:Infobox_Parti_politique_français/couleurs}{https://fr.wikipedia.org/wiki/Mod\`ele:Infobox\_Parti\_politique\_\allowbreak fran\c{c}ais/couleurs}}

The axes returned by the different rules are very similar, and they are also close to the axes used by the institutes (except for the VD-rank rule). The differences mainly concern the positions of the less popular candidates (e.g., $\party{R}$) and the positions of the candidates inside each ideological subgroup (e.g., between $\party{PS}$ and $\party{EELV}$ for the 2022 election).

\subsection{Supreme Court of the United States}
\label{sec:appScotus}

We derived this dataset from the Supreme Court Database (\url{http://scdb.wustl.edu/}), which contains data for Supreme Court decisions starting in 1946. The Court consists of 9 justices who \emph{vote} on each case about which of the two parties to the case wins. The Court then publishes a \emph{majority opinion} explaining the Court's reasoning. Justices can also submit \emph{concurring opinions} and \emph{dissenting opinions}, and \emph{join} any of the opinions submitted by others. Concurring opinions explain additional or alternative reasons, written by justices who voted with the majority. Dissenting opinions explain why a justice did not vote with the majority.

The Martin-Quinn method for deriving an axis of justices uses only the binary vote data (i.e., whether a justice voted for or against the winning party), and its underlying model assumes that a decision divides the axis of justices in the middle, with all justices to one side of the cutoff voting the same way. One issue with this approach is that justices may vote for the same party but have different reasons for it. It could be for example that the most progressive and most conservative justices vote the same way, while the centrist justices vote the other way, for example due to procedural reasons. This is not well-captured by the model. In addition, the model does not use some relevant information. For example, if two justices very frequently join each other in their concurring or dissenting opinions, this suggests that these justices should be placed near each other on the axis.

In our experiments, we discarded all terms with more than 9 justices (e.g., if one is replaced mid-term), giving us 65 terms and thus 65 profiles of approval ballots. We compared our rules to the axes obtained by the established Martin-Quinn method, by computing the KT distance between the axes.

Figures \ref{fig:SCOTUS_MQ} to \ref{fig:SCOTUS_FT} show the evolution of the positions of the justices on the axes for the last 20 terms, according to the axes produced by the Martin-Quinn method and by our rules. It is very clear that the Martin-Quinn method is smoother over time, which is by the rule's design, since it takes the justice positions of the last term as a prior for their positions in the next term. Our rules are less stable. 

\begin{figure*}
    \centering
    \includegraphics[width=0.8\textwidth]{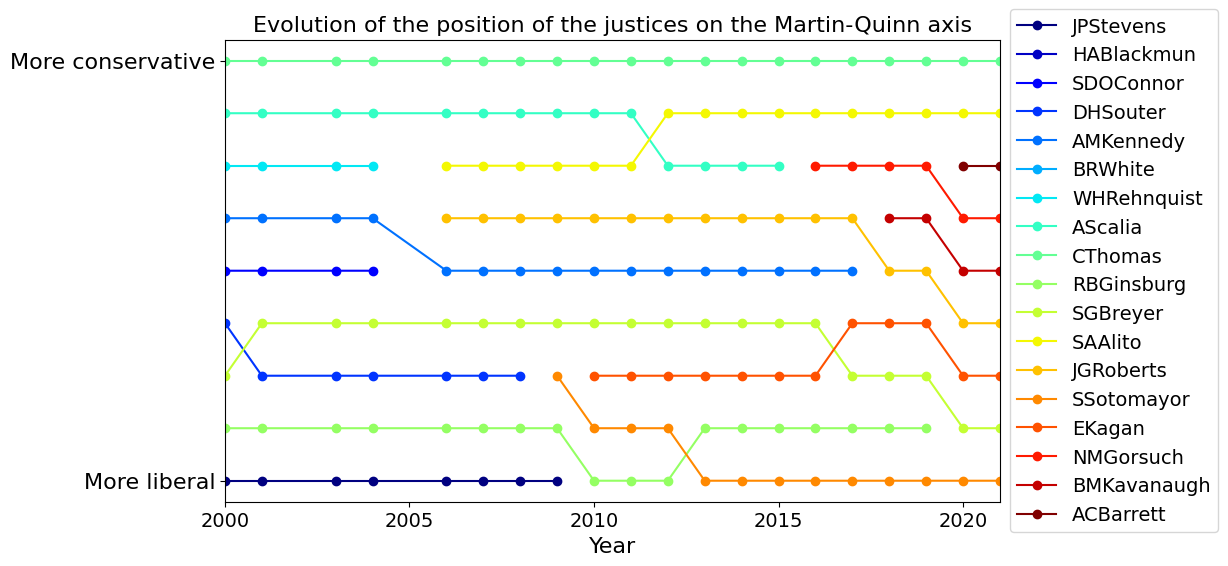}
    \caption{Positions of the justices for terms between 2000 and 2021 for the MQ method.}
    \label{fig:SCOTUS_MQ}
\end{figure*}

\begin{figure*}
    \centering
    \includegraphics[width=0.8\textwidth]{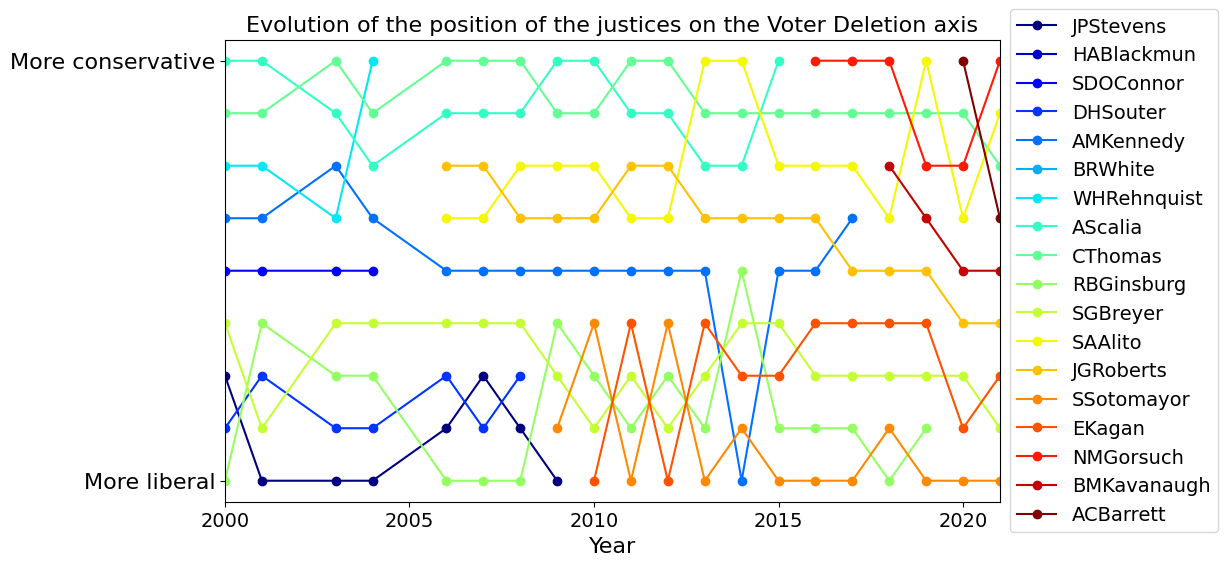}
    \caption{Positions of the justices for terms between 2000 and 2021 for the VD rule.}
    \label{fig:SCOTUS_VD}
\end{figure*}

\begin{figure*}
    \centering
    \includegraphics[width=0.8\textwidth]{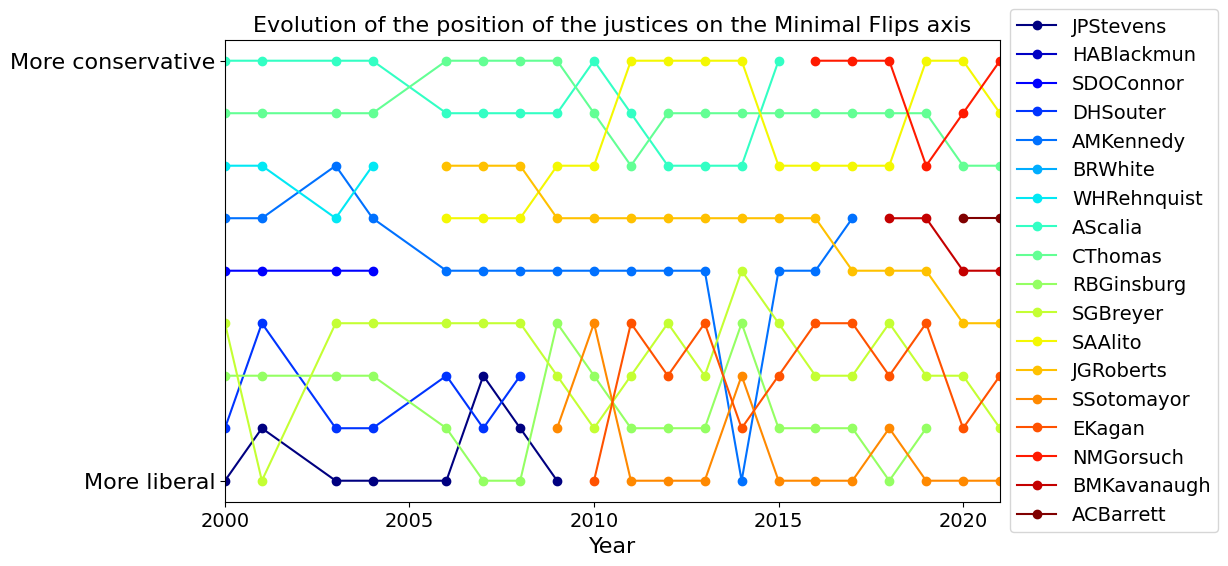}
    \caption{Positions of the justices for terms between 2000 and 2021 for the MF rule.}
    \label{fig:SCOTUS_MF}
\end{figure*}

\begin{figure*}
    \centering
    \includegraphics[width=0.8\textwidth]{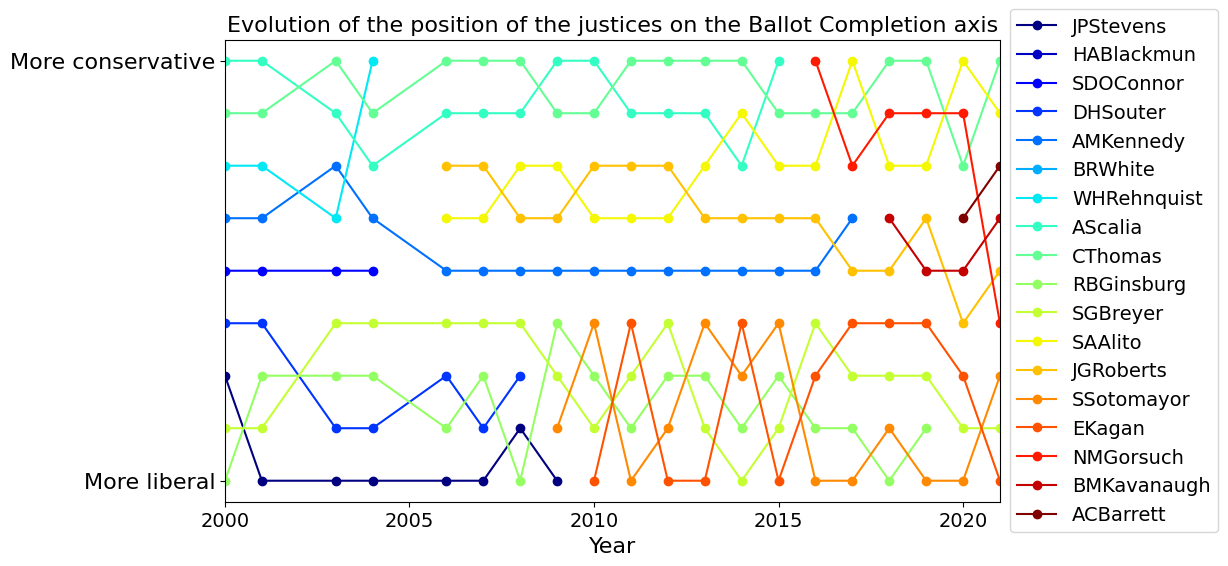}
    \caption{Positions of the justices for terms between 2000 and 2021 for the BC rule.}
    \label{fig:SCOTUS_BC}
\end{figure*}

\begin{figure*}
    \centering
    \includegraphics[width=0.8\textwidth]{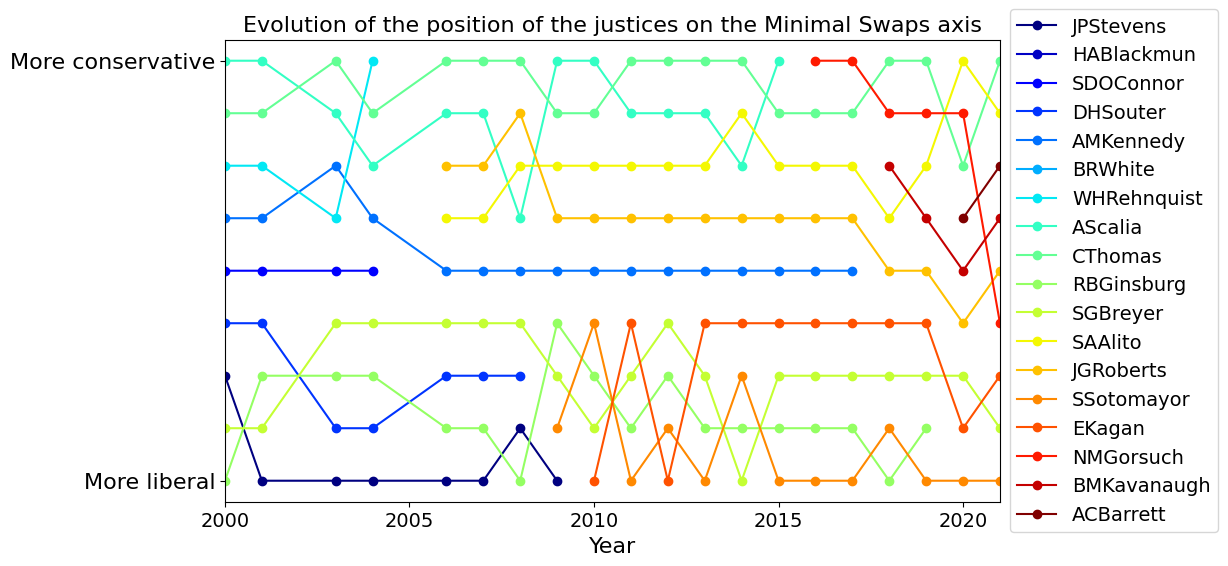}
    \caption{Positions of the justices for terms between 2000 and 2021 for the MS rule.}
    \label{fig:SCOTUS_MS}
\end{figure*}

\begin{figure*}
    \centering
    \includegraphics[width=0.8\textwidth]{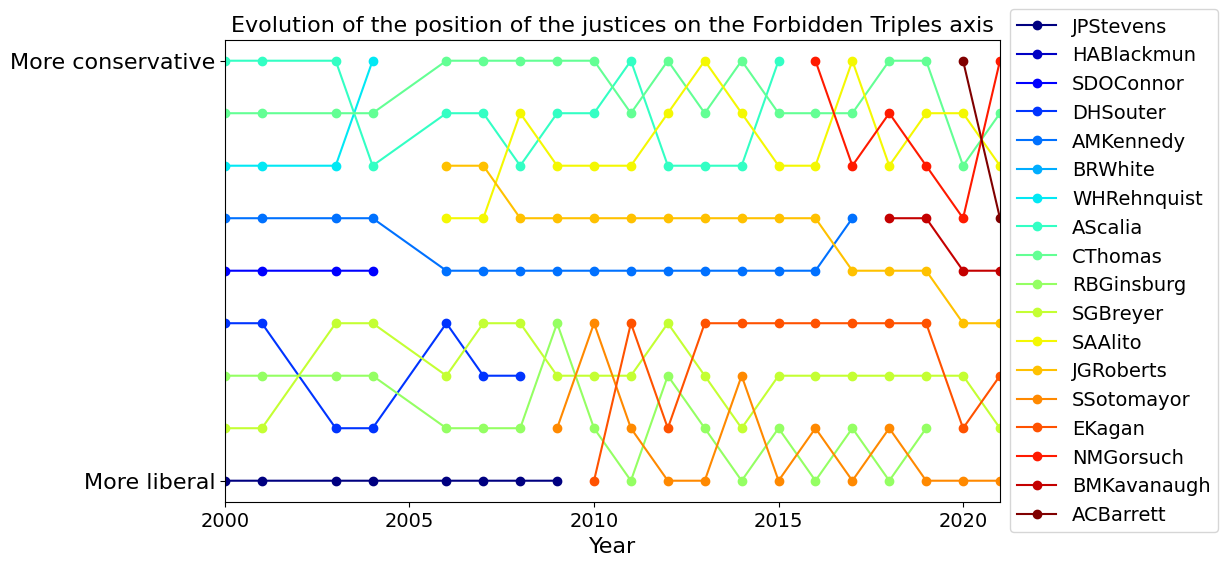}
    \caption{Positions of the justices for terms between 2000 and 2021 for the FT rule.}
    \label{fig:SCOTUS_FT}
\end{figure*}

\end{document}